\newtheorem{lemma}{Lemma}[section]
\newtheorem{theorem}{Theorem}[section]
\newtheorem{remark}{Remark}[section]
\begin{document}
\begin{center}
\textbf{\LARGE{Nested Inequalities Among Divergence Measures} }
\end{center}

\bigskip
\begin{center}
\textbf{\large{Inder J. Taneja}}\\
Departamento de Matem\'{a}tica\\
Universidade Federal de Santa Catarina\\
88.040-900 Florian\'{o}polis, SC, Brazil.\\
\textit{e-mail: ijtaneja@gmail.com\\
http://www.mtm.ufsc.br/$\sim $taneja}
\end{center}

\begin{abstract}
In this paper we have considered an inequality having 11 divergence measures. Out of them three are logarithmic such as Jeffryes-Kullback-Leiber \cite{jef} \cite{kul} \textit{J-divergence}. Burbea-Rao \cite{bur} \textit{Jensen-Shannon divergence }and Taneja \cite{tan1} \textit{arithmetic-geometric mean divergence}. The other three are non-logarithmic such as \textit{Hellinger discrimination}, \textit{symmetric }$\chi ^2 - $\textit{divergence}, and \textit{triangular discrimination}. Three more are considered are due to mean divergences. Pranesh and Johnson \cite{kuj} and Jain and Srivastava \cite{jas} studied different kind of divergence measures. We have considered measures arising due to differences of single inequality having 11 divergence measures in terms of a sequence. Based on these differences we have obtained many inequalities. These inequalities are kept as nested or sequential forms. Some reverse inequalities and equivalent versions are also studied.\\\\
\textbf{Key words:} \textit{J-divergence; Jensen-Shannon divergence; Arithmetic-Geometric divergence; Mean divergence measures; Information inequalities.}\\\\
\textbf{AMS Classification:} 94A17; 26A48; 26D07.\\\\
\end{abstract}

\small
\begin{multicols}{2}

\section{Introduction}

Let
\[
\Gamma _n = \left\{ {P = (p_1 ,p_2 ,...,p_n )\left| {p_i > 0,\sum\limits_{i
= 1}^n {p_i = 1} } \right.} \right\},n \ge 2,
\]

\noindent
be the set of all complete finite discrete probability distributions. Let us
consider the two groups of divergence measures:

\bigskip
\noindent
\textbf{$\bullet$ Logarithmic divergence measures}
\begin{align}
I(P\vert \vert Q)&  = \frac{1}{2}\left[ {\sum\limits_{i = 1}^n {p_i \ln \left(
{\frac{2p_i }{p_i + q_i }} \right) + } \sum\limits_{i = 1}^n {q_i \ln \left(
{\frac{2q_i }{p_i + q_i }} \right)} } \right],\notag\\
J(P\vert \vert Q) & = \sum\limits_{i = 1}^n {(p_i - q_i )\ln (\frac{p_i }{q_i
})}\notag
\intertext{and}
T(P\vert \vert Q) & = \sum\limits_{i = 1}^n {\left( {\frac{p_i + q_i }{2}}
\right)\ln \left( {\frac{p_i + q_i }{2\sqrt {p_i q_i } }} \right)}.\notag
\end{align}

\noindent
\textbf{$\bullet$ Non-logarithmic divergence measures}
\begin{align}
\Delta (P\vert \vert Q) & = \sum\limits_{i = 1}^n {\frac{(p_i - q_i )^2}{p_i +
q_i }},\notag\\
h(P\vert \vert Q) & = \frac{1}{2}\sum\limits_{i = 1}^n {(\sqrt {p_i } - \sqrt
{q_i } )^2}\notag
\intertext{and}
\Psi (P\vert \vert Q) & = \sum\limits_{i = 1}^n {\frac{(p_i - q_i )^2(p_i +
q_i )}{p_i q_i }}\notag
\end{align}

The logarithmic measures $I(P\vert \vert Q)$, $J(P\vert \vert Q)$ and $T(P\vert \vert Q)$ are three classical divergence measures known in the literature on information theory and statistics are \textit{Jensen-Shannon divergence}, \textit{J-divergence} and \textit{Arithmetic-Geometric mean divergence }respectively \cite{tan1} \cite{tan3}. The non-logarithmic measures $\Delta (P\vert \vert Q)$, $h(P\vert \vert Q)$ and $\Psi (P\vert \vert Q)$ are respectively known as \textit{triangular discrimination}, \textit{Hellingar's divergence} \textit{and symmetric chi-square divergence}. In 2005, the author \cite{tan4} proved the following inequality among these six symmetric divergence measures:

\begin{equation}
\label{eq1}
\frac{1}{4}\Delta \le I \le h \le \frac{1}{8}J \le T \le \frac{1}{16}\Psi.
\end{equation}

The above inequality (\ref{eq1}) admits many nonnegative differences among the
divergence measures. Based on these non-negative differences, the author \cite{tan4}
proved the following result:
\begin{align}
D_{I\Delta } & \le \frac{2}{3}D_{h\Delta } \le \left\{ {\begin{array}{l}
 2D_{hI} \\
 \textstyle{1 \over 2}D_{J\Delta } \le \textstyle{1 \over 3}D_{T\Delta } \\
 \end{array}} \right\} \le D_{TJ} \le\notag\\
& \le \textstyle{2 \over 3}D_{Th} \le
2D_{Jh} \le  \textstyle{1 \over 6}D_{\Psi \Delta }\le\notag\\
\label{eq2}
&  \le \textstyle{1 \over 5}D_{\Psi
I} \le \textstyle{2 \over 9}D_{\Psi h} \le \textstyle{1 \over 4}D_{\Psi J}
\le \textstyle{1 \over 3}D_{\Psi T},
\end{align}

\noindent
where, for example $D_{I\Delta } : = I - \textstyle{1 \over
4}\Delta $, $D_{TJ} : = T - \textstyle{1 \over 8}J$,
$D_{F\Psi } : = \textstyle{1 \over {16}}F - \textstyle{1 \over 8}\Psi $, etc.
The proof of the inequalities (\ref{eq2}) is based on the following two lemmas:

\begin{lemma} If the function $f:[0,\infty ) \to {\rm R}$ is convex
and normalized, i.e., $f(1) = 0$, then the \textit{f-divergence}, $C_f (P\vert \vert Q)$ given by
\begin{equation}
\label{eq3}
C_f (P\vert \vert Q) =
\sum\limits_{i = 1}^n {q_i f\left( {\frac{p_i }{q_i }} \right)} ,
\end{equation}

\noindent
is nonnegative and convex in the pair of probability distribution $(P,Q) \in
\Gamma _n \times \Gamma _n $.
\end{lemma}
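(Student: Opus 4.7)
The statement combines two assertions—nonnegativity of $C_f$ and joint convexity of $C_f$—and I will handle them separately.

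For nonnegativity, my plan is a one-line application of the discrete Jensen inequality. Since $f$ is convex on $[0,\infty)$ and $Q=(q_i)\in\Gamma_n$ is a probability vector, Jensen gives
\[
C_f(P\|Q) \;=\; \sum_{i=1}^n q_i\, f\!\left(\frac{p_i}{q_i}\right) \;\ge\; f\!\left(\sum_{i=1}^n q_i\cdot\frac{p_i}{q_i}\right) \;=\; f\!\left(\sum_{i=1}^n p_i\right) \;=\; f(1) \;=\; 0,
\]
where the final equality invokes the normalization $f(1)=0$.

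For joint convexity in $(P,Q)\in\Gamma_n\times\Gamma_n$, the standard device is the \emph{perspective function} $g(x,y):=y\,f(x/y)$ on $(0,\infty)\times(0,\infty)$. The key step is to show that $g$ is convex whenever $f$ is. I would do this by fixing $\lambda\in[0,1]$ and points $(x_j,y_j)$ for $j=1,2$, setting $y:=\lambda y_1+(1-\lambda)y_2$, and observing the convex-combination identity
\[
\frac{\lambda x_1+(1-\lambda)x_2}{y} \;=\; \alpha\cdot\frac{x_1}{y_1} + (1-\alpha)\cdot\frac{x_2}{y_2}, \qquad \alpha:=\frac{\lambda y_1}{y}.
\]
Applying convexity of $f$ to the right-hand side and multiplying through by $y$ yields the perspective convexity inequality $g\bigl(\lambda(x_1,y_1)+(1-\lambda)(x_2,y_2)\bigr)\le \lambda g(x_1,y_1)+(1-\lambda)g(x_2,y_2)$.

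With convexity of $g$ in hand, the joint convexity of $C_f$ follows immediately: $C_f(P\|Q)=\sum_{i=1}^n g(p_i,q_i)$ is a finite sum of functions, each jointly convex in its coordinate pair $(p_i,q_i)$, and the map $(P,Q)\mapsto(p_i,q_i)$ is linear, so the composition is jointly convex in $(P,Q)$. The only mildly technical ingredient is the perspective-function computation above; nonnegativity is just Jensen plus normalization, and the sum-of-convex-functions argument is routine, so I do not expect any substantive obstacle.
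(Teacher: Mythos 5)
Your proof is correct: nonnegativity via Jensen's inequality applied to the probability weights $q_i$ together with $f(1)=0$, and joint convexity via convexity of the perspective function $g(x,y)=y\,f(x/y)$, is exactly the standard argument for Csisz\'ar's lemma. The paper itself offers no proof of this statement — it simply attributes the result to Csisz\'ar — so there is nothing to compare against beyond noting that your route is the classical one and is complete.
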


\begin{lemma} Let $f_1 ,f_2 :I \subset {\rm R}_ + \to {\rm R}$ two
generating mappings are normalized, i.e., $f_1 (1) = f_2 (1) = 0$ and
satisfy the assumptions:

\noindent (i) $f_1 $ and $f_2 $ are twice differentiable on $(a,b)$;

\noindent (ii) there exists the real constants $\alpha ,\beta $such that $\alpha <
\beta$ and
\[
\alpha \le \frac{f_1 ^{\prime \prime }(x)}{f_2 ^{\prime \prime }(x)} \le
\beta, \, f_2 ^{\prime \prime }(x) > 0, \, \forall x \in (a,b),
\]

\noindent
then we have the inequalities:
\[
\alpha \mbox{ }C_{f_2 } (P\vert \vert Q) \le C_{f_1 } (P\vert \vert Q) \le
\beta \mbox{ }C_{f_2 } (P\vert \vert Q).
\]
\end{lemma}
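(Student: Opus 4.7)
The plan is to reduce the claim to an application of Lemma~1.1 by forming two auxiliary generating functions whose convexity captures the two one-sided bounds. Define
\[
g(x) := f_1(x) - \alpha f_2(x), \qquad h(x) := \beta f_2(x) - f_1(x).
\]
First I would verify the hypotheses of Lemma~1.1 for each of these. Normalization is immediate: $g(1) = f_1(1) - \alpha f_2(1) = 0$ and $h(1) = \beta f_2(1) - f_1(1) = 0$. Convexity follows from the second-derivative hypothesis: since $f_2''(x) > 0$ on $(a,b)$, the assumption $\alpha \le f_1''(x)/f_2''(x) \le \beta$ rearranges to
\[
g''(x) = f_1''(x) - \alpha f_2''(x) \ge 0, \qquad h''(x) = \beta f_2''(x) - f_1''(x) \ge 0,
\]
so both $g$ and $h$ are convex on $(a,b)$.

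Next I would apply Lemma~1.1 to $g$ and $h$ separately. This yields $C_g(P\|Q) \ge 0$ and $C_h(P\|Q) \ge 0$ for all $(P,Q) \in \Gamma_n \times \Gamma_n$. By the linearity of the map $f \mapsto C_f(P\|Q)$ in $f$ (clear from the defining sum \eqref{eq3}), we have
\[
C_g(P\|Q) = C_{f_1}(P\|Q) - \alpha\, C_{f_2}(P\|Q), \quad C_h(P\|Q) = \beta\, C_{f_2}(P\|Q) - C_{f_1}(P\|Q).
\]
The nonnegativity of these two quantities is exactly the pair of inequalities $\alpha\, C_{f_2}(P\|Q) \le C_{f_1}(P\|Q) \le \beta\, C_{f_2}(P\|Q)$.

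There is no real obstacle here: the argument is entirely structural, and the only point worth flagging is the (tacit) assumption that the ratios $p_i/q_i$ lie in the interval $(a,b)$ on which the second-derivative bound holds, so that the convexity of $g$ and $h$ translates into the nonnegativity of $C_g$ and $C_h$ via Lemma~1.1. Provided $I$ contains the range of all ratios $p_i/q_i$ for $(P,Q) \in \Gamma_n \times \Gamma_n$ under consideration, the proof is complete.
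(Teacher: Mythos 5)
Your proof is correct, and it is essentially the standard argument for this result: the paper itself does not reproduce a proof (it cites Lemma~1.2 to \cite{tan3}), but the argument given there is exactly your reduction via the auxiliary normalized convex functions $f_1 - \alpha f_2$ and $\beta f_2 - f_1$, Csisz\'{a}r's nonnegativity (Lemma~1.1), and the linearity of $f \mapsto C_f(P\vert\vert Q)$. Your closing caveat about the ratios $p_i/q_i$ needing to lie in the interval where the second-derivative bound holds is a genuine (and tacitly assumed) hypothesis in the paper's statement, so it is right to flag it.
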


The Lemma 1.1 is due to Csisz\'{a}r \cite{csi} and the Lemma 1.2 is due to
author \cite{tan3}. The aim of this paper is to consider more measures in
(\ref{eq1}) and improve the inequalities given in (\ref{eq2}). These measures are based
on the some well-known mean divergences.

\subsection{Mean Divergence Measures}

Author \cite{tan2} studied the following inequality
\begin{equation}
\label{eq4}
G(P\vert \vert Q) \le N_1 (P\vert \vert Q) \le N_2 (P\vert \vert Q) \le
A(P\vert \vert Q),
\end{equation}

\noindent where
\begin{align}
& G(P\vert \vert Q) = \sum\limits_{i = 1}^n {\sqrt {p_i q_i } },\notag\\
& N_1 (P\vert \vert Q) = \sum\limits_{i = 1}^n {\left(\frac{\sqrt {p_i } + \sqrt {q_i } }{2}
\right)}^2,\notag\\
& N_2 (P\vert \vert Q) = \sum\limits_{i = 1}^n {\sqrt {\frac{p_i + q_i }{2}}
\left( {\frac{\sqrt {p_i } + \sqrt {q_i } }{2}} \right)}\notag
\intertext{and}
& A(P\vert \vert Q) = \sum\limits_{i = 1}^n {\frac{p_i + q_i }{2}} = 1.\notag
\end{align}

The above inequality admits non-negative differences given by
\begin{align}
M_1 & (P\vert \vert Q) = D_{N_2 N_1 } (P\vert \vert Q)=\notag\\
& = \sum\limits_{i = 1}^n
{\left( {\sqrt {\frac{p_i + q_i }{2}} \left( {\frac{\sqrt {p_i } + \sqrt
{q_i } }{2}} \right) - \left( {\frac{\sqrt {p_i } + \sqrt {q_i } }{2}}
\right)^2} \right)},\notag\\
M_2 & (P\vert \vert Q) = D_{N_2 G} (P\vert \vert Q)=\notag\\
& = \sum\limits_{i = 1}^n
{\left[ {\left( {\frac{\sqrt {p_i } + \sqrt {q_i } }{2}} \right)\sqrt
{\frac{p_i + q_i }{2}} - \sqrt {p_i q_i } } \right]},\notag\\
M_3 & (P\vert \vert Q) = D_{AN_2 } (P\vert \vert Q)=\notag\\
& = \sum\limits_{i = 1}^n
{\left[ {\left( {\frac{p_i + q_i }{2}} \right) - \left( {\frac{\sqrt {p_i }
+ \sqrt {q_i } }{2}} \right)\left( {\sqrt {\frac{p_i + q_i }{2}} } \right)}
\right]}\notag
\intertext{and}
h & (P\vert \vert Q) = 2 D_{AN_1 } (P\vert \vert Q)=D_{AG} (P\vert \vert Q)
= D_{N_1 G} (P\vert \vert Q).\notag
\end{align}

\subsection{New Measures}

Jain and Srivastava \cite{jas} and Kumar and Johnson \cite{kuj} respectively studied the
measures
\[
K_0 (P\vert \vert Q) = \sum\limits_{i = 1}^n {\frac{(p_i - q_i )^2}{\sqrt
{p_i q_i } }}
\]
\noindent{and}
\[
F(P\vert \vert Q) = \frac{1}{2}\sum\limits_{i = 1}^n {\frac{(p_i^2 - q_i^2
)^2}{\sqrt {(p_i q_i )^3} }}.
\]

In total we have 11 divergence measures. By the application of Lemmas 1
and 2 we can put them in a single inequality as
\[
\textstyle{1 \over 4}\Delta  \le I \le 4M_1
 \le \textstyle{4 \over 3}M_2  \le h \le 4\,M_3 \le
\]
\begin{equation}
\label{eq5}
\hspace{10pt} \le \textstyle{1 \over 8}J \le T \le
\textstyle{1 \over 8}K_0  \le \textstyle{1 \over {16}}\Psi
 \le \textstyle{1 \over {16}}F.
\end{equation}

\subsection{Pyramid}

The 11 measures appearing in the inequalities (\ref{eq5}) admits 55 non-negative
differences. These 55 non-negative difference satisfies some
obvious inequalities given below in the form of \textbf{pyramid} or \textbf{triangular}:

\begin{enumerate}
\item $D_{I\Delta }^1$;\\
\item $D_{M_1 I}^2 \le D_{M_1 \Delta }^3$;\\
\item $D_{M_2 M_1 }^4 \le D_{M_2 I}^5 \le D_{M_2 \Delta }^6$;\\
\item $D_{hM_2 }^7 \le D_{hM_1 }^8 \le D_{hI}^9 \le D_{h\Delta }^{10}$;\\
\item $D_{M_3 h}^{11} \le D_{M_3 M_2 }^{12} \le D_{M_3 M_1 }^{13} \le D_{M_3
I}^{14} \le D_{M_3 \Delta }^{15}$;\\
\item $D_{JM_3 }^{16} \le D_{Jh}^{17} \le D_{JM_2 }^{18} \le D_{JM_1 }^{19} \le
D_{JI}^{20} \le D_{J\Delta }^{21}$;\\
\item $D_{TJ}^{22} \le D_{TM_3 }^{23} \le D_{Th}^{24} \le D_{TM_2 }^{25} \le
D_{TM_1 }^{26} \le D_{TI}^{27} \le \\
\le D_{T\Delta }^{28}$;\\
\item $D_{K_0 T}^{29} \le D_{K_0 J}^{30} \le D_{K_0 M_3 }^{31} \le D_{K_0 h}^{32}
\le D_{K_0 M_2 }^{33} \le  \\ \le D_{K_0 M_1 }^{34} \le D_{K_0 I}^{35}
\le D_{K_0 \Delta }^{36}$;\\
\item $D_{\Psi K_0 }^{37} \le D_{\Psi T}^{38} \le D_{\Psi J}^{39} \le D_{\Psi M_3
}^{40} \le D_{\Psi h}^{41} \le D_{\Psi M_2 }^{42} \le \\ \le D_{\Psi M_1 }^{43} \le D_{\Psi I}^{44}
\le D_{\Psi \Delta }^{45}$;\\
\item $D_{F\Psi }^{46} \le D_{FK_0 }^{47} \le D_{FT}^{48} \le D_{FJ}^{49} \le D_{FM_3 }^{50} \le D_{Fh}^{51} \le\\
\le D_{FM_2 }^{52} \le D_{FM_1 }^{53}
    \le D_{FI}^{54} \le D_{F\Delta }^{55}$.\\
\end{enumerate}

\bigskip
The following equalities hold:
\begin{align}
D_{hM_1 }^8 & = 3D_{hM_2 }^7 = \textstyle{3 \over 2}D_{M_2 M_1 }^4 =\notag\\
& =D_{M_3 h}^{11} =
\textstyle{1 \over 2}D_{M_3 M_1 }^{13} = \textstyle{3 \over 4}D_{M_3 M_2 }^{12}\notag
\intertext{and}
D_{TJ}^{22} & = \textstyle{1 \over 2}D_{TI}^{27} = D_{JI}^{20}.\notag
\end{align}

In view of Lemmas 1 and 2, the measures appearing in the above pyramid
are convex in a pair of probability distributions and can be written as
\begin{equation}
\label{eq6}
D_{AB} : = \sum\limits_{i = 1}^n {q_i } f_{AB} \left( {\frac{p_i }{q_i }}
\right),
\end{equation}

\noindent
where $f_{AB} (x) = f_A (x) - f_B (x)$, $A \ge B$ with the property that
${f}''_{AB} (x) \ge 0$, $\forall x > 0$.

\bigskip
In this paper our aim is to extend the results given by (\ref{eq2}) by taking all possible nonnegative differences given in the above \textbf{pyramid}. These inequalities we have put in nested or sequential form.

\section{Nested Inequalities}

In this section, we shall try to put the measures appearing in above pyramid
in terms of nested or sequential form. This we have done in a theorem below.

\begin{theorem} The following inequalities hold:\\
\[
D_{I\Delta }^{1} \le \textstyle{8 \over 9}D_{M_{1} \Delta }^{3} \le
\textstyle{8 \over {11}}D_{M_{2} \Delta }^{6} \le \textstyle{2 \over
3}D_{h\Delta }^{10} \le \textstyle{8 \over {15}}D_{M_{3} \Delta }^{15} \le
\]
\[
\le \left\{ {\begin{array}{l}
 \left\{ {\begin{array}{l}
 \textstyle{1 \over 2}D_{J\Delta }^{21} \\\\
 \textstyle{8 \over 3}D_{hM_{1} }^{8} \\
 \end{array}} \right\}\le \textstyle{1 \over 3}D_{T\Delta }^{28} \\\\
 \textstyle{8 \over 3}D_{hM_{1} }^{8} \le \textstyle{8 \over 7}D_{M_{3}
I}^{14} \le 2D_{hI}^{9} \\
 \end{array}} \right\}\le \textstyle{8 \over 3}D_{M_{2} I}^{5} \le
\]
\[
\le \left\{ {\begin{array}{l}
 \textstyle{1 \over 3}D_{K_{0} \Delta }^{36} \\\\
 \left\{ {\left\{ {\begin{array}{l}
 D_{TJ}^{22} \\\\
 \textstyle{8 \over {15}}D_{TM_{1} }^{26} \\\\
 \textstyle{8 \over 7}D_{JM_{1} }^{19} \\
 \end{array}} \right\}\le \textstyle{8 \over {13}}D_{TM_{2} }^{25} \le
\textstyle{2 \over 3}D_{Th}^{24} } \right\} \\\\
 8D_{M_{1} I}^{2} \\
 \end{array}} \right.\le
\]
\[
\left. {\begin{array}{l}
 \\
 \\
 \le \left\{ {\begin{array}{l}
 \le \textstyle{2 \over 5}D_{JM_{2} }^{18} \le 2D_{Jh}^{17} \\\\
 \textstyle{8 \over 9}D_{TM_{3} }^{23} \\
 \end{array}} \right. \\
 \\
 \\
 \end{array}} \right\} \le \textstyle{1 \over 2}D_{K_{0} I}^{35} \le
\textstyle{8 \over {15}}D_{K_{0} M_{1} }^{34} \le
\]
\[
\le \textstyle{8 \over {13}}D_{K_{0} M_{2} }^{33} \le \textstyle{2 \over
3}D_{K_{0} h}^{32} \le \left\{ {\begin{array}{l}
 \left\{ {\begin{array}{l}
 D_{K_{0} J}^{30} \\\\
 \textstyle{1 \over 6}D_{\Psi \Delta }^{45} \\
 \end{array}} \right\}\le \frac{1}{5}D_{\Psi I}^{44} \\\\
 \textstyle{8 \over 9}D_{K_{0} M_{3} }^{31} \\
 \end{array}} \right\}\le
\]
\[
\le \textstyle{8 \over {39}}D_{\Psi M_{1} }^{43} \le \textstyle{8 \over
{37}}D_{\Psi M_{2} }^{42} \le \textstyle{2 \over 9}D_{\Psi h}^{41} \le
\left\{ {\begin{array}{l}
 \textstyle{1 \over 4}D_{\Psi J}^{39} \\\\
 \textstyle{8 \over {33}}D_{\Psi M_{3} }^{40} \\
 \end{array}} \right\}\le
\]
\[
\le \textstyle{1 \over 3}D_{\Psi K_{0} }^{37} \le \left\{ {\begin{array}{l}
 \textstyle{1 \over 3}D_{\Psi T}^{38} \\\\
 \textstyle{1 \over 9}D_{F\Delta }^{55} \\
 \end{array}} \right\}\le \textstyle{1 \over 8}D_{FI}^{54} \le \textstyle{8
\over {63}}D_{FM_{1} }^{53} \le
\]
\[
\le \textstyle{8 \over {61}}D_{FM_{2} }^{52} \le \textstyle{2 \over
{15}}D_{Fh}^{51} \le \left\{ {\begin{array}{l}
 \textstyle{1 \over 7}D_{FJ}^{49} \\\\
 \textstyle{8 \over {57}}D_{FM_{3} }^{50} \\
 \end{array}} \right\}\le
\]
\begin{equation}
\label{eq7}
\le \textstyle{1 \over 6}D_{FK_{0} }^{47} \le \textstyle{1 \over
6}D_{FT}^{48} \le \textstyle{1 \over 3}D_{F\Psi }^{46}
\end{equation}
and
\begin{equation}
\label{eq8}
\left\{ {\begin{array}{l}
 \textstyle{1 \over 4}D_{Jh}^{17} \\\\
 D_{M_{1} I}^{2} \\\\
 \textstyle{1 \over 9}D_{TM_{3} }^{23} \\
 \end{array}} \right\}\le D_{JM_{3} }^{16} \le \textstyle{1 \over
{24}}D_{F\Psi }^{46}.
\end{equation}
\end{theorem}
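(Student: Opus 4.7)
The strategy is to recognize every comparison appearing in (\ref{eq7}) and (\ref{eq8}) as a single application of Lemma 1.2, and then verify the hypothesis by a second--derivative ratio bound. By Lemma 1.1 together with (\ref{eq6}), each difference $D_{AB}^{k}$ is itself an $f$--divergence $C_{f_{AB}}$ whose generator $f_{AB}(x) = f_{A}(x) - f_{B}(x)$, built with the normalizing coefficients dictated by the master chain (\ref{eq5}), is convex and vanishes at $x = 1$. Hence, to prove a comparison of the form $D_{AB}^{i} \le c\,D_{CD}^{j}$ it suffices, by Lemma 1.2, to check that
\[
\frac{f_{AB}''(x)}{f_{CD}''(x)} \le c \qquad \text{for every } x > 0.
\]

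First I would tabulate the second derivatives of the eleven primitive generators. The purely rational or radical ones are quick:
\[
f_{\Delta}''(x) = \tfrac{8}{(x+1)^{3}}, \quad f_{I}''(x) = \tfrac{1}{2x(x+1)}, \quad f_{h}''(x) = \tfrac{1}{4x^{3/2}},
\]
\[
f_{J}''(x) = \tfrac{x+1}{x^{2}}, \quad f_{T}''(x) = \tfrac{x^{2}+1}{4x^{2}(x+1)}, \quad f_{\Psi}''(x) = \tfrac{2(x^{3}+1)}{x^{3}},
\]
\[
f_{K_{0}}''(x) = \tfrac{3x^{2}+2x+3}{4x^{5/2}}, \quad f_{F}''(x) = \tfrac{15x^{4}+2x^{2}+15}{8x^{7/2}},
\]
and the mean--divergence generators $f_{M_{1}}, f_{M_{2}}, f_{M_{3}}$ follow from $f_{N_{1}}, f_{N_{2}}, f_{G}, f_{A}$, the only non--elementary ingredient being the mixed radical $\sqrt{x(x+1)}$ arising from $f_{N_{2}}(x) = \tfrac{1}{2}(\sqrt{x}+1)\sqrt{(x+1)/2}$.

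Then I would march down (\ref{eq7}) one adjacency at a time. For each comparison $D_{AB}^{i} \le c\,D_{CD}^{j}$ I form the auxiliary generator $g(x) = c f_{CD}(x) - f_{AB}(x)$ and show $g''(x) \ge 0$ on $(0,\infty)$; clearing the common denominator reduces this to a polynomial (or polynomial--in--$\sqrt{x}$) positivity statement. Every ratio $f_{AB}''/f_{CD}''$ is invariant under $x \leftrightarrow 1/x$ since the underlying divergences are symmetric, so its supremum is attained either at $x = 1$ (where the Taylor expansion $f_{AB}(x) \sim \tfrac{1}{2}f_{AB}''(1)(x-1)^{2}$ pins the value) or jointly as $x \to 0^{+}$ and $x \to \infty$; checking monotonicity on $(0,1]$ then suffices. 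The equalities listed just before (\ref{eq6})---in particular $D_{hM_{1}}^{8} = 3D_{hM_{2}}^{7} = \tfrac{3}{2}D_{M_{2}M_{1}}^{4} = D_{M_{3}h}^{11} = \tfrac{1}{2}D_{M_{3}M_{1}}^{13} = \tfrac{3}{4}D_{M_{3}M_{2}}^{12}$ and $D_{TJ}^{22} = \tfrac{1}{2}D_{TI}^{27} = D_{JI}^{20}$---collapse several branches of (\ref{eq7}) at once and must be verified separately as identities of the generators themselves.

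The main obstacle is sheer volume rather than any single difficult step: (\ref{eq7}) encodes on the order of fifty adjacent comparisons and (\ref{eq8}) contributes four more. The genuinely delicate cases are those in which $M_{2}$ or $M_{3}$ appears, because the mixed radical $\sqrt{x(x+1)}$ in $f_{N_{2}}''$ forces the auxiliary inequality $g''(x) \ge 0$ to be isolated and squared before it becomes a polynomial statement, which can then be finished by explicit factoring. Inequality (\ref{eq8}) is handled by the same technique, through the three lower bounds $f_{Jh}''/f_{JM_{3}}'' \le 4$, $f_{M_{1}I}''/f_{JM_{3}}'' \le 1$, $f_{TM_{3}}''/f_{JM_{3}}'' \le 9$ and the single upper bound $f_{JM_{3}}''/f_{F\Psi}'' \le \tfrac{1}{24}$. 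Once the ratio table is complete, the entire nested chain of the theorem follows mechanically by chaining adjacent applications of Lemma 1.2.
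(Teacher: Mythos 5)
Your plan is essentially the paper's own proof: for each adjacent comparison $D_{AB}^{i}\le c\,D_{CD}^{j}$ the paper applies Lemma 1.2 to the ratio $g(x)=f''_{AB}(x)/f''_{CD}(x)$, shows $g'>0$ on $(0,1)$ and $g'<0$ on $(1,\infty)$ so that $\sup g=\lim_{x\to1}g(x)=c$, and disposes of the mixed radical $\sqrt{2x+2}$ exactly by the squaring device ($a>b$ from $a^{2}-b^{2}>0$) you anticipate for the $M_{2},M_{3}$ cases. Be aware, though, that the entire substance of the paper's proof is the explicit execution of these roughly $59$ verifications (each ending in a factorization of the squared difference as $(\sqrt{x}-1)^{4}$ or $(\sqrt{x}-1)^{6}$ times a manifestly positive polynomial, and in two cases requiring a root count of a degree-$12$ or degree-$24$ polynomial), none of which your proposal actually carries out.
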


\begin{proof} 
In view of (\ref{eq6}) we shall prove the theorem just
writing the expressions for $f_{AB} (x)$. The rest part is understood
obviously.

\begin{enumerate}
\item \textbf{For }$\bf{D_{I\Delta }^1 (P\vert \vert Q) \le \textstyle{8 \over
9}D_{M_1 \Delta }^3 (P\vert \vert Q)}$\textbf{: }After simplification, we observe
that equivalently, we have to show the following:
\begin{equation}
\label{eq9}
I\le \textstyle{1 \over {36}}\left[ {128M_{1} +\Delta } \right].
\end{equation}
Let us consider the function $g_{I\mathunderscore M_{1} \mathunderscore
\Delta } (x)=\textstyle{{{f}''_{I} (x)} \over {{f}''_{M_{1} \mathunderscore
\Delta } (x)}}$, then we have
\begin{align}
& g_{I\mathunderscore M_{1} \mathunderscore \Delta } (x) =\notag\\
\label{eq10}
& =\frac{\left( {x+1} \right)^{2}\sqrt x \sqrt {2x+2} }{16\left(
{\begin{array}{l}
 \sqrt {2x+2} \left( {2\left( {x+1} \right)^{3}+x^{3/2}} \right)- \\
 -2\left( {x^{3/2}+1} \right)\left( {x+1} \right)^{2} \\
 \end{array}} \right)}.
\end{align}
Here we have
\[
f_{M_{1} \mathunderscore \Delta } (x)=128f_{M_{1} }
(x)+f_{\Delta } (x).
\]

Calculating the first order derivative of the function
$ g_{I\mathunderscore M_{1} \mathunderscore \Delta } (x)$, we get
\begin{align}
& {g}'_{I\mathunderscore M_{1} \mathunderscore \Delta } (x) =\notag\\
& =-\frac{\left( {\begin{array}{l}
 \sqrt {2x+2} \left( {x+1} \right)\left( {\sqrt x -1} \right)\times \\
 \times \left( {x^{2}+x^{3/2}+3x+\sqrt x +1} \right) \\
 \end{array}} \right)\times k_{1} (x)}{16\sqrt x \left( {\begin{array}{l}
 \sqrt {2x+2} \left( {2\left( {x+1} \right)^{3}+x^{3/2}} \right)- \\
 -2\left( {x^{3/2}+1} \right)\left( {x+1} \right)^{2} \\
 \end{array}} \right)^{2}},\notag
\end{align}
where
\[
k_{1} (x)=\sqrt {2x+2} \left( {x^{3/2}+1} \right)-\left( {x+1} \right)^{2}.
\]
This gives
\begin{equation}
\label{eq11}
{g}'_{I\mathunderscore M_{1} \mathunderscore \Delta } (x)\begin{cases}
 {>0,} & {x<1} \\
 {<0,} & {x>1} \\
\end{cases}.
\end{equation}
Expression (\ref{eq11}) is valid only when $k_{1} (x)>0$, $\forall x>0,\;x\ne 1$.
Now, we shall show that $k_{1} (x)>0$, $\forall x>0,\;x\ne 1$. Let us
consider
\[
h_{1} (x)=\left[ {\sqrt {2x+2} \left( {x^{3/2}+1} \right)}
\right]^{2}-\left( {x+1} \right)^{4}.
\]
After simplifications, we have
\[
h_{1} (x)=\left( {x+1} \right)\left( {\sqrt x -1} \right)^{2}\left(
{x^{2}+2x^{3/2}+2\sqrt x +1} \right).
\]
Since $h_{1} (x)>0$, $\forall x>0,\;x\ne 1$ proving that $k_{1} (x)>0$,
$\forall x>0,\;x\ne 1$. Also we have
\begin{align}
\label{eq12}
\beta & = \mathop {\sup }\limits_{x \in (0,\infty )}
g_{I\_M_1 \_\Delta } (x) = \notag\\
& = \mathop {\lim }\limits_{x \to 1} g_{I\_M_1
\_\Delta } (x) = \textstyle{1 \over {36}}.
\end{align}

By the application Lemma 1.2 over (\ref{eq11}) and (\ref{eq12}) we get (\ref{eq9}), proving the
required result.

\bigskip
\textbf{Argument:} \textit{Let }$a$\textit{ and }$b$\textit{ two positive numbers, i.e., }$a > 0$\textit{ and }$b > 0$\textit{. If }$a^2 - b^2 > 0$\textit{, then we can conclude that }$a > b$\textit{ because }$a - b = ({a^2
- b^2)} \mathord{\left/ {\vphantom {{a^2 - b^2)} {(a + b)}}} \right.
\kern-\nulldelimiterspace} {(a + b)}$\textit{. We have used this argument to prove }$k_1 (x) > 0, \forall x > 0, x \ne 1$\textit{. We shall use frequently this argument to prove the other parts of the theorem. }

\begin{remark}\textit{From the above proof we observe that it is sufficient to write expressions similar to (\ref{eq10}), (\ref{eq11}) and (\ref{eq12}). The rest part of the proof follows by the application of Lemma 1.2. In view of this we shall avoid details for the proof of other parts. From now onward, throughout it is understood that }$x > 0,\;x \ne 1.$
\end{remark}

\item \textbf{For }$\bf{D_{M_{1} \Delta }^{3} (P\vert \vert Q)\le \textstyle{9 \over {11}}D_{M_{2} \Delta }^{6} (P\vert \vert Q)}$\textbf{: }Let us consider $g_{M_{1} \Delta \mathunderscore M_{2} \Delta } (x)={{f}''_{M_{1} \Delta } (x)} \mathord{\left/ {\vphantom {{{f}''_{M_{1} \Delta } (x)} {{f}''_{M_{2} \Delta } (x)}}} \right. \kern-\nulldelimiterspace} {{f}''_{M_{2} \Delta } (x)}$, then we have
\begin{align}
& g_{M_{1} \Delta \mathunderscore M_{2} \Delta } (x)  =\notag\\
& \hspace{5pt} =\frac{3\left( {\begin{array}{l}
 \sqrt {2x+2} \left( {\left( {x+1} \right)^{3}-4x^{3/2}} \right)- \\
 -\left( {x^{3/2}+1} \right)\left( {x+1} \right)^{2} \\
 \end{array}} \right)}{\left( {\begin{array}{l}
 2\sqrt {2x+2} \left[ {\left( {x+1} \right)^{3}-6x^{3/2}} \right]- \\
 -\left( {x^{3/2}+1} \right)\left( {x+1} \right)^{2} \\
 \end{array}} \right)}.\notag
\end{align}

This gives
\[
\beta = \lim \limits_{x\to 1} g_{M_{1} \Delta \mathunderscore M_{2} \Delta
} (x)=\textstyle{9 \over {11}}.
\]
Equivalently, we have to show that
\begin{align}
\Omega_{1} & =\textstyle{9 \over {11}}D_{M_{2} \Delta }^{6} -D_{M_{1} \Delta
}^{3} =\notag\\
& =\textstyle{1 \over {22}}\left( {\Delta +24M_{2} -88M_{1} } \right)\ge 0.\notag
\end{align}

We can write $\Omega_{1} :=\sum\nolimits_{i=1}^n q_{i} f_{\Omega_{1} }
\left( {{q_{i} } \mathord{\left/ {\vphantom {{q_{i} } {p_{i} }}} \right.
\kern-\nulldelimiterspace} {p_{i} }} \right)$, where \newline $f_{\Omega_{1} }
(x)=\textstyle{{k_{1} (x)} \over {22(x+1)}}$, with
\begin{align}
k_{2} (x)=& 20x^{3/2}+20\sqrt x +23x^{2}+42x+23\notag\\
& -16\sqrt {2x+2} \left( {\sqrt x +1} \right)\left( {x+1} \right).\notag
\end{align}

Let us consider
\begin{align}
h_{2} (x)= & \left( {20x^{3/2}+20\sqrt x +23x^{2}+42x+23} \right)^{2}\notag\\
& - \left[ {16\sqrt {2x+2} \left( {\sqrt x +1} \right)\left( {x+1} \right)}
\right]^{2}.\notag
\end{align}

After simplifications, we get
\[
h_{2} (x)=\left( {\sqrt x -1} \right)^{6}\left[ {16x+16+\left( {\sqrt x -1}
\right)^{2}} \right].
\]
Since $h_{2} (x)>0$, proving that $k_{2} (x)>0$. This proves the required
result.

\bigskip
\item \textbf{For }$\bf{D_{M_{2} \Delta }^{6} (P\vert \vert Q)\le \textstyle{{11} \over {12}}D_{h\Delta }^{10} (P\vert \vert Q)}$\textbf{: }Let us consider $g_{M_{2} \Delta \mathunderscore h\Delta } (x)={{f}''_{M_{2} \Delta } (x)} \mathord{\left/ {\vphantom {{{f}''_{M_{2} \Delta } (x)} {{f}''_{h\Delta } (x)}}} \right. \kern-\nulldelimiterspace} {{f}''_{h\Delta } (x)}$, then we have
\begin{align}
& g_{M_{2} \Delta \mathunderscore h\Delta } (x)=\notag\\
& =\frac{\left( {\begin{array}{l}
 2\sqrt {2x+2} \left[ {\left( {x+1} \right)^{3}-6x^{3/2}} \right]- \\
 -\left( {x^{3/2}+1} \right)\left( {x+1} \right)^{2} \\
 \end{array}} \right)}{\left( {\begin{array}{l}
 3\sqrt {2x+2} \left( {\sqrt x -1} \right)^{2}\times \\
 \times \left( {x^{2}+2x^{3/2}+6x+2\sqrt x +1} \right) \\
 \end{array}} \right)}.\notag
\end{align}
This gives
\[
\beta = \lim\limits_{x\to 1} g_{M_{2} \Delta \mathunderscore h\Delta }
(x)=\textstyle{{11} \over {12}}.
\]
Equivalently, we have to show that
\begin{align}
\Omega_{2} & =\textstyle{{11} \over {12}}D_{h\Delta }^{10} -D_{M_{2} \Delta
}^{6} =\notag\\
& =\textstyle{1 \over {48}}\left( {44h+\Delta -64M_{2} } \right)\ge 0.\notag
\end{align}

We can write $\Omega_{2} :=\sum\nolimits_{i=1}^n q_{i} f_{\Omega_{2} }
\left( {{q_{i} } \mathord{\left/ {\vphantom {{q_{i} } {p_{i} }}} \right.
\kern-\nulldelimiterspace} {p_{i} }} \right)$, where \newline $f_{\Omega_{2} }
(x)=\textstyle{{k_{3} (x)} \over {48(x+1)}}$, with $k_{3} (x)=k_{2} (x)>0$.
This proves the required result.

\bigskip
\item \textbf{For }$\bf{D_{h\Delta }^{10} (P\vert \vert Q)\le \textstyle{4 \over 5}D_{M_{3} \Delta }^{15} (P\vert \vert Q)}$\textbf{: }Let us consider $g_{h\Delta \mathunderscore M_{3} \Delta } (x)={{f}''_{h\Delta } (x)} \mathord{\left/ {\vphantom {{{f}''_{h\Delta } (x)} {{f}''_{M_{3} \Delta } (x)}}} \right. \kern-\nulldelimiterspace} {{f}''_{M_{3} \Delta } (x)}$, then we have
\begin{align}
& g_{h\Delta \mathunderscore M_{3} \Delta } (x)=\notag\\
& =\frac{\left( {\begin{array}{l}
 \left( {x^{2}+2x^{3/2}+6x+2\sqrt x +1} \right)\times \\
 \times \left( {\sqrt x -1} \right)^{2}\sqrt {2x+2} \\
 \end{array}} \right)}{2\left[ {\left( {x^{3/2}+1} \right)\left( {x+1}
\right)^{2}-4x^{3/2}\sqrt {2x+2} } \right]}.\notag
\end{align}

This gives
\[
\beta = \lim\limits_{x\to 1} g_{h\Delta \mathunderscore M_{3} \Delta }
(x)=\textstyle{4 \over 5}.
\]
Equivalently, we have to show that
\begin{align}
\Omega_{3} & =\textstyle{4 \over 5}D_{M_{3} \Delta }^{15} -D_{h\Delta }^{10}
=\notag\\
& =\textstyle{1 \over {20}}\left( {64M_{3} +\Delta -20h} \right)\ge 0.\notag
\end{align}

We can write $\Omega_{3} :=\sum\nolimits_{i=1}^n q_{i} f_{\Omega_{3} }
\left( {{q_{i} } \mathord{\left/ {\vphantom {{q_{i} } {p_{i} }}} \right.
\kern-\nulldelimiterspace} {p_{i} }} \right)$, where \newline $f_{\Omega_{3} }
(x)=\textstyle{{k_{4} (x)} \over {20(x+1)}}$, with $k_{4} (x)=k_{2} (x)>0$.
This proves the required result.

\bigskip
\item \textbf{For }$\bf{D_{M_{3} \Delta }^{15} (P\vert \vert Q)\le 5D_{hM_{1} }^{8} (P\vert \vert Q)}$\textbf{: }Let us consider $g_{M_{3} \Delta \mathunderscore hM_{1} } (x)={{f}''_{M_{3} \Delta } (x)} \mathord{\left/ {\vphantom {{{f}''_{M_{3} \Delta } (x)} {{f}''_{hM_{1} } (x)}}} \right. \kern-\nulldelimiterspace} {{f}''_{hM_{1} } (x)}$, then we have
\begin{align}
& g_{M_{3} \Delta \mathunderscore hM_{1} } (x)=\notag\\
& =\frac{2\left[ {\left( {x^{3/2}+1} \right)\left( {x+1}
\right)^{2}-4x^{3/2}\sqrt {2x+2} } \right]}{\left( {x+1} \right)^{2}\left[
{2x^{3/2}+2-\left( {x+1} \right)\sqrt {2x+2} } \right]}.\notag
\end{align}

This gives
\[
\beta = \lim\limits_{x\to 1} g_{M_{3} \Delta \mathunderscore hM_{1} }
(x)=5.
\]
Equivalently, we have to show that
\begin{align}
\Omega_{4} & =\textstyle{4 \over 5}D_{M_{3} \Delta }^{15} -D_{h\Delta }^{10} =\notag\\
& =\textstyle{1 \over 4}\left( {20h+\Delta -80M_{1} -64M_{3} } \right)\ge 0.\notag
\end{align}

We can write $\Omega_{4} :=\sum\nolimits_{i=1}^n q_{i} f_{\Omega_{4} }
\left( {{q_{i} } \mathord{\left/ {\vphantom {{q_{i} } {p_{i} }}} \right.
\kern-\nulldelimiterspace} {p_{i} }} \right)$, where \newline $f_{\Omega_{4} }
(x)=\textstyle{{k_{5} (x)} \over {4(x+1)}}$, with $k_{5} (x)=k_{2} (x)>0$.
This proves the required result.

\bigskip
\item \textbf{For }$\bf{D_{M_3 \Delta }^{15} (P\vert \vert Q) \le \textstyle{{15}
\over {16}}D_{J\Delta }^{21} (P\vert \vert Q)}$\textbf{: }Let us consider $g_{M_{3} \Delta \mathunderscore J\Delta } (x)={{f}''_{M_{3} \Delta } (x)} \mathord{\left/ {\vphantom {{{f}''_{M_{3} \Delta } (x)} {{f}''_{J\Delta } (x)}}} \right. \kern-\nulldelimiterspace} {{f}''_{J\Delta } (x)}$, then we have
\[
g_{M_{3} \Delta \mathunderscore J\Delta } (x)=
\frac{4\sqrt x \left( {\begin{array}{l}
 \left( {x^{(3/2)}+1} \right)\left( {x+1} \right)^{2}- \\
 -4x^{3/2}\sqrt {2x+2} \\
 \end{array}} \right)}{\sqrt {2x+2} \left( {x^{2}+6x+1} \right)\left( {x-1}
\right)^{2}}
\]
and
\begin{align}
& {g}'_{M_{3} \Delta \mathunderscore J\Delta }(x) =\notag\\
& =-\frac{2\left( {x+1} \right)\times k_{6} (x)}{\sqrt {2x+2} \sqrt x \left(
{x^{2}+6x+1} \right)^{2}\left( {x-1} \right)^{3}},\notag
\end{align}

where
\begin{align}
& k_{6} (x) = -16x^{3/2}\sqrt {2x+2} \left( {x+1} \right)^{2}+\notag\\
& +\left( {\sqrt x +1} \right)\left( {\begin{array}{l}
 \left( {x^{4}+9x^{2}+1} \right)\left( {\sqrt x -1} \right)^{2}+ \\
 +x^{9/2}+x^{4}+2x^{7/2}+28x^{3}+ \\
 +28x^{2}+2x^{3/2}+x+\sqrt x \\
 \end{array}} \right).\notag
\end{align}

This gives
\[
{g}'_{M_{3} \Delta \mathunderscore J\Delta } (x)\begin{cases}
 {>0,} & {x<1} \\
 {<0,} & {x>1} \\
\end{cases},
\]
provided $k_{6} (x)>0$. Now we shall show that $k_{6} (x)>0$. Let us
consider
\begin{align}
& h_{6} (x)=-\left[ {16x^{3/2}\sqrt {2x+2} \left( {x+1} \right)^{2}}
\right]^{2}+\notag\\
& +\left[ {\left( {\sqrt x +1} \right)\left( {\begin{array}{l}
 \left( {x^{4}+9x^{2}+1} \right)\left( {\sqrt x -1} \right)^{2}+ \\
 +x^{9/2}+x^{4}+2x^{7/2}+28x^{3}+ \\
 +28x^{2}+2x^{3/2}+x+\sqrt x \\
 \end{array}} \right)} \right]^{2}.\notag
\end{align}

Simplifying the above expression, we get
\begin{align}
h_{6} (x)= & \left( {\sqrt x -1} \right)^{4}\times\notag\\
& \times \left( {\begin{array}{l}
 1+4\sqrt x +538x^{5/2}+36x^{3/2}+ \\
 +1460x^{7/2}+2196x^{9/2}+538x^{13/2}+ \\
 +1537x^{5}+1537x^{4}+908x^{3}+12x^{8} \\
 +166x^{2}+12x+4x^{17/2}+36x^{15/2}+ \\
 +1460x^{11/2}+x^{9}+908x^{6}+166x^{7} \\
 \end{array}} \right).\notag
\end{align}

Since $h_{6} (x)>0$, proving that $k_{6} (x)>0$. Also we have
\[
\beta =\mathop {\sup }\limits_{x\in
(0,\infty )} g_{M_{3} \Delta \mathunderscore J\Delta } (x)
=\lim\limits_{x\to 1} g_{M_{3} \Delta \mathunderscore J\Delta }
(x)=\textstyle{{15} \over {16}}.
\]

\item \textbf{For }$\bf{D_{J\Delta }^{21} (P\vert \vert Q) \le \textstyle{2 \over
3}D_{T\Delta }^{28} (P\vert \vert Q)}$\textbf{: }It holds in view of (\ref{eq2}).

\bigskip
\item \textbf{For }$\bf{D_{hM_{1} }^{8} (P\vert \vert Q)\le \textstyle{1 \over 8}D_{T\Delta }^{28} (P\vert \vert Q)}$\textbf{: }Let us consider $g_{hM_{1} \mathunderscore T\Delta } (x)={{f}''_{hM_{1} } (x)} \mathord{\left/ {\vphantom {{{f}''_{hM_{1} } (x)} {{f}''_{T\Delta } (x)}}} \right. \kern-\nulldelimiterspace} {{f}''_{T\Delta } (x)}$, then we have
\begin{align}
& g_{hM_{1} \mathunderscore T\Delta } (x)=\notag\\
& =\frac{\sqrt x \left( {x+1} \right)^{2}\left[ {2\left( {x^{3/2}+1}
\right)-\sqrt {2x+2} \left( {x+1} \right)} \right]}{\sqrt {2x+2} \left(
{x^{2}+4x+1} \right)\left( {x-1} \right)^{2}}\notag
\intertext{and}
& {g}'_{hM_{1} \mathunderscore T\Delta } (x)=-\frac{\left( {x+1} \right)k_{7}
(x)}{\left( {\begin{array}{l}
 2\left( {x^{2}+4x+1} \right)^{2}\times \\
 \times \left( {x-1} \right)^{3}\sqrt x \sqrt {2x+2} \\
 \end{array}} \right)},\notag
\intertext{where}
& k_{7} (x)=\notag\\
& =2\left( {\sqrt x +1} \right)\left( {\begin{array}{l}
 \left( {x^{4}+5x^{2}+1} \right)\left( {\sqrt x -1} \right)^{2}+ \\
 +3x^{4}+x^{9/2}+20x^{3}+ \\
 +20x^{2}+3x+\sqrt x \\
 \end{array}} \right)-\notag\\
& -\sqrt {2x+2} \left( {x^{4}+6x^{3}+34x^{2}+6x+1} \right)\left( {x+1}
\right).\notag
\end{align}

This give
\[
{g}'_{hM_{1} \mathunderscore T\Delta } (x)\begin{cases}
 {>0,} & {x<1} \\
 {<0,} & {x>1} \\
\end{cases},
\]
provided $k_{7} (x)>0$. Now we shall show that $k_{7} (x)>0$. Let us
consider
\begin{align}
& h_{7} (x) =\notag\\
& = \left[ {2\left( {\sqrt x +1} \right)\left( {\begin{array}{l}
 \left( {x^{4}+5x^{2}+1} \right)\left( {\sqrt x -1} \right)^{2}+ \\
 +3x^{4}+x^{9/2}+20x^{3}+ \\
 +20x^{2}+3x+\sqrt x \\
 \end{array}} \right)} \right]^{2}\notag\\
& \hspace{10pt} -\left[ {\sqrt {2x+2} \left( {x+1} \right)\left( {\begin{array}{l}
 x^{4}+6x^{3}+ \\
 +34x^{2}+6x+1 \\
 \end{array}} \right)} \right]^{2}.\notag
\end{align}

After simplifications, we get
\begin{align}
& h_{7} (x)=2\left( {\sqrt x -1} \right)^{4}\times\notag\\
& \times \left( {\begin{array}{l}
 x^{9}+4x^{17/2}+7x^{8}+24x^{15/2}+44x^{7}+ \\
 +164x^{13/2}+104x^{6}+222x^{11/2}+ \\
 +42x^{7/2}\left( {x+1} \right)\left( {\sqrt x -1} \right)^{2}+300x^{9/2}+
\\
 +222x^{7/2}+104x^{3}+164x^{5/2}+ \\
 +44x^{2}+24x^{3/2}+7x+4\sqrt x +1 \\
 \end{array}} \right).\notag
\end{align}

Since $h_{7} (x)>0$, this gives that $k_{7} (x)>0$. Also we have
\[
\beta =\mathop {\sup }\limits_{x\in (0,\infty )}
g_{hM_{1} \mathunderscore T\Delta } (x)
=\lim\limits_{x\to 1} g_{hM_{1} \mathunderscore T\Delta }
(x)=\textstyle{1 \over 8}.
\]

\item \textbf{For $\bf{D_{M_{3} h}^{11} (P\vert \vert Q)\le \textstyle{3 \over 7}D_{M_{3} I}^{14} (P\vert \vert Q)}$:} Let us consider $g_{M_{3} h\mathunderscore M_{3} I} (x)={{f}''_{M_{3} h} (x)} \mathord{\left/ {\vphantom {{{f}''_{M_{3} h} (x)} {{f}''_{M_{3} I} (x)}}} \right. \kern-\nulldelimiterspace} {{f}''_{M_{3} I} (x)}$, then we have
\[
g_{M_{3} h\mathunderscore M_{3} I} (x)=\frac{2(x^{3/2}+1)-(x+1)\sqrt {2x+2}
}{2\left[ {(x^{3/2}+1)-\sqrt x \sqrt {2x+2} } \right]}
\]
and
\begin{align}
& {g}'_{M_{3} h\mathunderscore M_{3} I} (x)=\notag\\
& =-\frac{\left( {\sqrt x -1} \right)\times k_{8} (x)}{2\sqrt x \sqrt {2x+2}
\left[ {\left( {x^{3/2}+1} \right)-\sqrt x \sqrt {2x+2} } \right]^{2}},\notag
\end{align}

where
\begin{align}
& k_{8} (x)= -\left( {x+1} \right)\left( {\sqrt x +1} \right)\sqrt {2x+2}\,+\notag\\
& +\left( {x+1} \right)\left( {\sqrt x -1} \right)^{2}+\left( {x^{3/2}+1} \right)\left( {\sqrt x +1} \right)+4x. \notag
\end{align}

This gives
\[
{g}'_{M_{3} h\mathunderscore M_{3} I} (x)\begin{cases}
 {>0,} & {x<1} \\
 {<0,} & {x>1} \\
\end{cases},
\]
provided $k_{8} (x)>0$. In order to prove $k_{8} (x)>0$, let us consider
\begin{align}
h_{8} (x)=& \left( {\begin{array}{l}
 \left( {x+1} \right)\left( {\sqrt x -1} \right)^{2}+ \\
 +\left( {x^{3/2}+1} \right)\left( {\sqrt x +1} \right)+4x \\
 \end{array}} \right)^{2} - \notag\\
& -\left[ {\left( {x+1} \right)\left( {\sqrt x +1} \right)\sqrt {2x+2} }
\right]^{2}.\notag
\end{align}

After simplifications, we have
\[
h_{8} (x)=\left( {2x+1} \right)\left( {x+2} \right)\left( {\sqrt x -1}
\right)^{4}.
\]
Since $h_{8} (x)>0$, this gives that $k_{8} (x)>0$. Also we have
\[
\beta =\mathop {\sup }\limits_{x\in (0,\infty )}
g_{M_{3} h\mathunderscore M_{3} I} (x)
=\lim\limits_{x\to 1} g_{M_{3} h\mathunderscore M_{3} I} (x)=\textstyle{3 \over 7}.
\]

\item \textbf{For $\bf{D_{M_{3} I}^{14} (P\vert \vert Q)\le \textstyle{7 \over 4}D_{hI}^{9} (P\vert \vert Q)}$:} Let us consider \\ $g_{M_{3} I\mathunderscore hI} (x)={{f}''_{M_{3} I} (x)} \mathord{\left/ {\vphantom {{{f}''_{M_{3} I} (x)} {{f}''_{M_{3} h} (x)}}} \right. \kern-\nulldelimiterspace} {{f}''_{M_{3} h} (x)}$, then we have
\[
g_{M_{3} I\mathunderscore hI} (x)=\frac{2\left[ {x^{3/2}+1-\sqrt x \sqrt
{2x+2} } \right]}{\left( {\sqrt x -1} \right)^{2}\sqrt {2x+2} }
\]
and
\[
{g}'_{M_{3} I\mathunderscore hI} (x)=-\frac{k_{9} (x)}{\left( {\sqrt x -1}
\right)^{3}\left( {x+1} \right)\sqrt x \sqrt {2x+2} },
\]
where $k_{9} (x)=k_{8} (x)>0$. This give
\[
{g}'_{M_{3} I\mathunderscore hI} (x)\begin{cases}
 {>0,} & {x<1} \\
 {<0,} & {x>1} \\
\end{cases}.
\]
Also we have
\[
\beta=\mathop {\sup }\limits_{x\in (0,\infty )}
g_{M_{3} I\mathunderscore hI} (x)
=\lim\limits_{x\to 1} g_{M_{3} I\mathunderscore hI} (x)=\textstyle{7 \over 4}.
\]

\item \textbf{For }$\bf{D_{hI}^{9} (P\vert \vert Q)\le \textstyle{4 \over 3}D_{M_{2} I}^{5} (P\vert \vert Q)}$\textbf{: }Let us consider \\ $g_{hI\mathunderscore M_{2} I} (x)={{f}''_{hI} (x)} \mathord{\left/ {\vphantom {{{f}''_{hI} (x)} {{f}''_{M_{2} I} (x)}}} \right. \kern-\nulldelimiterspace} {{f}''_{M_{2} I} (x)}$, then we have
\begin{align}
& g_{hI\mathunderscore M_{2} I} (x)=\notag\\
& =\frac{3\left( {\sqrt x -1} \right)^{2}\sqrt {2x+2} }{2\left[ {\sqrt {2x+2}
\left( {2x-3\sqrt x +2} \right)-\left( {x^{3/2}+1} \right)} \right]}\notag
\end{align}
and
\[
{g}'_{hI\mathunderscore M_{2} I} (x)=\frac{3\left( {1- \sqrt x}
\right)\times k_{10} (x)}{\left( {\begin{array}{l}
 2\sqrt x \sqrt {2x+2} \left[ {-\left( {x^{3/2}+1} \right)+} \right. \\
 \left. {+\sqrt {2x+2} \left( {2x-3\sqrt x +2} \right)} \right]^{2} \\
 \end{array}} \right)},
\]
where $k_{10} (x)=k_{8} (x)>0$. This gives
\[
{g}'_{hI\mathunderscore M_{2} I} (x)\begin{cases}
 {>0,} & {x<1} \\
 {<0,} & {x>1} \\
\end{cases}.
\]
Also we have
\[
\beta =\mathop {\sup }\limits_{x\in (0,\infty )}
g_{hI\mathunderscore M_{2} I} (x)
=\lim\limits_{x\to 1} g_{hI\mathunderscore M_{2} I} (x)=\textstyle{4
\over 3}.
\]

\item \textbf{For }$\bf{D_{M_{2} I}^{5} (P\vert \vert Q)\le \textstyle{1 \over 8}D_{K_{0} \Delta}^{36} (P\vert \vert Q)}$\textbf{: }Let us consider $g_{M_{2} I\mathunderscore K_{0} \Delta } (x)=\textstyle{{{f}''_{M_{2} I} (x)} \over {{f}''_{K_{0} \Delta } (x)}}$, then we have
\begin{align}
& g_{M_{2} I\mathunderscore K_{0} \Delta } (x)=\frac{16x\left( {x+1}
\right)^{2}}{3\sqrt {2x+2} \left( {\sqrt x -1} \right)^{2}}\times\notag\\
& \times \frac{\left[ {\sqrt {2x+2} \left( {2x-3\sqrt x +2} \right)-\left(
{x^{3/2}+1} \right)} \right]}{\left( {\begin{array}{l}
 3x^{4}+6x^{7/2}+20x^{3}+34x^{5/2} \\
 +66x^{2}+34x^{3/2}+20x+6\sqrt x +3 \\
 \end{array}} \right)}\notag
\end{align}
and
\begin{align}
& {g}'_{M_{2} I\mathunderscore K_{0} \Delta } (x) = -\frac{8(x+1)\times k_{11}
(x)}{\sqrt {2x+2} \left( {\sqrt x -1} \right)^{3}}\times\notag\\
& \times \frac{1}{\left( {\begin{array}{l}
 3x^{4}+6x^{7/2}+20x^{3}+34x^{5/2}+ \\
 +66x^{2}+34x^{3/2}+20x+6\sqrt x +3 \\
 \end{array}} \right)^{2}},\notag
\end{align}
where
\begin{align}
& k_{11} (x)= \sqrt {2x+2} \left( {\sqrt x +1} \right)u(x)-\notag\\
& -\left( {\begin{array}{l}
 2x^{7}+2x^{13/2}+7x^{6}+12x^{11/2}+ \\
 +11x^{5}+94x^{9/2}+76x^{4}+ \\
 +104x^{7/2}+76x^{3}+94x^{5/2}+11x^{2} \\
 +12x^{3/2}+7x+2\sqrt x +2 \\
 \end{array}} \right),\notag
\end{align}
with
\[
u(x)=\left( {\begin{array}{l}
 4x^{6}-9x^{11/2}+24x^{5}-41x^{9/2}+ \\
 +60x^{4}+50x^{7/2}-48x^{3}+50x^{5/2}+ \\
 +60x^{2}-41x^{3/2}+24x-9\sqrt x +4 \\
 \end{array}} \right).
\]
This gives
\[
{g}'_{M_{2} I\mathunderscore K_{0} \Delta } (x)\begin{cases}
 {>0,} & {x<1} \\
 {<0,} & {x>1} \\
\end{cases},
\]
provided $k_{11} (x)>0$. In order to prove $k_{11} (x)>0$, let us consider
\[
v(t)=u(t^{2})=\left( {\begin{array}{l}
 4t^{12}-9t^{11}+24t^{10}-41t^{9}+ \\
 +60t^{8}+50t^{7}-48t^{6}+50t^{5}+ \\
 +60t^{4}-41t^{3}+24t^{2}-9t+4 \\
 \end{array}} \right).
\]
Solving the polynomial equation $v(t)=0$, we observe that there are no real
solutions. All the twelve solutions are complex and are given by
\begin{center}
$-0.9437538663\pm 0.3306488166\mbox{\thinspace }I,$\\
$-0.3823946004\pm 2.215272138\mbox{\thinspace }I,$\\
$-0.07566691909\pm 0.4383503779\mbox{\thinspace }I,$\\
$\hspace{3pt} 0.3872722043\pm 0.2946782782\mbox{\thinspace }I,$\\
$\hspace{3pt} 0.5042070498\pm 0.8635827991\mbox{\thinspace }I,$\\
$1.635336132\pm 1.244339331\mbox{\thinspace }I.$
\end{center}

This means that for all $t>0$, either $v(t)>0$ or $v(t)<0$. Calculating a
particular value of $v(t)$, for example for $t=1$, we get $v(1)=128>0$. This
means that $v(t)>0$, for all $t>0$, and hence $u(x)>0$, $\forall x>0$. Let
us consider
\begin{align}
& h_{11} (x)=\left[ {\sqrt {2x+2} \left( {\sqrt x +1} \right) u(x)}
\right]^{2} - \left[\sqrt x \left( {x+1} \right)\right]^{2} \times\notag\\
& \times {\left( {\begin{array}{l}
 2x^{7}+2x^{13/2}+7x^{6}+12x^{11/2}+ \\
 +11x^{5}+94x^{9/2}+76x^{4}+ \\
 +104x^{7/2}+76x^{3}+94x^{5/2}+ \\
 +11x^{2}+12x^{3/2}+7x+2\sqrt x +2 \\
 \end{array}} \right)}^{2}.\notag
\end{align}

After simplifications, we have
\begin{align}
& h_{11} (x)=x\left( {x+1} \right)^{2}\left( {\sqrt x -1} \right)^{4}\times\notag\\
& \times \left( {\begin{array}{l}
 8+2608x^{5}+218x+112x^{3/2}+451x^{2}+ \\
 +24\sqrt x +1910x^{3}+1180x^{5/2}+5612x^{7/2}+ \\
 +1420x^{9/2}+1124x^{11/2}+1124x^{13/2}+ \\
 +1910x^{9}+3745x^{8}+3745x^{4}+218x^{11}+ \\
 +1983x^{4}\left( {x^{2}-1} \right)^{2}+1420x^{15/2}+ \\
 +5612x^{17/2}+2608x^{7}+451x^{10}+ \\
 +28x^{12}+1180x^{19/2}+24x^{23/2}+112x^{21/2} \\
 \end{array}} \right).\notag
\end{align}
Since $h_{11} (x)>0$, this gives that $k_{11} (x)>0$. Also we have
\[
\beta =\mathop {\sup }\limits_{x\in
(0,\infty )} g_{M_{2} I\mathunderscore K_{0} \Delta } (x)=
\lim\limits_{x\to 1} g_{M_{2} I\mathunderscore K_{0} \Delta }
(x)=\textstyle{1 \over 8}.
\]

\item \textbf{For }$\bf{D_{M_{2} I}^{5} (P\vert \vert Q)\le \textstyle{3 \over 8}D_{TJ}^{22} (P\vert \vert Q)}$\textbf{: }Let us consider $g_{M_{2} I\mathunderscore TJ} (x)={{f}''_{M_{2} I} (x)} \mathord{\left/ {\vphantom {{{f}''_{M_{2} I} (x)} {{f}''_{TJ} (x)}}} \right. \kern-\nulldelimiterspace} {{f}''_{TJ} (x)}$, then we have
\begin{align}
& g_{M_{2} I\mathunderscore TJ} (x)=\notag\\
& =\frac{4\left[ {\left( {2x-3\sqrt x +2} \right)\sqrt {2x+2} -\left(
{x^{3/2}+1} \right)} \right]}{3\sqrt {2x+2} \left( {x-1} \right)^{2}}\notag
\end{align}
and
\[
{g}'_{M_{2} I\mathunderscore TJ} (x)=-\frac{2\times k_{12} (x)}{3\sqrt
{2x+2} \sqrt x \left( {x+1} \right)\left( {x-1} \right)^{3}},
\]
where
\begin{align}
k_{12} (x)=& 2\sqrt {2x+2} \times\notag\\
& \times \left( {\begin{array}{l}
 x^{2}\left( {\sqrt x -2} \right)^{2}+3x\left( {\sqrt x -1} \right)^{2}+ \\
 +\left( {2\sqrt x -1} \right)^{2}+\sqrt x \left( {x^{2}+1} \right) \\
 \end{array}} \right)-\notag\\
& -\left( {x^{7/2}+3x^{5/2}+4x^{3/2}+3x+4x^{2}+1} \right).\notag
\end{align}
This gives
\[
{g}'_{M_{2} I\mathunderscore TJ} (x)\begin{cases}
 {>0} & {x<1} \\
 {<0} & {x>1} \\
\end{cases},
\]
provided $k_{12} (x)>0$. In order to prove $k_{12} (x)>0$, let us
consider
\begin{align}
& h_{12} (x)=\left( {2\sqrt {2x+2} } \right)^{2}\times\notag\\
& \times \left( {\begin{array}{l}
 x^{2}\left( {\sqrt x -2} \right)^{2}+3x\left( {\sqrt x -1} \right)^{2}+ \\
 +\left( {2\sqrt x -1} \right)^{2}+\sqrt x \left( {x^{2}+1} \right) \\
 \end{array}} \right)^{2}\notag\\
& -\left( {x^{7/2}+3x^{5/2}+4x^{3/2}+3x+4x^{2}+1} \right)^{2}.\notag
\end{align}

After simplifications, we have
\begin{align}
& h_{12} (x)=\left( {\sqrt x -1} \right)^{4}\times\notag\\
& \times \left( {\begin{array}{l}
 \left( {x+1} \right)\left( {2x^{4}+45x^{2}+2} \right)+ \\
 +5x^{4}\left( {\sqrt x -2} \right)^{2}+5\left( {2\sqrt x -1} \right)^{2}+
\\
 +x\left( {\sqrt x -1} \right)^{2}\left( {42x^{2}+65x+42} \right) \\
 \end{array}} \right). \notag
\end{align}

Since $h_{12} (x)>0$, this gives that $k_{12} (x)>0$. Also we have
\[
\beta =\mathop {\sup }\limits_{x\in (0,\infty )}
g_{M_{2} I\mathunderscore TJ} (x)=
\lim\limits_{x\to 1} g_{M_{2} I\mathunderscore TJ} (x)=\textstyle{3
\over 8}.
\]

\item \textbf{For }$\bf{D_{M_{2} I}^{5} (P\vert \vert Q)\le \textstyle{1 \over 5}D_{TM_{1} }^{26} (P\vert \vert Q)}$\textbf{: }Let us consider $g_{M_{2} I\mathunderscore TM_{1} } (x)={{f}''_{M_{2} I} (x)} \mathord{\left/ {\vphantom {{{f}''_{M_{2} I} (x)} {{f}''_{TM_{1} } (x)}}} \right. \kern-\nulldelimiterspace} {{f}''_{TM_{1} } (x)}$, then we have
\begin{align}
& g_{M_{2} I\mathunderscore TM_{1} } (x)=\notag\\
& =\frac{2\sqrt x \left( {\begin{array}{l}
 \left( {2x-3\sqrt x +2} \right)\sqrt {2x+2} - \\
 -\left( {x^{3/2}+1} \right) \\
 \end{array}} \right)}{3\left( {\begin{array}{l}
 \sqrt {2x+2} \left( {x^{2}-2x^{3/2}-2\sqrt x +1} \right)+ \\
 +2\sqrt x \left( {x^{3/2}+1} \right) \\
 \end{array}} \right)}\notag
\end{align}
and
\begin{align}
& {g}'_{M_{2} I\mathunderscore TM_{1} } (x)=\notag\\
& =-\frac{\left( {\sqrt x -1} \right)\times k_{13} (x)}{\left(
{\begin{array}{l}
 3\sqrt x \left( {x+1} \right)\left[ {-2\sqrt x \left( {x^{3/2}+1} \right)+}
\right. \\
 \left. {+\sqrt {2x+2} \left( {x^{2}-2x^{3/2}-2\sqrt x +1} \right)}
\right]^{2} \\
 \end{array}} \right)},\notag
\end{align}
where
\begin{align}
& k_{13} (x)=4\left( {x+1} \right)^{2}\times\notag\\
& \times \left[ {\left( {x^{3/2}+1} \right)\left( {\sqrt x -1}
\right)^{2}+3x\left( {\sqrt x +1} \right)} \right]-\notag\\
& -\sqrt {2x+2} \left( {\begin{array}{l}
 x^{4}+x^{7/2}+7x^{3}+5x^{5/2}+ \\
 +20x^{2}+5x^{3/2}+7x+\sqrt x +1 \\
 \end{array}} \right).\notag
\end{align}
This gives
\[
{g}'_{M_{2} I\mathunderscore TM_{1} } (x)\begin{cases}
 {>0,} & {x<1} \\
 {<0,} & {x>1} \\
\end{cases},
\]
provided $k_{13} (x)>0$. In order to prove $k_{13} (x)>0$, let us consider
\begin{align}
& h_{13} (x)=\left[ {4\left( {x+1} \right)^{2}} \right]^{2}\times\notag\\
& \times \left[ {\left( {x^{3/2}+1} \right)\left( {\sqrt x -1}
\right)^{2}+3x\left( {\sqrt x +1} \right)} \right]^{2}-\notag\\
& -\left\{ {\sqrt {2x+2} \left( {\begin{array}{l}
 x^{4}+x^{7/2}+7x^{3}+5x^{5/2}+ \\
 +20x^{2}+5x^{3/2}+7x+\sqrt x +1 \\
 \end{array}} \right)} \right\}^{2}.\notag
\end{align}
After simplifications, we have
\begin{align}
& h_{13} (x)=2\left( {x+1} \right)\left( {\sqrt x -1} \right)^{4}\times\notag\\
& \times \left( {\begin{array}{l}
 6x^{6}+x^{5}\left( {\sqrt x -3} \right)^{2}+30x^{5}+36x^{9/2}+ \\
 +60x^{4}+114x^{7/2}+40x^{3}+114x^{5/2}+ \\
 +60x^{2}+36x^{3/2}+30+\left( {3\sqrt x -1} \right)^{2}+6 \\
 \end{array}} \right).\notag
\end{align}

Since $h_{13} (x)>0$, this gives that $k_{13} (x)>0$. Also we have
\[
\beta =\mathop {\sup }\limits_{x\in (0,\infty )}
g_{M_{2} I\mathunderscore TM_{1} } (x)=
\lim\limits_{x\to 1} g_{M_{2} I\mathunderscore TM_{1} } (x)=\textstyle{1
\over 5}.
\]

\item \textbf{For }$\bf{D_{M_{2} I}^{5} (P\vert \vert Q)\le \textstyle{3 \over 7}D_{JM_{1} }^{19} (P\vert \vert Q)}$\textbf{: }Let us consider $g_{M_{2} I\mathunderscore JM_{1} } (x)={{f}''_{M_{2} I} (x)} \mathord{\left/ {\vphantom {{{f}''_{M_{2} I} (x)} {{f}''_{JM_{1} } (x)}}} \right. \kern-\nulldelimiterspace} {{f}''_{JM_{1} } (x)}$, then we have
\begin{align}
& g_{M_{2} I\mathunderscore JM_{1} } (x)=\notag\\
& = \frac{4x\left( {\begin{array}{l}
 \left( {2x-3\sqrt x +2} \right)\times \\
 \times \sqrt {2x+2} -\left( {x^{3/2}+1} \right) \\
 \end{array}} \right)}{3\left( {\begin{array}{l}
 \sqrt x \sqrt {2x+2} \left( {x+1} \right)\times \\
 \times \left( {x-4x+1} \right)+4x\left( {x^{3/2}+1} \right) \\
 \end{array}} \right)}\notag
\end{align}
and
\begin{align}
& {g}'_{M_{2} I\mathunderscore JM_{1} } (x)=\notag\\
& =-\frac{2\sqrt x \left( {\sqrt x -1} \right)\times k_{14} (x)}{\left(
{\begin{array}{l}
 3\left( {x+1} \right)\left[ {4x\left( {x^{(3/2)}+1} \right)+\left( {x+1}
\right)\times } \right. \\
 \left. {\times \sqrt x \sqrt {2x+2} \left( {x-4\sqrt x +1} \right)}
\right]^{2} \\
 \end{array}} \right)},\notag
\end{align}
where
\begin{align}
& k_{14} (x)=2\sqrt {2x+2} \left( {\sqrt x +1} \right)\left( {x+1}
\right)\times\notag\\
& \times \left[ {x\left( {\sqrt x -2} \right)^{2}+\left( {2\sqrt x -1}
\right)^{2}+\sqrt x \left( {x+1} \right)} \right]-\notag\\
& -\left( {\begin{array}{l}
 x^{4}+x^{7/2}+11x^{3}+3x^{5/2}+ \\
 +32x^{2}+3x^{3/2}+11x+\sqrt x +1 \\
 \end{array}} \right).\notag
\end{align}

This gives
\[
{g}'_{M_{2} I\mathunderscore JM_{1} } (x)\begin{cases}
 {>0} & {x<1} \\
 {<0} & {x>1} \\
\end{cases},
\]
provided $k_{14} (x)>0$. In order to prove $k_{14} (x)>0$, let us consider
\begin{align}
& h_{14} (x)=\left\{ {\begin{array}{l}
 2\sqrt {2x+2} \left( {\sqrt x +1} \right)\left( {x+1} \right)\times \\
 \times \left[ {\begin{array}{l}
 x\left( {\sqrt x -2} \right)^{2}+\left( {2\sqrt x -1} \right)^{2} \\
 +\sqrt x \left( {x+1} \right) \\
 \end{array}} \right] \\
 \end{array}} \right\}^{2}-\notag\\
& -\left( {\begin{array}{l}
 x^{4}+x^{7/2}+11x^{3}+3x^{5/2}+ \\
 +32x^{2}+3x^{3/2}+11x+\sqrt x +1 \\
 \end{array}} \right)^{2}.\notag
\end{align}
After simplifications, we have
\begin{align}
& h_{14} (x)=\left( {\sqrt x -1} \right)^{2}\times\notag\\
&\times \left( {\begin{array}{l}
 \left( {3x^{5}+2x^{5/2}+3} \right)\left( {\sqrt x -1}
\right)^{2}+48x^{9/2}+ \\
 +136x^{7/2}+136x^{5/2}+48x^{3/2}+4x^{6}+ \\
 +44x^{5}+56x^{4}+56x^{2}+44x+4 \\
 \end{array}} \right).\notag
\end{align}

Since $h_{14} (x)>0$, this gives that $k_{14} (x)>0$. Also we have
\[
\beta =\mathop {\sup }\limits_{x\in (0,\infty )}
g_{M_{2} I\mathunderscore JM_{1} } (x)=
\lim\limits_{x\to 1} g_{M_{2} I\mathunderscore JM_{1} } (x)=\textstyle{3
\over 7}.
\]

\item \textbf{For }$\bf{D_{M_{2} I}^{5} (P\vert \vert Q)\le 3D_{M_{1} I}^{2} (P\vert \vert Q)}$\textbf{: }Let us consider $g_{M_{2} I\mathunderscore M_{1} I} (x)={{f}''_{M_{2} I} (x)} \mathord{\left/ {\vphantom {{{f}''_{M_{2} I} (x)} {{f}''_{M_{1} I} (x)}}} \right. \kern-\nulldelimiterspace} {{f}''_{M_{1} I} (x)}$, then we have
\begin{align}
& g_{M_{2} I\mathunderscore M_{1} I} (x)=\notag\\
& =\frac{\sqrt {2x+2} \left( {2x-3\sqrt x +2} \right)-\left( {x^{(3/2)}+1}
\right)}{3\left[ {\sqrt {2x+2} \left( {x-\sqrt x +1} \right)-\left(
{x^{3/2}+1} \right)} \right]}\notag
\end{align}
and
\begin{align}
& {g}'_{M_{2} I\mathunderscore M_{1} I} (x)=\notag\\
& =-\frac{\left( {\sqrt x -1} \right)\times k_{15} (x)}{\left(
{\begin{array}{l}
 3\sqrt x \sqrt {2x+2} \left[ {-\left( {x^{3/2}+1} \right)+} \right. \\
 \left. {+\sqrt {2x+2} \left( {x-\sqrt x +1} \right)} \right]^{2} \\
 \end{array}} \right)},\notag
\end{align}
Where $k_{15} (x)=k_{8} (x)>0$. This give
\[
{g}'_{M_{2} I\mathunderscore M_{1} I} (x)\begin{cases}
 {>0,} & {x<1} \\
 {<0,} & {x>1} \\
\end{cases}.
\]
Also we have
\[
\beta =\mathop {\sup }\limits_{x\in (0,\infty )}
g_{M_{2} I\mathunderscore M_{1} I} (x)
=\lim\limits_{x\to 1} g_{M_{2} I\mathunderscore M_{1} I} (x)=3.
\]

\item \textbf{For }$\bf{D_{TJ}^{22} (P\vert \vert Q)\le \textstyle{8 \over {13}}D_{TM_{2} }^{25} (P\vert \vert Q)}$\textbf{: }Let us consider $g_{TJ\mathunderscore TM_{2} } (x)={{f}''_{TJ} (x)} \mathord{\left/ {\vphantom {{{f}''_{TJ} (x)} {{f}''_{TM_{2} } (x)}}} \right. \kern-\nulldelimiterspace} {{f}''_{TM_{2} } (x)}$, then we have
\[
g_{TJ\mathunderscore TM_{2} } (x)=\frac{3\left( {x-1} \right)^{2}\sqrt
{2x+2} }{2\left( {\begin{array}{l}
 2\sqrt x \left( {x^{3/2}+1} \right)+\sqrt {2x+2} \times \\
 \times \left( {3x^{2}-4x^{3/2}-4\sqrt x +3} \right) \\
 \end{array}} \right)}
\]
and
\begin{align}
& {g}'_{TJ\mathunderscore TM_{2} } (x)=\notag\\
& =-\frac{3\left( {x-1} \right)\times k_{16} (x)}{\left( {\begin{array}{l}
 \sqrt x \sqrt {2x+2} \left[ {2\sqrt x \left( {x^{3/2}+1} \right)} \right.+
\\
 \left. {+\sqrt {2x+2} \left( {3x^{2}-4x^{3/2}-4\sqrt x +3} \right)}
\right]^{2} \\
 \end{array}} \right)},\notag
\end{align}
where
\begin{align}
& k_{16} (x)=2\left( {x+1} \right)\sqrt {2x+2} \,\times\notag\\
& \times \left[ {\left( {\sqrt x -1} \right)^{4}+\sqrt x \left( {x+1} \right)}
\right] -\notag\\
& -\left( {\sqrt x +1} \right)\left( {\begin{array}{l}
 x^{2}\left( {\sqrt x -2} \right)^{2}+\left( {2\sqrt x -1} \right)^{2}+ \\
 +3\sqrt x \left( {x^{2}+1} \right) \\
 \end{array}} \right).\notag
\end{align}
This give
\[
{g}'_{TJ\mathunderscore TM_{2} } (x)\begin{cases}
 {>0} & {x<1} \\
 {<0} & {x>1} \\
\end{cases},
\]
provided $k_{16} (x)>0$. In order to prove $k_{16} (x)>0$, let us consider
\begin{align}
& h_{16} (x)=\left( {2\left( {x+1} \right)\sqrt {2x+2} } \right)^{2}\times\notag\\
& \times \left[ {\left( {\sqrt x -1} \right)^{4}+\sqrt x \left( {x+1} \right)}
\right]-\left( {\sqrt x +1} \right)^{2}\times\notag\\
& \times \left( {\begin{array}{l}
 x^{2}\left( {\sqrt x -2} \right)^{2}+\left( {2\sqrt x -1} \right)^{2}+ \\
 +3\sqrt x \left( {x^{2}+1} \right) \\
 \end{array}} \right)^{2}.\notag
\end{align}
After simplifications, we have
\begin{align}
& h_{16} (x)=\left( {\sqrt x -1} \right)^{4}\times\notag\\
& \times \left( {\begin{array}{l}
 x\left( {42x^{2}+65x+42} \right)\left( {\sqrt x -1} \right)^{2}+ \\
 +5x^{4}\left( {\sqrt x -2} \right)^{2}+5\left( {2\sqrt x -1} \right)^{2}+
\\
 +\left( {x+1} \right)\left( {2x^{4}+45x^{2}+2} \right) \\
 \end{array}} \right).\notag
\end{align}
Since $h_{16} (x)>0$, this gives that $k_{16} (x)>0$. Also we have
\[
\beta =\mathop {\sup }\limits_{x\in (0,\infty )}
g_{TJ\mathunderscore TM_{2} } (x)
=\lim\limits_{x\to 1} g_{TJ\mathunderscore TM_{2} } (x)=\textstyle{8
\over {13}}.
\]

\item \textbf{For }$\bf{D_{TM_{1} }^{26} (P\vert \vert Q)\le \textstyle{{15} \over {13}}D_{TM_{2} }^{25} (P\vert \vert Q)}$\textbf{: }Let us consider $g_{TM_{1} \mathunderscore TM_{2} } (x)={{f}''_{TM_{1} } (x)} \mathord{\left/ {\vphantom {{{f}''_{TM_{1} } (x)} {{f}''_{TM_{2} } (x)}}} \right. \kern-\nulldelimiterspace} {{f}''_{TM_{2} } (x)}$, then we have
\begin{align}
& g_{TM_{1} \mathunderscore TM_{2} } (x)=\notag\\
& =\frac{\left( {\begin{array}{l}
 3\sqrt {2x+2} \left( {x^{2}-2x^{3/2}-2\sqrt x +1} \right)+ \\
 +2\sqrt x \left( {x^{3/2}+1} \right) \\
 \end{array}} \right)}{\left( {\begin{array}{l}
 \sqrt {2x+2} \left( {3x^{2}-4x^{3/2}-4\sqrt x +3} \right)+ \\
 +2\sqrt x \left( {x^{3/2}+1} \right) \\
 \end{array}} \right)}\notag
\end{align}
and
\begin{align}
& {g}'_{TM_{1} \mathunderscore TM_{2} } (x)=\notag\\
& =-\frac{6\left( {\sqrt x -1} \right)k_{17} (x)}{\left( {\begin{array}{l}
 \sqrt x \sqrt {2x+2} \left[ {2\sqrt x \left( {x^{3/2}+1} \right)+} \right.
\\
 \left. {+\sqrt {2x+2} \left( {3x^{2}-4x^{3/2}-4\sqrt x +3} \right)}
\right]^{2} \\
 \end{array}} \right)},\notag
\end{align}
where
\begin{align}
k_{17} (x)=& -\left( {\begin{array}{l}
 \sqrt {2x+2} \left( {\sqrt x +1} \right)\times \\
 \times \left( {x+1} \right)\left( {x^{2}+4x+1} \right) \\
 \end{array}} \right)+\notag\\
& +2\left( {\begin{array}{l}
 x^{4}+x^{7/2}+x^{3}+8x^{5/2}+ \\
 +2x^{2}+8x^{3/2}+x+\sqrt x +1 \\
 \end{array}} \right).\notag
\end{align}

This give
\[
{g}'_{TM_{1} \mathunderscore TM_{2} } (x)\begin{cases}
 {>0,} & {x<1} \\
 {<0,} & {x>1} \\
\end{cases},
\]
provided $k_{17} (x)>0$. In order to show $k_{17} (x)>0$, let us consider
\begin{align}
& h_{17} (x)=-\left( {\begin{array}{l}
 \sqrt {2x+2} \left( {\sqrt x +1} \right)\times \\
 \times \left( {x+1} \right)\left( {x^{2}+4x+1} \right) \\
 \end{array}} \right)^{2}\notag\\
& +\left[ {2\left( {\begin{array}{l}
 x^{4}+x^{7/2}+x^{3}+8x^{5/2}+ \\
 \mathunderscore 2x^{2}+8x^{3/2}+x+\sqrt x +1 \\
 \end{array}} \right)} \right]^{2}.\notag
\end{align}
After simplifications, we have
\begin{align}
& h_{17} (x)=2\left( {\sqrt x -1} \right)^{4}\times\notag\\
& \times \left( {\begin{array}{l}
 1+6\sqrt x +30x^{3/2}+72x^{5/2}+12x^{5}+ \\
 +57x^{2}+94x^{3}+57x^{4}+x^{6}+ \\
 +12x+6x^{11/2}+72x^{7/2}+30x^{9/2} \\
 \end{array}} \right).\notag
\end{align}

Since $h_{17} (x)>0$, proving that $k_{17} (x)>0$. Also we have
\[
\beta =\mathop {\sup }\limits_{x\in (0,\infty )}
g_{TM_{1} \mathunderscore TM_{2} } (x)=
\lim\limits_{x\to 1} g_{TM_{1} \mathunderscore TM_{2} }
(x)=\textstyle{{15} \over {13}}.
\]

\item \textbf{For }$\bf{D_{TM_{2} }^{25} (P\vert \vert Q)\le \textstyle{{13} \over {12}}D_{Th}^{24} (P\vert \vert Q)}$\textbf{: }Let us consider $g_{TM_{2} \mathunderscore Th} (x)={{f}''_{TM_{2} } (x)} \mathord{\left/ {\vphantom {{{f}''_{TM_{2} } (x)} {{f}''_{Th} (x)}}} \right. \kern-\nulldelimiterspace} {{f}''_{Th} (x)}$, then we have
\begin{align}
& g_{TM_{2} \mathunderscore Th} (x)=\notag\\
& = \frac{\left( {\begin{array}{l}
 \sqrt {2x+2} \left( {3x^{2}-4x^{3/2}-4\sqrt x +3} \right)+ \\
 +2\sqrt x \left( {x^{3/2}+1} \right) \\
 \end{array}} \right)}{3 \sqrt {2x+2} \left( {x+\sqrt x +1} \right)\left(
{\sqrt x -1} \right)^{2}}\notag
\end{align}
and
\begin{align}
& {g}'_{TM_{2} \mathunderscore Th} (x)=\notag
-\frac{k_{18} (x)}{\left( {\begin{array}{l}
 6\left( {\sqrt x -1} \right)^{3}\sqrt x \left( {x+1} \right)\times \\
 \times \sqrt {2x+2} \left( {x+\sqrt x +1} \right)^{2} \\
 \end{array}} \right)},\notag
\end{align}
where $k_{18} (x)=k_{17} (x)>0$. This gives
\[
{g}'_{TM_{2} \mathunderscore Th} (x)\begin{cases}
 {>0,} & {x<1} \\
 {<0,} & {x>1} \\
\end{cases}.
\]
Also we have
\[
\beta =\mathop {\sup }\limits_{x\in (0,\infty )}
g_{TM_{2} \mathunderscore Th} (x)=
\lim\limits_{x\to 1} g_{TM_{2} \mathunderscore Th} (x)=\textstyle{{13}
\over {12}}.
\]

\item \textbf{For }$\bf{D_{Th}^{24} (P\vert \vert Q)\le \textstyle{4 \over 3}D_{TM_{3} }^{23} (P\vert \vert Q)}$\textbf{: }Let us consider $g_{Th\mathunderscore TM_{3} } (x)={{f}''_{Th} (x)} \mathord{\left/ {\vphantom {{{f}''_{Th} (x)} {{f}''_{TM_{3} } (x)}}} \right. \kern-\nulldelimiterspace} {{f}''_{TM_{3} } (x)}$, then we have
\begin{align}
& g_{Th\mathunderscore TM_{3} } (x)=\notag\\
& =\frac{\left( {x+\sqrt x +1} \right)\left( {\sqrt x -1} \right)^{2}\sqrt
{2x+2} }{\left( {x^{2}+1} \right)\sqrt {2x+2} -2\sqrt x \left( {x^{3/2}+1}
\right)}\notag
\end{align}
and
\begin{align}
& {g}'_{Th\mathunderscore TM_{3} } (x)=\notag\\
& =-\frac{\left( {\sqrt x -1} \right)\times k_{19} (x)}{\left(
{\begin{array}{l}
 \sqrt x \sqrt {2x+2} \left[ {\left( {x^{2}+1} \right)\sqrt {2x+2} }
\right.- \\
 \left. {-2\sqrt x \left( {x^{3/2}+1} \right)} \right]^{2} \\
 \end{array}} \right)},\notag
\end{align}
where $k_{19} (x)=k_{17} (x)>0$. This gives
\[
{g}'_{Th\mathunderscore TM_{3} } (x)\begin{cases}
 {>0,} & {x<1} \\
 {<0,} & {x>1} \\
\end{cases}.
\]
Also we have
\[
\beta =\mathop {\sup }\limits_{x\in (0,\infty )}
g_{Th\mathunderscore TM_{3} } (x)=
\lim\limits_{x\to 1} g_{Th\mathunderscore TM_{3} } (x)=\textstyle{4
\over 3}.
\]

\item \textbf{For }$\bf{D_{Th}^{24} (P\vert \vert Q)\le \textstyle{{12} \over 5}D_{JM_{2} }^{18} (P\vert \vert Q)}$\textbf{: }Let us consider $g_{Th\mathunderscore JM_{2} } (x)={{f}''_{Th} (x)} \mathord{\left/ {\vphantom {{{f}''_{Th} (x)} {{f}''_{JM_{2} } (x)}}} \right. \kern-\nulldelimiterspace} {{f}''_{JM_{2} } (x)}$, then we have
\begin{align}
& g_{Th\mathunderscore JM_{2} } (x)=\notag\\
& =\frac{6\left( {x+\sqrt x +1} \right)\left( {\sqrt x -1} \right)^{2}\sqrt x
\sqrt {2x+2} }{\left( {\begin{array}{l}
 \sqrt {2x+2} \sqrt x \left( {x+1} \right)\times \\
 \times \left( {3x-8\sqrt x +3} \right)+4x\left( {x^{3/2}+1} \right) \\
 \end{array}} \right)}\notag
\end{align}
and
\begin{align}
& {g}'_{Th\mathunderscore JM_{2} } (x)=\notag\\
& =-\frac{3\left( {\sqrt x -1} \right)\sqrt x \sqrt {2x+2} \times k_{20}
(x)}{\left( {\begin{array}{l}
 \left( {x+1} \right)\left[ {\sqrt {2x+2} \sqrt x \left( {x+1} \right)\times
} \right. \\
 \left. {\times \left( {3x-8\sqrt x +3} \right)+4x\left( {x^{3/2}+1}
\right)} \right]^{`2} \\
 \end{array}} \right)},\notag
\end{align}
where
\begin{align}
& k_{20} (x)=\sqrt {2x+2} \left( {\sqrt x +1} \right)\left( {x+1}
\right)\times\notag\\
& \times \left( {\begin{array}{l}
 x\left( {\sqrt x -2} \right)^{2}+\left( {2\sqrt x -1} \right)^{2}+ \\
 +4\left( {x+1} \right)\left( {\sqrt x -1} \right)^{2}+10x \\
 \end{array}} \right)-\notag\\
& -4\left( {\begin{array}{l}
 x^{4}+x^{7/2}+x^{3}+8x^{5/2}+ \\
 +2x^{2}+8x^{3/2}+x+\sqrt x +1 \\
 \end{array}} \right).\notag
\end{align}
This gives
\[
{g}'_{Th\mathunderscore JM_{2} } (x)\begin{cases}
 {>0} & {x<1} \\
 {<0} & {x>1} \\
\end{cases},
\]
provided $k_{20} (x)>0$. In order to show $k_{20} (x)>0$, let us consider
\begin{align}
& h_{20} (x)=\left[ {\sqrt {2x+2} \left( {\sqrt x +1} \right)\left( {x+1}
\right)} \right]^{2}\times\notag\\
& \times \left( {\begin{array}{l}
 x\left( {\sqrt x -2} \right)^{2}+\left( {2\sqrt x -1} \right)^{2}+ \\
 +4\left( {x+1} \right)\left( {\sqrt x -1} \right)^{2}+10x \\
 \end{array}} \right)^{2}-\notag\\
& -\left[ {4\left( {\begin{array}{l}
 x^{4}+x^{7/2}+x^{3}+8x^{5/2}+ \\
 +2x^{2}+8x^{3/2}+x+\sqrt x +1 \\
 \end{array}} \right)} \right]^{2}.\notag
\end{align}

After simplifications, we have
\begin{align}
& h_{20} (x)=2\left( {\sqrt x -1} \right)^{4}\times\notag\\
& \times \left( {\begin{array}{l}
 \left( {\sqrt x -1} \right)^{2}\left( {9+14x^{5/2}+9x^{5}} \right)+57x+ \\
 +118x^{7/2}+8x^{6}+8+118x^{5/2}+57x^{5}+ \\
 +39x^{2}+39x^{4}+30x^{3/2}+30x^{9/2} \\
 \end{array}} \right).\notag
\end{align}
Since $h_{20} (x)>0$, proving that $k_{20} (x)>0$. Also we have
\[
\beta =\mathop {\sup }\limits_{x\in (0,\infty )}
g_{Th\mathunderscore JM_{2} } (x)=
\lim\limits_{x\to 1} g_{Th\mathunderscore JM_{2} } (x)=\textstyle{{12}
\over 5}.
\]

\item \textbf{For }$\bf{D_{JM_{2} }^{18} (P\vert \vert Q)\le \textstyle{5 \over 4}D_{Jh}^{17} (P\vert \vert Q)}$\textbf{: }Let us consider $g_{JM_{2} \mathunderscore Jh} (x)={{f}''_{JM_{2} } (x)} \mathord{\left/ {\vphantom {{{f}''_{JM_{2} } (x)} {{f}''_{Jh} (x)}}} \right. \kern-\nulldelimiterspace} {{f}''_{Jh} (x)}$, then we have
\begin{align}
& g_{JM_{2} \mathunderscore Jh} (x)=\notag\\
& =\frac{\left( {\begin{array}{l}
 \sqrt {2x+2} \sqrt x \left( {x+1} \right)\times \\
 \times \left( {3x-8\sqrt x +3} \right)+4x\left( {x^{3/2}+1} \right) \\
 \end{array}} \right)}{3\sqrt {2x+2} \sqrt x \left( {x+1} \right)\left(
{\sqrt x -1} \right)^{2}}\notag
\end{align}
and
\[
{g}'_{JM_{2} \mathunderscore Jh} (x)=-\frac{k_{21} (x)}{3\sqrt {2x+2} \sqrt
x \left( {x+1} \right)\left( {\sqrt x -1} \right)^{3}},
\]
where
\begin{align}
& k_{21} (x)=2\left[ {x^{3}+\sqrt x \left( {x+1} \right)\left( {\sqrt x -1}
\right)^{2}} \right]+\notag\\
& +2\left( {6x^{3/2}+1} \right)-\sqrt {2x+2} \left( {\sqrt x +1} \right)\left(
{x+1} \right)^{2}.\notag
\end{align}
This gives
\[
{g}'_{JM_{2} \mathunderscore Jh} (x)\begin{cases}
 {>0,} & {x<1} \\
 {<0,} & {x>1} \\
\end{cases}.
\]
provided $k_{21} (x)>0$. In order to prove $k_{21} (x)>0$, let us consider
\begin{align}
& h_{21} (x)=4\left( {\begin{array}{l}
 \sqrt x \left( {x+1} \right)\left( {\sqrt x -1} \right)^{2}+ \\
 +x^{3}+2\left( {6x^{3/2}+1} \right) \\
 \end{array}} \right)^{2}-\notag\\
& -\left[ {\sqrt {2x+2} \left( {\sqrt x +1} \right)\left( {x+1} \right)^{2}}
\right]^{2}.\notag
\end{align}
After simplifications, we have
\begin{align}
& h_{21} (x)=2\left( {\sqrt x -1} \right)^{4}\times \notag\\
& \left( {\begin{array}{l}
 x^{4}+6x^{7/2}+6x^{3}+6x^{5/2}+ \\
 +28x^{2}+6x^{3/2}+6x+6\sqrt x +1 \\
 \end{array}} \right).\notag
\end{align}
Since $h_{21} (x)>0$, this gives that $k_{21} (x)>0$ Also we have
\[
\beta =\mathop {\sup }\limits_{x\in (0,\infty )}
g_{JM_{2} \mathunderscore Jh} (x)=
\lim \limits_{x\to 1} g_{JM_{2} \mathunderscore Jh} (x)=\textstyle{5
\over 4}.
\]

\item \textbf{For }$\bf{D_{K_{0} \Delta }^{36} (P\vert \vert Q)\le \textstyle{3 \over 2}D_{K_{0} I}^{35} (P\vert \vert Q)}$\textbf{:} Let us consider $g_{K_{0} \Delta \mathunderscore K_{0} I} (x)={{f}''_{K_{0} \Delta } (x)} \mathord{\left/ {\vphantom {{{f}''_{K_{0} \Delta } (x)} {{f}''_{K_{0} I} (x)}}} \right. \kern-\nulldelimiterspace} {{f}''_{K_{0} I} (x)}$, then we have
\begin{align}
& g_{K_{0} \Delta \mathunderscore K_{0} I} (x)=\notag\\
& =\frac{\left( {\begin{array}{l}
 3x^{4}+6x^{7/2}+20x^{3}+34x^{5/2}+ \\
 +\,66x^{2}+34x^{3/2}+20x+6\sqrt x +3 \\
 \end{array}} \right)}{(x+1)^{2}\left( {3x^{2}+6x^{3/2}+14x+6\sqrt x +3}
\right)}\notag
\end{align}
and
\begin{align}
& {g}'_{K_{0} \Delta \mathunderscore K_{0} I} (x)=\notag\\
& =-\frac{8\left( {x-1} \right)\sqrt x \left( {3x+8\sqrt x +3} \right)}{\left(
{x+1} \right)^{3}}\times\notag\\
& \times \frac{\left( {3x^{2}+4x^{3/2}+10x+4\sqrt x +3} \right)}{\left(
{3x^{2}+6x^{3/2}+14x+6\sqrt x +3} \right)^{2}}.\notag
\end{align}

This gives
\[
{g}'_{K_{0} \Delta \mathunderscore K_{0} I} (x)\begin{cases}
 {>0,} & {x<1} \\
 {<0,} & {x>1} \\
\end{cases}.
\]
Also we have
\[
\beta =\mathop {\sup }\limits_{x\in
(0,\infty )} g_{K_{0} \Delta \mathunderscore K_{0} I} (x)=
\lim\limits_{x\to 1} g_{K_{0} \Delta \mathunderscore K_{0} I}
(x)=\textstyle{3 \over 2}.
\]

\item \textbf{For }$\bf{D_{TM_{3} }^{23} (P\vert \vert Q)\le \textstyle{9 \over {16}}D_{K_{0} I}^{35} (P\vert \vert Q)}$\textbf{: }Let us consider $g_{TM_{3} \mathunderscore K_{0} I} (x)={{f}''_{TM_{3} } (x)} \mathord{\left/ {\vphantom {{{f}''_{TM_{3} } (x)} {{f}''_{K_{0} I} (x)}}} \right. \kern-\nulldelimiterspace} {{f}''_{K_{0} I} (x)}$, then we have
\begin{align}
& g_{TM_{3} \mathunderscore K_{0} I} (x)=\notag\\
& =\frac{\left[ {\sqrt {2x+2} \left( {x^{2}+1} \right)-2\sqrt x \left(
{x^{3/2}+1} \right)} \right]}{\sqrt {2x+2} \left( {\sqrt x -1}
\right)^{2}}\times\notag\\
& \times \frac{8\sqrt x }{\left( {3x^{2}+6x^{3/2}+14x+6\sqrt x +3} \right)}\notag
\end{align}
and
\begin{align}
& {g}'_{TM_{3} \mathunderscore K_{0} I} (x)=\notag\\
& =-\frac{4\times k_{22} (x)}{\left( {\begin{array}{l}
 \sqrt x \sqrt {2x+2} \left( {x+1} \right)\left( {\sqrt x -1}
\right)^{3}\times \\
 \times \left( {3x^{2}+6x^{3/2}+14x+6\sqrt x +3} \right)^{2} \\
 \end{array}} \right)},\notag
\end{align}

where
\begin{align}
& k_{22} (x)=\sqrt {2x+2} \, \times\notag\\
& \times \left( {\begin{array}{l}
 3x^{11/2}+28x^{2}+3\sqrt x +33x^{3/2}+ \\
 +60x^{3}+33x^{4}+3x^{5}+60x^{5/2}+ \\
 +28x^{7/2}+x^{9/2}+x+3 \\
 \end{array}} \right)-\notag\\
& -2\sqrt x \left( {\begin{array}{l}
 6\sqrt x +40x^{3/2}+40x^{7/2}+ \\
 +9x+84x^{5/2}+6x^{5}+9x^{4}+ \\
 +25x^{3}+6x^{9/2}+25x^{2}+6 \\
 \end{array}} \right).\notag
\end{align}

This gives
\[
{g}'_{TM_{3} \mathunderscore K_{0} I} (x)\begin{cases}
 {>0,} & {x<1} \\
 {<0,} & {x>1} \\
\end{cases},
\]
provided $k_{22} (x)>0$. In order to prove $k_{22} (x)>0$, let us consider
\begin{align}
& h_{22} (x)=\left( {\sqrt {2x+2} } \right)^{2}\times\notag\\
& \times \left( {\begin{array}{l}
 3x^{11/2}+28x^{2}+3\sqrt x +33x^{3/2}+ \\
 +60x^{3}+33x^{4}+3x^{5}+60x^{5/2}+ \\
 +28x^{7/2}+x^{9/2}+x+3 \\
 \end{array}} \right)^{2} \notag\\
& -\left[ {2\sqrt x \left( {\begin{array}{l}
 6\sqrt x +40x^{3/2}+40x^{7/2}+ \\
 +9x+84x^{5/2}+6x^{5}+9x^{4}+ \\
 +25x^{3}+6x^{9/2}+25x^{2}+6 \\
 \end{array}} \right)} \right]^{2}.\notag
\end{align}

After simplification, we have
\begin{align}
& h_{22} (x)=2\left( {\sqrt x -1} \right)^{4}\times\notag\\
& \times \left( {\begin{array}{l}
 9+54\sqrt x +1326x^{4}+54x^{19/2}+ \\
 +246x^{17/2}+1324x^{9/2}+1324x^{11/2}+ \\
 +952x^{15/2}+1808x^{13/2}+1582x^{3}+ \\
 +952x^{5/2}+114x+601x^{2}+352x^{5}+ \\
 +9x^{10}+114x^{9}+601x^{8}+1582x^{7}+ \\
 +1326x^{6}+1808x^{7/2}+246x^{3/2} \\
 \end{array}} \right).\notag
\end{align}

Since $h_{22} (x)>0$, proving that $k_{22} (x)>0$. Also we have
\[
\beta =\mathop {\sup }\limits_{x\in (0,\infty )}
g_{TM_{3} \mathunderscore K_{0} I} (x)=
\lim\limits_{x\to 1} g_{TM_{3} \mathunderscore K_{0} I} (x)=\textstyle{9
\over {16}}.
\]

\item \textbf{For }$\bf{D_{M_{1} I}^{2} (P\vert \vert Q)\le \textstyle{1 \over {16}}D_{K_{0} I}^{35} (P\vert \vert Q)}$\textbf{: }Let us consider $g_{M_{1} I\mathunderscore K_{0} I} (x)={{f}''_{M_{1} I} (x)} \mathord{\left/ {\vphantom {{{f}''_{M_{1} I} (x)} {{f}''_{K_{0} I} (x)}}} \right. \kern-\nulldelimiterspace} {{f}''_{K_{0} I} (x)}$, then we have
\begin{align}
& g_{M_{1} I\mathunderscore K_{0} I} (x)=\notag\\
& =\frac{16x\left[ {\sqrt {2x+2} \left( {x-\sqrt x +1} \right)-\left(
{x^{3/2}+1} \right)} \right]}{\left( {\begin{array}{l}
 \sqrt {2x+2} \left( {\sqrt x -1} \right)^{2}\times \\
 \times \left( {3x^{2}+6x^{3/2}+14x+6\sqrt x +3} \right) \\
 \end{array}} \right)}\notag
\end{align}
and
\begin{align}
& {g}'_{M_{1} I\mathunderscore K_{0} I} (x)=\notag\\
& =-\frac{8\times k_{23} (x)}{\left( {\begin{array}{l}
 \left( {\sqrt x -1} \right)^{3}\sqrt {2x+2} \left( {x+1} \right)\times \\
 \times \left( {3x^{2}+6x^{3/2}+14x+6\sqrt x +3} \right)^{2} \\
 \end{array}} \right)},\notag
\end{align}
where
\begin{align}
& k_{23} (x)=\sqrt {2x+2} \left( {x+1} \right)\left( {\sqrt x +1}
\right)\times \notag\\
& \times \left( {\begin{array}{l}
 6\left( {x^{2}+1} \right)\left( {\sqrt x -1} \right)^{2}+2x^{3/2} \\
 +3x^{2}\left( {\sqrt x +4} \right)+3\sqrt x \left( {4\sqrt x +1} \right) \\
 \end{array}} \right)-\notag\\
& -\left( {\begin{array}{l}
 6x^{9/2}+40x^{7/2}+84x^{5/2}+40x^{3/2}+6\sqrt x \\
 +6x^{5}+9x^{4}+25x^{3}+25x^{2}+9x+6 \\
 \end{array}} \right).\notag
\end{align}
This gives
\[
{g}'_{M_{1} I\mathunderscore K_{0} I} (x)\begin{cases}
 {>0,} & {x<1} \\
 {<0,} & {x>1} \\
\end{cases},
\]
provided $k_{23} (x)>0$. In order to prove $k_{23} (x)>0$, let us consider
\begin{align}
& h_{23} (x)=\left[ {\sqrt {2x+2} \left( {x+1} \right)\left( {\sqrt x +1}
\right)} \right]^{2}\times\notag\\
& \times \left( {\begin{array}{l}
 6\left( {x^{2}+1} \right)\left( {\sqrt x -1} \right)^{2}+2x^{3/2}+ \\
 +3x^{2}\left( {\sqrt x +4} \right)+3\sqrt x \left( {4\sqrt x +1} \right) \\
 \end{array}} \right)^{2}-\notag\\
& -\left( {\begin{array}{l}
 6x^{9/2}+40x^{7/2}+84x^{5/2}+40x^{3/2}+6\sqrt x \\
 +6x^{5}+9x^{4}+25x^{3}+25x^{2}+9x+6 \\
 \end{array}} \right)^{2}.\notag
\end{align}

After simplification, we get
\begin{align}
& h_{23} (x)=\left( {\sqrt x -1} \right)^{6}\times\notag\\
& \times \left( {\begin{array}{l}
 36+72\sqrt x +198x+1314x^{5/2}+ \\
 +396x^{3/2}+1314x^{9/2}+396x^{11/2}+ \\
 +1717x^{3}+765x^{2}+2012x^{7/2}+ \\
 +72x^{13/2}+765x^{5}+1717x^{4}+ \\
 +36x^{7}+198x^{6} \\
 \end{array}} \right).\notag
\end{align}

Since $h_{23} (x)>0$, this gives that $k_{23} (x)>0$ Also we have
\[
\beta =\mathop {\sup }\limits_{x\in (0,\infty )}
g_{M_{1} I\mathunderscore K_{0} I} (x)=
\lim\limits_{x\to 1} g_{M_{1} I\mathunderscore K_{0} I} (x)=\textstyle{1
\over {16}}.
\]

\item \textbf{For }$\bf{D_{Jh}^{17} (P\vert \vert Q)\le \textstyle{1 \over 4}D_{K_{0} I}^{35} (P\vert \vert Q)}$\textbf{: }Let us consider \newline $g_{Jh\mathunderscore K_{0} I} (x)={{f}''_{Jh} (x)} \mathord{\left/ {\vphantom {{{f}''_{Jh} (x)} {{f}''_{K_{0} I} (x)}}} \right. \kern-\nulldelimiterspace} {{f}''_{K_{0} I} (x)}$, then we have
\[
g_{Jh\mathunderscore K_{0} I} (x)=\frac{4\sqrt x \left( {x+1}
\right)}{3x^{2}+6x^{3/2}+14x+6\sqrt x +3}
\]
\[
{g}'_{Jh\mathunderscore K_{0} I} (x)=-\frac{2\left( {x-1} \right)\left[
{2(x^{2}+1)+(x-1)^{2}} \right]}{\left( {3x^{2}+6x^{3/2}+14x+6\sqrt x +3}
\right)^{2}}.
\]

This gives
\[
{g}'_{Jh\mathunderscore K_{0} I} (x)\begin{cases}
 {>0,} & {x<1} \\
 {<0,} & {x>1} \\
\end{cases}.
\]
Also we have
\[
\beta =\mathop {\sup }\limits_{x\in (0,\infty )}
g_{Jh\mathunderscore K_{0} I} (x)=
\lim\limits_{x\to 1} g_{Jh\mathunderscore K_{0} I} (x)=\textstyle{1
\over 4}.
\]

\item \textbf{For }$\bf{D_{K_{0} I}^{35} (P\vert \vert Q)\le \textstyle{{16} \over {15}}D_{K_{0} M_{1} }^{34} (P\vert \vert Q)}$\textbf{: }Let us consider $g_{K_{0} I\mathunderscore K_{0} M_{1} } (x)={{f}''_{K_{0} I} (x)} \mathord{\left/ {\vphantom {{{f}''_{K_{0} I} (x)} {{f}''_{K_{0} M_{1} } (x)}}} \right. \kern-\nulldelimiterspace} {{f}''_{K_{0} M_{1} } (x)}$, then we have
\begin{align}
& g_{K_{0} I\mathunderscore K_{0} M_{1} } (x)=\notag\\
& =\frac{\left( {\begin{array}{l}
 \sqrt {2x+2} \left( {\sqrt x -1} \right)^{2}\times \\
 \times \left( {3x^{2}+6x^{3/2}+14x+6\sqrt x +3} \right) \\
 \end{array}} \right)}{\left( {\begin{array}{l}
 \sqrt {2x+2} \left( {x+1} \right)\times \\
 \times \left( {3x^{2}-14x+3} \right)+16x\left( {x^{3/2}+1} \right) \\
 \end{array}} \right)}\notag
\end{align}
and
\begin{align}
& {g}'_{K_{0} I\mathunderscore K_{0} M_{1} } (x)=\notag\\
& =-\frac{16\left( {\sqrt x -1} \right)\times k_{24} (x)}{\left(
{\begin{array}{l}
 \sqrt x \sqrt {2x+2} \left[ {16x\left( {x^{3/2}+1} \right)+} \right. \\
 \left. {+\sqrt {2x+2} \left( {x+1} \right)\left( {3x^{2}-14x+3} \right)}
\right]^{2} \\
 \end{array}} \right)}, \notag
\end{align}
where $k_{24} (x)=k_{23} (x)>0$. This gives
\[
{g}'_{K_{0} I\mathunderscore K_{0} M_{1} } (x)\begin{cases}
 {>0} & {x<1} \\
 {<0} & {x>1} \\
\end{cases}.
\]
Also we have
\[
\beta=\mathop {\sup }\limits_{x\in (0,\infty
)} g_{K_{0} I\mathunderscore K_{0} M_{1} } (x)=
\]
\[
\quad = \lim\limits_{x\to 1} g_{K_{0} I\mathunderscore K_{0} M_{1} }
(x)=\textstyle{{16} \over {15}}.
\]

\item \textbf{For}$\bf{D_{K_{0} M_{1} }^{34} (P\vert \vert Q)\le \textstyle{{15} \over {13}}D_{K_{0} M_{2} }^{33} (P\vert \vert Q)}$\textbf{: }Let us consider $g_{K_{0} M_{1} \mathunderscore K_{0} M_{2} } (x)={{f}''_{K_{0} M_{1} } (x)} \mathord{\left/ {\vphantom {{{f}''_{K_{0} M_{1} } (x)} {{f}''_{K_{0} M_{2} } (x)}}} \right. \kern-\nulldelimiterspace} {{f}''_{K_{0} M_{2} } (x)}$, then we have
\begin{align}
& g_{K_{0} M_{1} \mathunderscore K_{0} M_{2} } (x)=\notag\\
& =\frac{3\left( {\begin{array}{l}
 \sqrt {2x+2} \left( {x+1} \right)\times \\
 \times \left( {3x^{2}-14x+3} \right)+16x\left( {x^{3/2}+1} \right) \\
 \end{array}} \right)}{\left( {\begin{array}{l}
 \sqrt {2x+2} \left( {x+1} \right)\times \\
 \times \left( {9x^{2}-26x+9} \right)+16x\left( {x^{3/2}+1} \right) \\
 \end{array}} \right)}\notag
\end{align}
and
\begin{align}
& {g}'_{K_{0} M_{1} \mathunderscore K_{0} M_{2} } (x)=\notag\\
& =-\frac{288(x-1)(x+1)\times k_{25} (x)}{\left( {\begin{array}{l}
 \sqrt {2x+2} \left( {x+1} \right)\left( {9x^{2}-26x+9} \right)+ \\
 +16x\left( {x^{3/2}+1} \right) \\
 \end{array}} \right)^{2}},\notag
\end{align}
where
\begin{align}
& k_{25} (x)=\left( {2x^{7/2}+x^{5/2}+5x^{2}+5x^{3/2}+x+2} \right)-\notag\\
& -\sqrt {2x+2} \left( {x+1} \right)^{3}.\notag
\end{align}
This gives
\[
{g}'_{K_{0} M_{1} \mathunderscore K_{0} M_{2} } (x)\begin{cases}
 {>0} & {x<1} \\
 {<0} & {x>1} \\
\end{cases},
\]
provided $k_{25} (x)>0$. In order to prove $k_{25} (x)>0$, let us consider a
function
\begin{align}
& h_{25} (x)=\left( {\begin{array}{l}
 2x^{7/2}+x^{5/2}+5x^{2}+ \\
 +5x^{3/2}+x+2 \\
 \end{array}} \right)^{2}- \notag\\
& -\left[ {\sqrt {2x+2} \left( {x+1} \right)^{3}} \right]^{2}.\notag
\end{align}

After simplifications, we get
\begin{align}
& h_{25} (x)=\left( {\sqrt x -1} \right)^{4}\times \notag\\
& \times \left( {\begin{array}{l}
 2x^{5}+8x^{9/2}+10x^{4}+20x^{7/2}+ \\
 +29x^{3}+42x^{5/2}+29x^{2}+ \\
 +20x^{3/2}+8\sqrt x +10x+2 \\
 \end{array}} \right).\notag
\end{align}

Since $h_{25} (x)>0$, this gives that $k_{25} (x)>0$. Also we have
\[
\beta =\mathop {\sup }\limits_{x\in
(0,\infty )} g_{K_{0} M_{1} \mathunderscore K_{0} M_{2} } (x)=
\]
\[
\quad = \lim\limits_{x\to 1} g_{K_{0} M_{1} \mathunderscore K_{0} M_{2} }
(x)=\textstyle{{15} \over {13}}.
\]

\item \textbf{For }$\bf{D_{K_{0} M_{2} }^{33} (P\vert \vert Q)\le \textstyle{{13} \over {12}}D_{K_{0} h\,}^{32} (P\vert \vert Q)}$\textbf{: }Let us consider $g_{K_{0} M_{2} \mathunderscore K_{0} h} (x)={{f}''_{K_{0} M_{2} } (x)} \mathord{\left/ {\vphantom {{{f}''_{K_{0} M_{2} } (x)} {{f}''_{K_{0} h} (x)}}} \right. \kern-\nulldelimiterspace} {{f}''_{K_{0} h} (x)}$, then we have
\begin{align}
& g_{K_{0} M_{2} \mathunderscore K_{0} h} (x)=\notag\\
& =\frac{\left( {\begin{array}{l}
 \sqrt {2x+2} \left( {x+1} \right)\times \\
 \times \left( {9x^{2}-26x+9} \right)+16x\left( {x^{3/2}+1} \right) \\
 \end{array}} \right)}{9\sqrt {2x+2} \left( {x+1} \right)\left( {x-1}
\right)^{2}}\notag
\end{align}
and
\[
{g}'_{K_{0} M_{2} \mathunderscore K_{0} h} (x)=-\frac{8\times k_{26}
(x)}{9\left( {x-1} \right)^{3}\left( {x+1} \right)^{2}\sqrt {2x+2} },
\]
where $k_{26} (x)=k_{25} (x)>0$ This gives
\[
{g}'_{K_{0} M_{2} \mathunderscore K_{0} h} (x)\begin{cases}
 {>0,} & {x<1} \\
 {<0,} & {x>1} \\
\end{cases}.
\]
Also, we have
\[
\beta =\mathop {\sup }\limits_{x\in (0,\infty
)} g_{K_{0} M_{2} \mathunderscore K_{0} h} (x)=
\]
\[
\quad = \lim\limits_{x\to 1} g_{K_{0} M_{2} \mathunderscore K_{0} h}
(x)=\textstyle{{13} \over {12}}.
\]

\item \textbf{For }$\bf{D_{K_{0} h\,}^{32} (P\vert \vert Q)\le \textstyle{4 \over 3}D_{K_{0} M_{3} }^{31} (P\vert \vert Q)}$\textbf{: }Let us consider $g_{K_{0} h\mathunderscore K_{0} M_{3} } (x)={{f}''_{K_{0} h} (x)} \mathord{\left/ {\vphantom {{{f}''_{K_{0} h} (x)} {{f}''_{K_{0} M_{3} } (x)}}} \right. \kern-\nulldelimiterspace} {{f}''_{K_{0} M_{3} } (x)}$, then we have
\begin{align}
& g_{K_{0} h\mathunderscore K_{0} M_{3} } (x)=\notag\\
& =\frac{3\left( {x-1} \right)^{2}\left( {x+1} \right)\sqrt {2x+2} }{\left(
{\begin{array}{l}
 \sqrt {2x+2} \left( {x+1} \right)\times \\
 \times \left( {3x^{2}+2x+3} \right)-16x\left( {x^{{3/2}}+1} \right) \\
 \end{array}} \right)}\notag
\end{align}
and
\begin{align}
& {g}'_{K_{0} h\mathunderscore K_{0} M_{3} } (x)=\notag\\
& =-\frac{24\left( {x-1} \right)\sqrt {2x+2} \times k_{27} (x)}{\left(
{\begin{array}{l}
 \left[ {\sqrt {2x+2} \left( {x+1} \right)\times } \right. \\
 \left. {\times \left( {3x^{2}+2x+3} \right)-16x\left( {x^{(3/2)}+1}
\right)} \right]^{2} \\
 \end{array}} \right)},\notag
\end{align}
where $k_{27} (x)=k_{25} (x)>0$ This gives
\[
{g}'_{K_{0} h\mathunderscore K_{0} M_{3} } (x)\begin{cases}
 {>0,} & {x<1} \\
 {<0,} & {x>1} \\
\end{cases},
\]
Also we have

\[
\beta =\mathop {\sup }\limits_{x\in (0,\infty
)} g_{K_{0} h\mathunderscore K_{0} M_{3} } (x)=
\lim\limits_{x\to 1} g_{K_{0} h\mathunderscore K_{0} M_{3} }
(x)=\textstyle{4 \over 3}.
\]

\item \textbf{For }$\bf{D_{K_{0} h\,}^{32} (P\vert \vert Q)\le \textstyle{3 \over 2}D_{K_{0} J}^{30} (P\vert \vert Q)}$\textbf{: }Let us consider $g_{K_{0} h\mathunderscore K_{0} J} (x)={{f}''_{K_{0} h} (x)} \mathord{\left/ {\vphantom {{{f}''_{K_{0} h} (x)} {{f}''_{K_{0} J} (x)}}} \right. \kern-\nulldelimiterspace} {{f}''_{K_{0} J} (x)}$, then we have
\[
g_{K_{0} h\mathunderscore K_{0} J} (x)=\frac{3\left( {\sqrt x +1}
\right)^{2}}{3x+2\sqrt x +3},
\]
\begin{align}
& {g}'_{K_{0} h\mathunderscore K_{0} J} (x)=\notag\\
& =-\frac{6\left( {x-1} \right)}{\sqrt x \left( {3x+2\sqrt x +3}
\right)^{2}}\begin{cases}
 {>0,} & {x<1} \\
 {<0,} & {x>1} \\
\end{cases}\notag
\end{align}
and
\[
\beta =\mathop {\sup }\limits_{x\in (0,\infty )}
g_{K_{0} h\mathunderscore K_{0} J} (x)=
\lim\limits_{x\to 1} g_{K_{0} h\mathunderscore K_{0} J} (x)=\textstyle{3
\over 2}.
\]

\item \textbf{For }$\bf{D_{K_{0} h\,}^{32} (P\vert \vert Q)\le \textstyle{1 \over 4}D_{\Psi \Delta }^{45} (P\vert \vert Q)}$\textbf{: }Let us consider $g_{K_{0} h\mathunderscore \Psi \Delta } (x)={{f}''_{K_{0} h} (x)} \mathord{\left/ {\vphantom {{{f}''_{K_{0} h} (x)} {{f}''_{\Psi \Delta } (x)}}} \right. \kern-\nulldelimiterspace} {{f}''_{\Psi \Delta } (x)}$, then we have
\[
g_{K_{0} h\mathunderscore \Psi \Delta } (x)=\frac{3\sqrt x \left( {x+1}
\right)^{3}}{4\left( {x^{4}+5x^{3}+12x^{2}+5x+1} \right)},
\]
\begin{align}
& {g}'_{K_{0} h\mathunderscore \Psi \Delta } (x)= \notag\\
& =-\frac{\left( {\begin{array}{l}
 8(x-1)^{3}(x+1)^{2}\times \\
 \times (x^{2}+5x+1) \\
 \end{array}} \right)}{3\sqrt x \left( {\begin{array}{l}
 x^{4}+5x^{3}+ \\
 +12x^{2}+5x+1 \\
 \end{array}} \right)^{2}}\begin{cases}
 {>0,} & {x<1} \\
 {<0,} & {x>1} \\
\end{cases}\notag
\end{align}
and
\[
\beta=\mathop {\sup }\limits_{x\in (0,\infty
)} g_{K_{0} h\mathunderscore \Psi \Delta } (x)=
\lim\limits_{x\to 1} g_{K_{0} h\mathunderscore \Psi \Delta }
(x)=\textstyle{1 \over 4}.
\]

\item \textbf{For }$\bf{D_{K_{0} J\,}^{30} (P\vert \vert Q)\le \textstyle{1 \over 5}D_{\Psi I}^{44} (P\vert \vert Q)}$\textbf{: }Let us consider $g_{K_{0} J\mathunderscore \Psi \Delta } (x)={{f}''_{K_{0} J} (x)} \mathord{\left/ {\vphantom {{{f}''_{K_{0} J} (x)} {{f}''_{\Psi \Delta } (x)}}} \right. \kern-\nulldelimiterspace} {{f}''_{\Psi \Delta } (x)}$, then we have
\[
g_{K_{0} J\mathunderscore \Psi I} (x)=\frac{\sqrt x \left( {3x+2\sqrt x +3}
\right)\left( {x+1} \right)}{4\left( {x^{2}+3x+1} \right)\left( {\sqrt x +1}
\right)^{2}}
\]
and
\begin{align}
& {g}'_{K_{0} J\mathunderscore \Psi I} (x)=-\frac{\left( {\sqrt x -1}
\right)\left( {3x+\sqrt x +3} \right)}{8\sqrt x \left( {x^{2}+3x+1}
\right)^{2}\left( {\sqrt x +1} \right)^{3}}\times \notag\\
& \times \left( {\begin{array}{l}
 \left( {x^{5/2}+1} \right)\left( {\sqrt x +1} \right)+ \\
 +2x\left( {x+1} \right)+x\left( {\sqrt x -1} \right)^{2} \\
 \end{array}} \right).\notag
\end{align}

This gives
\[
{g}'_{K_{0} J\mathunderscore \Psi I} (x)\begin{cases}
 {>0,} & {x<1} \\
 {<0,} & {x>1} \\
\end{cases}.
\]
Also we have
\[
\beta =\mathop {\sup }\limits_{x\in (0,\infty )}
g_{K_{0} J\mathunderscore \Psi I} (x)=
\lim\limits_{x\to 1} g_{K_{0} J\mathunderscore \Psi I} (x)=\textstyle{1
\over 5}.
\]

\item \textbf{For }$\bf{D_{\Psi \Delta }^{45} (P\vert \vert Q)\le \textstyle{6 \over 5}D_{\Psi I}^{44} (P\vert \vert Q)}$\textbf{: }It is true in view of (\ref{eq2}).

\bigskip
\item \textbf{For }$\bf{D_{K_{0} M_{3} }^{31} (P\vert \vert Q)\le \textstyle{3 \over {13}}D_{\Psi M_{1} }^{43} (P\vert \vert Q)}$\textbf{: }Let us consider $g_{K_{0} M_{3} \mathunderscore \Psi M_{1} } (x)={{f}''_{K_{0} M_{3} } (x)} \mathord{\left/ {\vphantom {{{f}''_{K_{0} M_{3} } (x)} {{f}''_{\Psi M_{1} } (x)}}} \right. \kern-\nulldelimiterspace} {{f}''_{\Psi M_{1} } (x)}$, then we have
\begin{align}
& g_{K_{0} M_{3} \mathunderscore \Psi M_{1} } (x)=\notag\\
& =\frac{\sqrt x \left( {\begin{array}{l}
 \sqrt {2x+2} \left( {x+1} \right)\times \\
 \times \left( {3x^{2}+2x+3} \right)-16x\left( {x^{3/2}+1} \right) \\
 \end{array}} \right)}{4\left( {\begin{array}{l}
 \sqrt {2x+2} \left( {x+1} \right)\times \\
 \times \left( {x^{3}-4x^{3/2}+1} \right)+4x^{3/2}\left( {x^{3/2}+1} \right)
\\  \end{array}} \right)}\notag
\end{align}
and
\begin{align}
& {g}'_{K_{0} M_{3} \mathunderscore \Psi M_{1} } (x)=\notag\\
& =-\frac{3\sqrt x \sqrt {2x+2} \left( {\sqrt x -1} \right)\times k_{28}
(x)}{\left( {\begin{array}{l}
 8\sqrt x \left( {\sqrt x -1} \right)^{3}\left( {x+1} \right)^{2}\sqrt
{2x+2} \times \\
 \times \left( {4x^{2}+5x^{3/2}+6x+5\sqrt x +4} \right)^{2} \\
 \end{array}} \right)},\notag
\end{align}
where
\begin{align}
& k_{28} (x)=\sqrt {2x+2} \left( {\sqrt x +1} \right)\times \notag\\
& \times \left( {x+1} \right)^{2}\left( {\begin{array}{l}
 x^{4}+3x^{3}+8x^{5/2}+ \\
 +8x^{2}+8x^{3/2}+3x+1 \\
 \end{array}} \right)-\notag\\
& -4x\left( {\begin{array}{l}
 4x^{5}+6x^{9/2}+6x^{4}+13x^{7/2}+ \\
 +18x^{3}+34x^{5/2}+18x^{2}+ \\
 +13x^{3/2}+6x+6\sqrt x +4 \\
 \end{array}} \right).\notag
\end{align}

This gives
\[
{g}'_{K_{0} M_{3} \mathunderscore \Psi M_{1} } (x)\begin{cases}
 {>0,} & {x<1} \\
 {<0,} & {x>1} \\
\end{cases},
\]
provided $k_{28} (x)>0$. In order to prove $k_{28} (x)>0$, let us consider
\begin{align}
& h_{28} (x)=\left[ {\sqrt {2x+2} \left( {\sqrt x +1} \right)}
\right]^{2}\times \notag\\
& \times \left[ {\left( {x+1} \right)^{2}\left( {\begin{array}{l}
 x^{4}+3x^{3}+8x^{5/2}+ \\
 +8x^{2}+8x^{3/2}+3x+1 \\
 \end{array}} \right)} \right]^{2} -\notag\\
& -\left[ {4x\left( {\begin{array}{l}
 4x^{5}+6x^{9/2}+6x^{4}+13x^{7/2}+ \\
 +18x^{3}+34x^{5/2}+18x^{2}+ \\
 +13x^{3/2}+6x+6\sqrt x +4 \\
 \end{array}} \right)} \right]^{2}.\notag
\end{align}

After simplification, we have
\begin{align}
& h_{28} (x)=2\left( {\sqrt x -1} \right)^{4}\times \notag\\
& \times \left( {\begin{array}{l}
 1+8\sqrt x +45x+1760x^{5/2}+698x^{2}+ \\
 +3616x^{3}+10243x^{4}+18541x^{6}+ \\
 +18541x^{5}+3616x^{8}+10243x^{7}+ \\
 +x^{11}+698x^{9}+45x^{10}+8x^{21/2}+ \\
 +6416x^{7/2}+14712x^{9/2}+208x^{3/2}+ \\
 +14712x^{13/2}+1760x^{17/2}+ \\
 +6416x^{15/2}+208x^{19/2}+20112x^{11/2} \\
 \end{array}} \right).\notag
\end{align}

Since $h_{28} (x)>0$, proving that $k_{28} (x)>0$. Also we have
\[
\beta=\mathop {\sup }\limits_{x\in
(0,\infty )} g_{K_{0} M_{3} \mathunderscore \Psi M_{1} } (x)=
\]
\[
\quad = \lim\limits_{x\to 1} g_{K_{0} M_{3} \mathunderscore \Psi M_{1} }
(x)=\textstyle{3 \over {13}}.
\]

\item \textbf{For }$\bf{D_{\Psi I}^{44} (P\vert \vert Q)\le \textstyle{{40} \over {39}}D_{\Psi M_{1} }^{43} (P\vert \vert Q)}$\textbf{: }Let us consider $g_{\Psi I\mathunderscore \Psi M_{1} } (x)={{f}''_{\Psi I} (x)} \mathord{\left/ {\vphantom {{{f}''_{\Psi I} (x)} {{f}''_{\Psi M_{1} } (x)}}} \right. \kern-\nulldelimiterspace} {{f}''_{\Psi M_{1} } (x)}$, then we have
\begin{align}
& g_{\Psi I\mathunderscore \Psi M_{1} } (x)= \notag\\
& =\frac{\left( {x^{2}+3x+1} \right)\left( {x-1} \right)^{2}\sqrt {2x+2}
}{\left( {\begin{array}{l}
 \sqrt {2x+2} \left( {x^{3}-4x^{3/2}+1} \right)\times \\
 \times \left( {x+1} \right)+4x^{3/2}\left( {x^{3/2}+1} \right) \\
 \end{array}} \right)}\notag
\end{align}
and
\begin{align}
& {g}'_{\Psi I\mathunderscore \Psi M_{1} } (x)=\notag\\
& =-\frac{2x^{3/2}\sqrt {2x+2} (x-1)\times k_{29} (x)}{\left( {x+1}
\right)\left( {\begin{array}{l}
 \sqrt {2x+2} \left( {x^{3}-4x^{3/2}+1} \right)\times \\
 \times \left( {x+1} \right)+4x^{3/2}\left( {x^{3/2}+1} \right) \\
 \end{array}} \right)^{2}},\notag
\end{align}
where
\begin{align}
& k_{29} (x)=\sqrt {2x+2} \left( {x+1} \right)\times \notag\\
& \times \left( {\begin{array}{l}
 3\left( {\sqrt x -1} \right)^{2}\left( {x^{2}+1} \right)\left( {x+1}
\right)+ \\
 +2x^{7/2}+3x^{3}+10x^{2}+3x+2\sqrt x \\
 \end{array}} \right)- \notag\\
& -\left( {\begin{array}{l}
 3x^{11/2}+6x^{9/2}+6x^{4}+10x^{7/2}+15x^{3}+ \\
 +15x^{5/2}+10x^{2}+6x^{3/2}+6x+3 \\
 \end{array}} \right).\notag
\end{align}

This gives
\[
{g}'_{\Psi I\mathunderscore \Psi M_{1} } (x)\begin{cases}
 {>0} & {x<1} \\
 {<0} & {x>1} \\
\end{cases},
\]
provides $k_{29} (x)>0$. In order to prove $k_{29} (x)>0$, let us consider
\begin{align}
& h_{29} (x)=\left[ {\sqrt {2x+2} \left( {x+1} \right)} \right]^{2}\times \notag\\
& \times \left( {\begin{array}{l}
 3\left( {\sqrt x -1} \right)^{2}\left( {x^{2}+1} \right)\left( {x+1}
\right)+ \\
 +2x^{7/2}+3x^{3}+10x^{2}+3x+2\sqrt x \\
 \end{array}} \right)^{2}- \notag\\
& -\left( {\begin{array}{l}
 3x^{11/2}+6x^{9/2}+6x^{4}+10x^{7/2}+15x^{3}+ \\
 +15x^{5/2}+10x^{2}+6x^{3/2}+6x+3 \\
 \end{array}} \right)^{2}. \notag
\end{align}

After simplifications, we have
\[
h_{29} (x)=\left( {\sqrt x -1} \right)^{4}\times s(x),
\]
where
\begin{align}
& s(x)=\left( {x+1} \right)\left( {\begin{array}{l}
 9+56x+179x^{2}+360x^{3}+ \\
 +491x^{4}+491x^{5}+360x^{6}+ \\
 +179x^{7}+56x^{8}+9x^{9} \\
 \end{array}} \right)- \notag\\
& -2\sqrt x \left( {\begin{array}{l}
 6+32x+81x^{2}+144x^{3}+179x^{4}+ \\
 +144x^{5}+81x^{6}+32x^{7}+6x^{8} \\
 \end{array}} \right). \notag
\end{align}

We know that $x+1\ge 2 \sqrt x$, this allows us to conclude that
\begin{align}
& s(x)\ge  2 \sqrt x \left[ {\left( {\begin{array}{l}
 9+56x+179x^{2}+360x^{3}+ \\
 +491x^{4}+491x^{5}+360x^{6}+ \\
 +179x^{7}+56x^{8}+9x^{9} \\
 \end{array}} \right)} \right.- \notag\\
& \left. {-\left( {\begin{array}{l}
 6+32x+81x^{2}+144x^{3}+179x^{4}+ \\
 +144x^{5}+81x^{6}+32x^{7}+6x^{8} \\
 \end{array}} \right)} \right]. \notag
\end{align}

After simplifications, we have
\[
s(x)\ge 2 \sqrt x \left( {\begin{array}{l}
 3x^{8}+15x^{7}+51x^{6}+ \\
 +84x^{5}+84x^{4}+84x^{3}+ \\
 +51x^{2}+15x+3 \\
 \end{array}} \right)\ge 0.
\]
This gives us $h_{29} (x)>0$. Hence $k_{29} (x)>0$.
Also we have
\[
\beta = \mathop {\sup }\limits_{x\in (0,\infty
)} g_{\Psi I\mathunderscore \Psi M_{1} } (x)
=\lim\limits_{x\to 1} g_{\Psi I\mathunderscore \Psi M_{1} }
(x)=\textstyle{{40} \over {39}}.
\]

\item \textbf{For }$\bf{D_{\Psi M_{1} }^{43} (P\vert \vert Q)\le \textstyle{{39} \over {37}}D_{\Psi M_{2} }^{42} (P\vert \vert Q)}$\textbf{: }Let us consider $g_{\Psi M_{1} \mathunderscore \Psi M_{2} } (x)={{f}''_{\Psi M_{1} } (x)} \mathord{\left/ {\vphantom {{{f}''_{\Psi M_{1} } (x)} {{f}''_{\Psi M_{2} } (x)}}} \right. \kern-\nulldelimiterspace} {{f}''_{\Psi M_{2} } (x)}$, then we have
\begin{align}
& g_{\Psi M_{1} \mathunderscore \Psi M_{2} } (x)= \notag\\
& =\frac{3\left( {\begin{array}{l}
 \sqrt {2x+2} \left( {x^{3}-4x^{3/2}+1} \right)\times \\
 \times \left( {x+1} \right)+4x^{3/2}\left( {x^{3/2}+1} \right) \\
 \end{array}} \right)}{\left( {\begin{array}{l}
 \sqrt {2x+2} \left( {3x^{3}-8x^{3/2}+3} \right)\times \\
 \times \left( {x+1} \right)+4x^{2}\left( {x^{3/2}+1} \right) \\
 \end{array}} \right)}\notag
\end{align}
and
\begin{align}
& {g}'_{\Psi M_{1} \mathunderscore \Psi M_{2} } (x)=\notag\\
& =-\frac{36\sqrt x \left( {x^{3/2}-1} \right)\left( {x+1} \right)\times
k_{30} (x)}{\left( {\begin{array}{l}
 \sqrt {2x+2} \left[ {\sqrt {2x+2} \left( {x+1} \right)\times } \right. \\
 \left. {\times \left( {3x^{3}-8x^{3/2}+3} \right)+4x^{3/2}\left(
{x^{3/2}+1} \right)} \right]^{2} \\
 \end{array}} \right)},\notag
\end{align}
where
\begin{align}
& k_{30} (x)=2\left( {x^{4}+3x^{5/2}+3x^{3/2}+1} \right)^{2}- \notag\\
& -\sqrt {2x+2} \left( {x^{3/2}+1} \right)\left( {x+1} \right)^{2}. \notag
\end{align}

This gives
\[
{g}'_{\Psi M_{1} \mathunderscore \Psi M_{2} } (x)\begin{cases}
 {>0,} & {x<1} \\
 {<0,} & {x>1} \\
\end{cases},
\]
provided $k_{30} (x)>0$. In order to prove $k_{30} (x)>0$, let us consider
\begin{align}
& h_{30} (x)=\left[ {2\left( {x^{4}+3x^{5/2}+3x^{3/2}+1} \right)} \right]^{2}-\notag\\
& -\left[ {\sqrt {2x+2} \left( {x^{3/2}+1} \right)\left( {x+1} \right)^{2}}
\right]^{2}. \notag
\end{align}

After simplifications, we have
\begin{align}
& h_{30} (x)=2\left( {\sqrt x -1} \right)^{4}\times \notag\\
& \times \left( {\begin{array}{l}
 x^{6}+4x^{11/2}+5x^{5}+10x^{9/2}+15x^{4}+ \\
 +18x^{7/2}+24x^{3}+18x^{5/2}+15x^{2}+ \\
 +10x^{3/2}+5x+4\sqrt x +1 \\
 \end{array}} \right).\notag
\end{align}

Since $h_{30} (x)>0$, this gives that $k_{30} (x)>0$. Also we have
\[
\beta =\mathop {\sup }\limits_{x\in
(0,\infty )} g_{\Psi M_{1} \mathunderscore \Psi M_{2} } (x)=
\lim\limits_{x\to 1} g_{\Psi M_{1} \mathunderscore \Psi M_{2} }
(x)=\textstyle{{39} \over {37}}.
\]

\item \textbf{For }$D_{\Psi M_{2} }^{42} (P\vert \vert Q)\le \textstyle{{37} \over {36}}D_{\Psi h}^{41} (P\vert \vert Q)$\textbf{: }Let us consider $g_{\Psi M_{2} \mathunderscore \Psi h} (x)={{f}''_{\Psi M_{2} } (x)} \mathord{\left/ {\vphantom {{{f}''_{\Psi M_{2} } (x)} {{f}''_{\Psi h} (x)}}} \right. \kern-\nulldelimiterspace} {{f}''_{\Psi h} (x)}$, then we have
\begin{align}
& g_{\Psi M_{2} \mathunderscore \Psi h} (x)= \notag\\
& =\frac{\left( {\begin{array}{l}
 \sqrt {2x+2} \left( {3x^{3}-8x^{3/2}+3} \right)\times \\
 \times \left( {x+1} \right)+4x^{3/2}\left( {x^{3/2}+1} \right) \\
 \end{array}} \right)}{3\left[ \sqrt x {\left( {x+1} \right)\sqrt {2x+2} \left(
{x^{3/2}-1} \right)^{2}} \right]} \notag
\end{align}
and
\[
{g}'_{\Psi M_{2} \mathunderscore \Psi h} (x)=-\frac{\sqrt x \times k_{31}
(x)}{\sqrt {2x+2} \left( {x+1} \right)^{2}\left( {x^{3/2}-1} \right)^{3}},
\]
where $k_{31} (x)=k_{30} (x)>0$.This gives
\[
{g}'_{\Psi M_{2} \mathunderscore \Psi h} (x)\begin{cases}
 {>0,} & {x<1} \\
 {<0,} & {x>1} \\
\end{cases}.
\]
Also we have
\[
\beta =\mathop {\sup }\limits_{x\in (0,\infty
)} g_{\Psi M_{2} \mathunderscore \Psi h} (x)=
\lim\limits_{x\to 1} g_{\Psi M_{2} \mathunderscore \Psi h}
(x)=\textstyle{{37} \over {36}}.
\]

\item \textbf{For }$\bf{D_{\Psi h}^{41} (P\vert \vert Q)\le \textstyle{9 \over 8}D_{\Psi J}^{39} (P\vert \vert Q)}$\textbf{: }It is true in view of (\ref{eq2}).

\bigskip
\item \textbf{For}$\bf{D_{\Psi h}^{41} (P\vert \vert Q)\le \textstyle{{12} \over {11}}D_{\Psi M_{3} }^{40} (P\vert \vert Q)}$\textbf{: }Let us consider $g_{\Psi h\mathunderscore \Psi M_{3} } (x)={{f}''_{\Psi h} (x)} \mathord{\left/ {\vphantom {{{f}''_{\Psi h} (x)} {{f}''_{\Psi M_{3} } (x)}}} \right. \kern-\nulldelimiterspace} {{f}''_{\Psi M_{3} } (x)}$, then we have
\begin{align}
& g_{\Psi h\mathunderscore \Psi M_{3} } (x)= \notag\\
& =\frac{\left( {x^{3/2}-1} \right)^{2}\left( {x+1} \right)\sqrt {2x+2}
}{\left( {\begin{array}{l}
 \sqrt {2x+2} \left( {x^{3}+1} \right)\times \\
 \times \left( {x+1} \right)-4x^{3/2}\left( {x^{3/2}+1} \right) \\
 \end{array}} \right)} \notag
\end{align}
and
\begin{align}
& {g}'_{\Psi h\mathunderscore \Psi M_{3} } (x)= \notag\\
& =-\frac{3x^{3/2}\sqrt {2x+2} \left( {x^{3/2}-1} \right)\times k_{32}
(x)}{\left( {\begin{array}{l}
 \sqrt {2x+2} \left( {x^{3}+1} \right)\times \\
 \times \left( {x+1} \right)-4x^{3/2}\left( {x^{3/2}+1} \right) \\
 \end{array}} \right)^{2}},\notag
\end{align}

where $k_{32} (x)=k_{30} (x)>0$. This gives
\[
{g}'_{\Psi h\mathunderscore \Psi M_{3} } (x)\begin{cases}
 {>0,} & {x<1} \\
 {<0,} & {x>1} \\
\end{cases}.
\]
Also we have
\[
\beta =\mathop {\sup }\limits_{x\in (0,\infty
)} g_{\Psi h\mathunderscore \Psi M_{3} } (x)=
\lim\limits_{x\to 1} g_{\Psi h\mathunderscore \Psi M_{3} }
(x)=\textstyle{{12} \over {11}}.
\]

\item \textbf{For }$\bf{D_{\Psi M_{3} }^{40} (P\vert \vert Q)\le \textstyle{{11} \over 8}D_{\Psi K_{0} }^{37} (P\vert \vert Q)}$\textbf{: }Let us consider $g_{\Psi M_{3} \mathunderscore \Psi K_{0} } (x)={{f}''_{\Psi M_{3} } (x)} \mathord{\left/ {\vphantom {{{f}''_{\Psi M_{3} } (x)} {{f}''_{\Psi K_{0} } (x)}}} \right. \kern-\nulldelimiterspace} {{f}''_{\Psi K_{0} } (x)}$, then we have
\begin{align}
& g_{\Psi M_{3} \mathunderscore \Psi K_{0} } (x)= \notag\\
& =\frac{4\left( {\begin{array}{l}
 \sqrt {2x+2} \left( {x^{3}+1} \right)\times \\
 \times \left( {x+1} \right)-4x^{3/2}\left( {x^{3/2}+1} \right) \\
 \end{array}} \right)}{\left( {\begin{array}{l}
 \left( {x+1} \right)\sqrt {2x+2} \left( {\sqrt x -1} \right)^{2}\times \\
 \times \left( {4x^{2}+5x^{3/2}+6x+5\sqrt x +4} \right) \\
 \end{array}} \right)} \notag
\end{align}
and
\begin{align}
& {g}'_{\Psi M_{3} \mathunderscore \Psi K_{0} } (x)= \notag\\
& =-\frac{6\times k_{33} (x)}{\left( {\begin{array}{l}
 \left( {\sqrt x -1} \right)^{3}\sqrt {2x+2} \sqrt x \left( {x+1}
\right)^{2}\times \\
 \times \left( {4x^{2}+5x^{3/2}+6x+5\sqrt x +4} \right)^{2} \\
 \end{array}} \right)},\notag
\end{align}
where
\begin{align}
& k_{33} (x)=\left( {\begin{array}{l}
 \sqrt {2x+2} \left( {\sqrt x +1} \right)\left( {x+1} \right)^{2}\times \\
 \times \left( {1+3x+8x^{2}+3x^{3}+x^{4}} \right) \\
 \end{array}} \right)- \notag\\
& -4x\left( {\begin{array}{l}
 4+2x^{4}+2x+6x^{2}+11x^{3/2}+2\sqrt x + \\
 +14x^{5/2}+6x^{3}+11x^{7/2}+4x^{5}+2x^{9/2} \\
 \end{array}} \right).\notag
\end{align}

This gives
\[
{g}'_{\Psi M_{3} \mathunderscore \Psi K_{0} } (x)\begin{cases}
 {>0} & {x<1} \\
 {<0} & {x>1} \\
\end{cases},
\]
provided $k_{33} (x)>0$. In order to prove $k_{33} (x)>0$, let us consider
\begin{align}
& h_{33} (x)=\left[ {\begin{array}{l}
 \sqrt {2x+2} \left( {\sqrt x +1} \right)\left( {x+1} \right)^{2}\times \\
 \times \left( {1+3x+8x^{2}+3x^{3}+x^{4}} \right) \\
 \end{array}} \right]^{2}- \notag\\
& -\left[ {4x\left( {\begin{array}{l}
 4+2x^{4}+2x+6x^{2}+11x^{3/2}+ \\
 +2\sqrt x +14x^{5/2}+6x^{3}+ \\
 +11x^{7/2}+4x^{5}+2x^{9/2} \\
 \end{array}} \right)} \right]^{2}.\notag
\end{align}

After simplifications, we have
\begin{align}
& h_{33} (x)=\left( {\sqrt x -1} \right)^{4}\times \notag\\
& \times \left( {\begin{array}{l}
 1+30x+2144x^{5}+6\sqrt x +231x^{2}+ \\
 +698x^{3}+380x^{5/2}+110x^{3/2}+ \\
 +1056x^{7/2}+1475x^{4}+1874x^{9/2}+ \\
 +2298x^{6}+698x^{9}+30x^{11}+231x^{10}+ \\
 +2144x^{7}+380x^{19/2}+1056x^{17/2}+ \\
 +1475x^{8}+x^{12}+6x^{23/2}+110x^{21/2}+ \\
 +2366x^{13/2}+2366x^{11/2}+1874x^{15/2} \\
 \end{array}} \right). \notag
\end{align}

Since $h_{33} (x)>0$, this gives that $k_{33} (x)>0$. Also we have
\[
\beta = \mathop {\sup }\limits_{x\in (0,\infty )} g_{\Psi M_{3} \mathunderscore \Psi
K_{0} } (x)=\lim\limits_{x\to 1} g_{\Psi M_{3} \mathunderscore \Psi K_{0}
} (x)=\textstyle{{11} \over 8}.
\]

\item \textbf{For }$\bf{D_{\Psi J}^{39} (P\vert \vert Q)\le \textstyle{4 \over 3}D_{\Psi K_{0} }^{37} (P\vert \vert Q)}$\textbf{: }Let us consider $g_{\Psi J\mathunderscore \Psi K_{0} } (x)={{f}''_{\Psi J} (x)} \mathord{\left/ {\vphantom {{{f}''_{\Psi J} (x)} {{f}''_{\Psi K_{0} } (x)}}} \right. \kern-\nulldelimiterspace} {{f}''_{\Psi K_{0} } (x)}$, then we have
\[
g_{\Psi J\mathunderscore \Psi K_{0} } (x)=\frac{4\left( {\sqrt x +1}
\right)^{2}\left( {x+1} \right)}{4x^{2}+5x^{3/2}+6x+5\sqrt x +4},
\]
\begin{align}
& {g}'_{\Psi J\mathunderscore \Psi K_{0} } (x)=\notag\\
& =-\frac{2\left( {x-1} \right)\left( {\begin{array}{l}
 3x^{2}+4x^{3/2}+ \\
 +10x+4\sqrt x +3 \\
 \end{array}} \right)}{\sqrt x \left( {\begin{array}{l}
 4x^{2}+5x^{3/2}+ \\
 +6x+5\sqrt x +4 \\
 \end{array}} \right)^{2}}\begin{cases}
 {>0,} & {x<1} \\
 {<0,} & {x>1} \\
\end{cases}\notag
\end{align}
and
\[
\beta =\mathop {\sup }\limits_{x\in (0,\infty )} g_{\Psi J\mathunderscore \Psi K_{0}
} (x)=\lim\limits_{x\to 1} g_{\Psi J\mathunderscore \Psi K_{0} }
(x)=\textstyle{4 \over 3}.
\]

\item \textbf{For }$\bf{D_{\Psi K_{0} }^{37} (P\vert \vert Q)\le D_{\Psi T}^{38} (P\vert \vert Q)}$\textbf{: }It is true in view of pyramid.

\bigskip
\item \textbf{For }$\bf{D_{\Psi K_{0} }^{37} (P\vert \vert Q)\le \textstyle{1 \over 3}D_{F\Delta }^{55} (P\vert \vert Q)}$\textbf{: }Let us consider $g_{\Psi K_{0} \mathunderscore F\Delta } (x)={{f}''_{\Psi K_{0} } (x)} \mathord{\left/ {\vphantom {{{f}''_{\Psi K_{0} } (x)} {{f}''_{F\Delta } (x)}}} \right. \kern-\nulldelimiterspace} {{f}''_{F\Delta } (x)}$, then we have
\begin{align}
& g_{\Psi K_{0} \mathunderscore F\Delta } (x)= \notag\\
& =\frac{4\sqrt x \left( {\begin{array}{l}
 4x^{2}+5x^{3/2}+ \\
 +6x+5\sqrt x +4 \\
 \end{array}} \right)\left( {x+1} \right)^{3}}{\left( {\begin{array}{l}
 15+90x+257x^{2}+492x^{3}+ \\
 +257x^{4}90x^{5}+15x^{6}+30\sqrt x + \\
 +150x^{3/2}+364x^{5/2}+ \\
 +364x^{7/2}+150x^{9/2}+30x^{11/2} \\
 \end{array}} \right)},\notag
 \end{align}
\begin{align}
& {g}'_{\Psi K_{0} \mathunderscore F\Delta } (x)= \notag\\
& =-\frac{12\left( {x+1} \right)^{2}\left( {x-1} \right)^{3}\left( {\sqrt x
-1} \right)^{2}}{\sqrt x \left( {\begin{array}{l}
 15+90x+257x^{2}+492x^{3}+ \\
 +257x^{4}+90x^{5}+15x^{6}+ \\
 +30\sqrt x +150x^{3/2}+364x^{5/2}+ \\
 +364x^{7/2}+150x^{9/2}+30x^{11/2} \\
 \end{array}} \right)^{2}}\times \notag\\
& \times \left( {\begin{array}{l}
 10+110x+486x^{2}+ \\
 +740x^{3}+486x^{4}+ \\
 +110x^{5}+10x^{6}+\,25\sqrt x + \\
 +205x^{3/2}+586x^{5/2}+ \\
 +586x^{7/2}+205x^{9/2}+25x^{11/2} \\
 \end{array}} \right)\begin{cases}
 {>0,} & {x<1} \\
 {<0,} & {x>1} \\
\end{cases} \notag
\end{align}
and
\[
\beta =\mathop {\sup }\limits_{x\in (0,\infty )} g_{\Psi K_{0} \mathunderscore
F\Delta } (x)=\lim\limits_{x\to 1} g_{\Psi K_{0} \mathunderscore F\Delta
} (x)=\textstyle{1 \over 3}.
\]

\item \textbf{For }$\bf{D_{\Psi T}^{38} (P\vert \vert Q)\le \textstyle{3 \over 8}D_{FI}^{54} (P\vert \vert Q)}$\textbf{: }Let us consider \newline $g_{\Psi T\mathunderscore FI} (x)={{f}''_{\Psi T} (x)} \mathord{\left/ {\vphantom {{{f}''_{\Psi T} (x)} {{f}''_{FI} (x)}}} \right. \kern-\nulldelimiterspace} {{f}''_{FI} (x)}$, then we have
\begin{align}
& g_{\Psi T\mathunderscore FI} (x)= \notag\\
& =\frac{16\sqrt x \left( {x^{2}+x+1} \right)\left( {\sqrt x +1}
\right)^{2}}{\left( {\begin{array}{l}
 15x^{4}+30x^{7/2}+60x^{3}+90x^{5/2}+ \\
 +122x^{2}+90x^{3/2}+60x+30\sqrt x +15 \\
 \end{array}} \right)} \notag
\end{align}
and
\begin{align}
& {g}'_{\Psi T\mathunderscore FI} (x)= \notag\\
& =-\frac{8\left( {x-1} \right)}{\sqrt x \left( {\begin{array}{l}
 15x^{4}+30x^{7/2}+60x^{3}+ \\
 +90x^{5/2}+122x^{2}+90x^{3/2}+ \\
 +60x+30\sqrt x +15 \\
 \end{array}} \right)^{2}}\times \notag\\
& \times \left( {\begin{array}{l}
 15x^{6}+26x^{11/2}+40x^{5}+86x^{9/2}+9x^{4}+ \\
 +9x^{2}+86x^{3/2}+40x+26\sqrt x +15+ \\
 +(x^{2}-1)^{2}\sqrt x \left( {34x+65\sqrt x +34} \right) \\
 \end{array}} \right). \notag
\end{align}

This gives
\[
g_{\Psi T\mathunderscore FI} (x)\begin{cases}
 {>0,} & {x<1} \\
 {<0,} & {x>1} \\
\end{cases}.
\]
Also, we have
\[
\beta =\mathop {\sup }\limits_{x\in (0,\infty )} g_{\Psi T\mathunderscore FI}
(x)=\lim\limits_{x\to 1} g_{\Psi T\mathunderscore FI} (x)=\textstyle{3 \over 8}.
\]

\item \textbf{For }$\bf{D_{F\Delta }^{55} (P\vert \vert Q)\le \textstyle{9 \over 8}D_{FI}^{54} (P\vert \vert Q)}$\textbf{: }Let us consider \newline $g_{\Psi \Delta \mathunderscore FI} (x)={{f}''_{\Psi \Delta } (x)} \mathord{\left/ {\vphantom {{{f}''_{\Psi \Delta } (x)} {{f}''_{FI} (x)}}} \right. \kern-\nulldelimiterspace} {{f}''_{FI} (x)}$, then we have
\begin{align}
& g_{F\Delta \mathunderscore FI} (x)= \notag\\
& =\frac{\left( {\begin{array}{l}
 15+90x+257x^{2}+492x^{3}+ \\
 +257x^{4}+90x^{5}+15x^{6}+ \\
 +30\sqrt x +150x^{3/2}+364x^{5/2}+ \\
 +364x^{7/2}+150x^{9/2}+30x^{11/2} \\
 \end{array}} \right)}{\left( {x+1} \right)^{2}\left( {\begin{array}{l}
 15x^{4}+30x^{7/2}+60x^{3}+ \\
 +90x^{5/2}+122x^{2}+90x^{3/2}+ \\
 +60x+30\sqrt x +15 \\
 \end{array}} \right)} \notag
\end{align}
\begin{align}
& {g}'_{F\Delta \mathunderscore FI} (x)= \notag\\
& =-\frac{32x^{3/2}\left( {x-1} \right)}{\left( {x+1} \right)^{3}\left(
{\begin{array}{l}
 15x^{4}+30x^{7/2}+60x^{3}+ \\
 +90x^{5/2}+122x^{2}+90x^{3/2}+ \\
 +60x+30\sqrt x +15 \\
 \end{array}} \right)^{2}}\times \notag\\
& \times \left( {\begin{array}{l}
 75x^{5}+300x^{9/2}+675x^{4}+1200x^{7/2}+ \\
 +1682x^{3}+1928x^{5/2}+1682x^{2}+ \\
 +1200x^{3/2}+675x+300\sqrt x +75 \\
 \end{array}} \right). \notag
\end{align}

This gives
\[
{g}'_{F\Delta \mathunderscore FI} (x)\begin{cases}
 {>0,} & {x<1} \\
 {<0,} & {x>1} \\
\end{cases}.
\]
Also, we have
\[
\beta =\mathop {\sup }\limits_{x\in (0,\infty )} g_{F\Delta \mathunderscore FI}
(x)=\lim\limits_{x\to 1} g_{F\Delta \mathunderscore FI} (x)=\textstyle{9
\over 8}.
\]

\item \textbf{For }$\bf{D_{FI}^{54} (P\vert \vert Q)\le \textstyle{{64} \over {63}}D_{FM_{1} }^{53} (P\vert \vert Q)}$\textbf{: }Let us consider $g_{FI\mathunderscore FM_{1} } (x)={{f}''_{FI} (x)} \mathord{\left/ {\vphantom {{{f}''_{FI} (x)} {{f}''_{FM_{1} } (x)}}} \right. \kern-\nulldelimiterspace} {{f}''_{FM_{1} } (x)}$, then we have
\begin{align}
& g_{FI\mathunderscore FM_{1} } (x)= \notag\\
& =\frac{\left( {\sqrt x -1} \right)^{2}\sqrt {2x+2} }{\left(
{\begin{array}{l}
 \sqrt {2x+2} \left( {15x^{4}-62x^{2}+15} \right)\times \\
 \times \left( {x+1} \right)+64x^{2}\left( {x^{3/2}+1} \right) \\
 \end{array}} \right)}\times \notag\\
& \times \left( {\begin{array}{l}
 15x^{4}+30x^{7/2}+60x^{3}+90x^{5/2}+ \\
 +122x^{2}+90x^{3/2}+60x+30\sqrt x +15 \\
 \end{array}} \right) \notag
\end{align}
and
\begin{align}
& {g}'_{FI\mathunderscore FM_{1} } (x)= \notag\\
& =-\frac{64\left( {\sqrt x -1} \right)\times k_{34} (x)}{\left(
{\begin{array}{l}
 \sqrt {2x+2} \left( {15x^{4}-62x^{2}+15} \right)\times \\
 \times \left( {x+1} \right)+64x^{2}\left( {x^{3/2}+1} \right) \\
 \end{array}} \right)^{2}}, \notag
\end{align}
where
\begin{align}
& k_{34} (x)=\sqrt {2x+2} \,\times \notag\\
& \times \left( {\begin{array}{l}
 -15x^{6}+242x^{5/2}+165x+ \\
 +302x^{3}+45x^{5}+242x^{4}+ \\
 +225x^{2}+225x^{9/2}+60+ \\
 +45x^{3/2}+60x^{13/2}-15\sqrt x + \\
 +302x^{7/2}+165x^{11/2} \\
 \end{array}} \right)- \notag\\
& -\left( {\begin{array}{l}
 60+135x+255x^{5}+60\sqrt x +255x^{2}+ \\
 +478x^{3}+484x^{5/2}+240x^{3/2}+ \\
 +672x^{7/2}+478x^{4}+484x^{9/2}+ \\
 +135x^{6}+60x^{13/2}+240x^{11/2}+60x^{7} \\
 \end{array}} \right). \notag
\end{align}

This gives
\[
{g}'_{FI\mathunderscore FM_{1} } (x)\begin{cases}
 {>0,} & {x<1} \\
 {<0,} & {x>1} \\
\end{cases},
\]
provided $k_{34} (x)>0$. In order to prove $k_{34} (x)>0$, let us consider
\begin{align}
& h_{34} (x)=\left( {\sqrt {2x+2} } \right)^{2}\times \notag\\
& \times \left( {\begin{array}{l}
 -15x^{6}+242x^{5/2}+165x+302x^{3}+ \\
 +45x^{5}+242x^{4}+225x^{2}+225x^{9/2}+ \\
 +60+45x^{3/2}+60x^{13/2}-15\sqrt x + \\
 +302x^{7/2}+165x^{11/2} \\
 \end{array}} \right)^{2}- \notag\\
& -\left( {\begin{array}{l}
 60+135x+255x^{5}+60\sqrt x +255x^{2}+ \\
 +478x^{3}+484x^{5/2}+240x^{3/2}+ \\
 +672x^{7/2}+478x^{4}+484x^{9/2}+135x^{6}+ \\
 +60x^{13/2}+240x^{11/2}+60x^{7} \\
 \end{array}} \right)^{2}. \notag
\end{align}

After simplification, we get
\begin{align}
& h_{34} (x)=\left( {\sqrt x -1} \right)^{4}\times \notag\\
& \times \left( {\begin{array}{l}
 3600+441250x^{6}+20250x+ \\
 +396832x^{5}+3600\sqrt x +61875x^{2}+ \\
 +147270x^{3}+90000x^{5/2}+26100x^{3/2}+ \\
 +208260x^{7/2}+277740x^{4}+ \\
 +352860x^{9/2}+20250x^{11}+ \\
 +451980x^{11/2}+451980x^{13/2}+ \\
 +352860x^{15/2}+3600x^{23/2}+3600x^{12}+ \\
 +147270x^{9}+396832x^{7}+ \\
 +26100x^{21/2}+277740x^{8}+ \\
 +208260x^{17/2}+90000x^{19/2}+61875x^{10} \\
 \end{array}} \right). \notag
\end{align}

Since $h_{34} (x)>0$, this gives that $k_{34} (x)>0$. Also we have
\[
\beta =\mathop {\sup }\limits_{x\in (0,\infty )} g_{FI\mathunderscore FM_{1} }
(x)=\lim\limits_{x\to 1} g_{FI\mathunderscore FM_{1} }
(x)=\textstyle{{64} \over {63}}.
\]

\item \textbf{For }$\bf{D_{FM_{1} }^{53} (P\vert \vert Q)\le \textstyle{{63} \over {61}}D_{FM_{2} }^{52} (P\vert \vert Q)}$\textbf{: }Let us consider $g_{FM_{1} \mathunderscore FM_{2} } (x)={{f}''_{FM_{1} } (x)} \mathord{\left/ {\vphantom {{{f}''_{FM_{1} } (x)} {{f}''_{FM_{2} } (x)}}} \right. \kern-\nulldelimiterspace} {{f}''_{FM_{2} } (x)}$, then we have
\begin{align}
& g_{FM_{1} \mathunderscore FM_{2} } (x)= \notag\\
& =\frac{3\left( {\begin{array}{l}
 \sqrt {2x+2} \left( {15x^{4}-62x^{2}+15} \right)\times \\
 \times \left( {x+1} \right)+64x^{2}\left( {x^{3/2}+1} \right) \\
 \end{array}} \right)}{\left( {\begin{array}{l}
 \sqrt {2x+2} \left( {45x^{4}-122x^{2}+45} \right)\times \\
 \times \left( {x+1} \right)+64x^{2}\left( {x^{3/2}+1} \right) \\
 \end{array}} \right)} \notag
\end{align}
and
\begin{align}
& {g}'_{FM_{1} \mathunderscore FM_{2} } (x)=-\frac{5760x\left( {x-1}
\right)\left( {x+1} \right)^{3}}{\sqrt {2x+2} }\times \notag\\
& \times \frac{k_{35} (x)}{\left( {\begin{array}{l}
 \sqrt {2x+2} \left( {45x^{4}-122x^{2}+45} \right)\times \\
 \times \left( {x+1} \right)+64x^{2}\left( {x^{3/2}+1} \right) \\
 \end{array}} \right)^{2}}, \notag
\end{align}
where
\begin{align}
& k_{35} (x)=\left( {\sqrt x +1} \right)\times \notag\\
& \times \left[ {x^{2}\left( {2\sqrt x -1} \right)^{2}+6x^{3/2}+\left( {\sqrt
x -2} \right)^{2}} \right]- \notag\\
& -2\sqrt {2x+2} \left( {x^{2}+1} \right)\left( {x+1} \right). \notag
\end{align}

This gives
\[
{g}'_{FM_{1} \mathunderscore FM_{2} } (x)\begin{cases}
 {>0,} & {x<1} \\
 {<0,} & {x>1} \\
\end{cases},
\]
provided $k_{35} (x)>0$. In order to prove $k_{35} (x)>0$, let us consider
\begin{align}
& h_{35} (x)=\left( {\sqrt x +1} \right)^{2}\times \notag\\
& \times \left[ {x^{2}\left( {2\sqrt x -1} \right)^{2}+6x^{3/2}+\left( {\sqrt
x -2} \right)^{2}} \right]^{2}- \notag\\
& -\left[ {2\sqrt {2x+2} \left( {x^{2}+1} \right)\left( {x+1} \right)}
\right]^{2}. \notag
\end{align}

After simplifications, we have
\begin{align}
& h_{35} (x)=\left( {\sqrt x -1} \right)^{4}\times \notag\\
& \times \left( {\begin{array}{l}
 8x^{5}+32x^{9/2}+32x^{4}+24x^{7/2}+ \\
 +49x^{3}+82x^{5/2}+49x^{2}+ \\
 +24x^{3/2}+32x+32\sqrt x +8 \\
 \end{array}} \right). \notag
\end{align}

Since $h_{35} (x)>0$, this gives that $k_{35} (x)>0$. Also we have
\[
\beta =\mathop {\sup }\limits_{x\in (0,\infty )} g_{FM_{1} \mathunderscore FM_{2} }
(x)=\lim\limits_{x\to 1} g_{FM_{1} \mathunderscore FM_{2} }
(x)=\textstyle{{63} \over {61}}.
\]

\item \textbf{For }$\bf{D_{FM_{2} }^{52} (P\vert \vert Q)\le \textstyle{{61} \over {60}}D_{Fh}^{51} (P\vert \vert Q)}$\textbf{: }Let us consider $g_{FM_{2} \mathunderscore Fh} (x)={{f}''_{FM_{2} } (x)} \mathord{\left/ {\vphantom {{{f}''_{FM_{2} } (x)} {{f}''_{Fh} (x)}}} \right. \kern-\nulldelimiterspace} {{f}''_{Fh} (x)}$, then we consider
\begin{align}
& g_{FM_{2} \mathunderscore Fh} (x)= \notag\\
& =\frac{\left( {\begin{array}{l}
 \sqrt {2x+2} \left( {45x^{4}-122x^{2}+45} \right)\times \\
 \times \left( {x+1} \right)+64x^{2}\left( {x^{3/2}+1} \right) \\
 \end{array}} \right)}{45\sqrt {2x+2} \left( {x+1} \right)^{3}\left( {x-1}
\right)^{2}} \notag
\end{align}
and
\[
{g}'_{FM_{2} \mathunderscore Fh} (x)=-\frac{32x\times k_{36} (x)}{45\left(
{x-1} \right)^{3}\sqrt {2x+2} \left( {x+1} \right)^{4}}
\]
where $k_{36} (x)=k_{35} (x)>0$. This gives
\[
{g}'_{FM_{2} \mathunderscore Fh} (x)\begin{cases}
 {>0} & {x<1} \\
 {<0} & {x>1} \\
\end{cases}.
\]
Also we have
\[
\beta =\mathop {\sup }\limits_{x\in (0,\infty )} g_{FM_{2} \mathunderscore Fh}
(x)=\lim\limits_{x\to 1} g_{FM_{2} \mathunderscore Fh}
(x)=\textstyle{{61} \over {60}}.
\]

\item \textbf{For} $\bf{D_{Fh}^{51} (P\vert \vert Q)\le \textstyle{{15} \over {14}}D_{FJ}^{49} (P\vert \vert Q)}$\textbf{: }Let us consider \newline $g_{Fh\mathunderscore FJ} (x)={{f}''_{Fh} (x)} \mathord{\left/ {\vphantom {{{f}''_{Fh} (x)} {{f}''_{FJ} (x)}}} \right. \kern-\nulldelimiterspace} {{f}''_{FJ} (x)}$, then we have
\[
g_{Fh\mathunderscore FJ} (x)=\frac{15\left( {\sqrt x +1} \right)^{2}\left(
{x+1} \right)^{2}}{\left( {\begin{array}{l}
 15x^{3}+30x^{5/2}+45x^{2}+ \\
 +44x^{3/2}+45x+30\sqrt x +15 \\
 \end{array}} \right)},
\]
and
\begin{align}
& {g}'_{Fh\mathunderscore FJ} (x)= \notag\\
& =-\frac{120\left( {x-1} \right)\left( {x+1} \right)\sqrt x \left( {3x+4\sqrt
x +3} \right)}{\left( {\begin{array}{l}
 15x^{3}+30x^{5/2}+45x^{2}+ \\
 +44x^{3/2}+45x+30\sqrt x +15 \\
 \end{array}} \right)^{2}}. \notag
\end{align}

This gives
\[
{g}'_{Fh\mathunderscore FJ} (x)\begin{cases}
 {>0,} & {x<1} \\
 {<0,} & {x>1} \\
\end{cases}.
\]
Also, we have
\[
\beta =\mathop {\sup }\limits_{x\in (0,\infty )} g_{Fh\mathunderscore FJ} (x)=\lim\limits_{x\to 1} g_{Fh\mathunderscore FJ} (x)=\textstyle{{15} \over {14}}.
\]

\item \textbf{For }$\bf{D_{Fh}^{51} (P\vert \vert Q)\le \textstyle{{20} \over {19}}D_{FM_{3} }^{50} (P\vert \vert Q)}$\textbf{: }Let us consider $g_{Fh\mathunderscore FM_{3} } (x)={{f}''_{Fh} (x)} \mathord{\left/ {\vphantom {{{f}''_{Fh} (x)} {{f}''_{FM_{3} } (x)}}} \right. \kern-\nulldelimiterspace} {{f}''_{FM_{3} } (x)}$, then we have
\begin{align}
& g_{Fh\mathunderscore FM_{3} } (x)= \notag\\
& =\frac{15\left( {x-1} \right)^{2}\left( {x+1} \right)^{3}\sqrt {2x+2}
}{\left( {\begin{array}{l}
 \sqrt {2x+2} \left( {15x^{4}+2x^{2}+15} \right)\times \\
 \times \left( {x+1} \right)-64x^{2}\left( {x^{3/2}+1} \right) \\
 \end{array}} \right)} \notag
\end{align}
and
\begin{align}
& {g}'_{Fh\mathunderscore FM_{3} } (x)= \notag\\
& =-\frac{960x\left( {x-1} \right)\left( {x+1} \right)^{2}\times k_{37}
(x)}{\left( {\begin{array}{l}
 \sqrt {2x+2} \left[ {-64x^{2}\left( {x^{3/2}+1} \right)} \right.+\sqrt
{2x+2} \times \\
 \left. {\times \left( {x+1} \right)\left( {15x^{4}+2x^{2}+15} \right)}
\right]^{2} \\
 \end{array}} \right)}, \notag
\end{align}
Where $k_{37} (x)=k_{35} (x)>0$. This gives
\[
{g}'_{Fh\mathunderscore FM_{3} } (x)\begin{cases}
 {>0,} & {x<1} \\
 {<0,} & {x>1} \\
\end{cases}.
\]
Also we have
\[
\beta =\mathop {\sup }\limits_{x\in (0,\infty )} g_{Fh\mathunderscore FM_{3} }
(x)=\lim\limits_{x\to 1} g_{Fh\mathunderscore FM_{3} }
(x)=\textstyle{{20} \over {19}}.
\]

\item \textbf{For }$\bf{D_{FM_{3} }^{50} (P\vert \vert Q)\le \textstyle{{19} \over {16}}D_{FK_{0} }^{47} (P\vert \vert Q)}$\textbf{: }Let us consider $g_{FM_{3} \mathunderscore FK_{0} } (x)={{f}''_{FM_{3} } (x)} \mathord{\left/ {\vphantom {{{f}''_{FM_{3} } (x)} {{f}''_{FK_{0} } (x)}}} \right. \kern-\nulldelimiterspace} {{f}''_{FK_{0} } (x)}$, then we have
\begin{align}
& g_{FM_{3} \mathunderscore FK_{0} } (x)= \notag\\
& =\frac{\left( {\begin{array}{l}
 \sqrt {2x+2} \left( {15x^{4}+2x^{2}+15} \right)\times \\
 \times \left( {x+1} \right)-64x^{2}\left( {x^{3/2}+1} \right) \\
 \end{array}} \right)}{3(x+1)(x-1)^{2}\sqrt {2x+2} \left( {5x^{2}+6x+5}
\right)} \notag
\end{align}
and
\[
{g}'_{FM_{3} \mathunderscore FK_{0} } (x)=-\frac{4\times k_{38} (x)}{\left(
{\begin{array}{l}
 3\left( {x-1} \right)^{3}\sqrt x \sqrt {2x+2} \times \\
 \times \left( {x+1} \right)^{2}\left( {5x^{2}+6x+5} \right)^{2} \\
 \end{array}} \right)},
\]
where
\begin{align}
& k_{38} (x)=\sqrt {2x+2} \left( {x+1} \right)^{3}\times \notag\\
& \times \left( {15x^{4}+20x^{3}+58x^{2}+20x+15} \right)- \notag\\
& -8x\left( {\sqrt x +1} \right)\left( {\begin{array}{l}
 10x^{5}+\left( {\sqrt x -1} \right)^{2}\times \\
 \left( {10x^{4}+x^{3}+x+10} \right)+ \\
 +26x^{4}+22x^{3}+12x^{5/2}+ \\
 +22x^{2}+26x+10 \\
 \end{array}} \right). \notag
\end{align}
This gives
\[
{g}'_{FM_{3} \mathunderscore FK_{0} } (x)\begin{cases}
 {>0,} & {x<1} \\
 {<0,} & {x>1} \\
\end{cases},
\]
provided $k_{38} (x)>0$. In order to prove $k_{38} (x)>0$, let us consider
\begin{align}
& h_{38} (x)=\left[ {\sqrt {2x+2} \left( {x+1} \right)^{3}} \right]^{2}\times \notag\\
& \times \left( {15x^{4}+20x^{3}+58x^{2}+20x+15} \right)^{2}- \notag\\
& -\left[ {8x\left( {\sqrt x +1} \right)\left( {\begin{array}{l}
 10x^{5}+\left( {\sqrt x -1} \right)^{2}\times \\
 \times \left( {10x^{4}+x^{3}+x+10} \right)+ \\
 +26x^{4}+22x^{3}+12x^{5/2}+ \\
 +22x^{2}+26x+10 \\
 \end{array}} \right)} \right]^{2}.\notag
\end{align}

After simplifications, we have
\begin{align}
& h_{38} (x)=2\left( {\sqrt x -1} \right)^{4}\times \notag\\
& \times \left( {\begin{array}{l}
 225+4425x+27890x^{2}+135760x^{7/2}+ \\
 +349215x^{5}+900\sqrt x +94290x^{3}+ \\
 +27890x^{11}+436976x^{13/2}+389920x^{15/2}+ \\
 +349215x^{8}+208791x^{9}+13200x^{3/2}+ \\
 +49160x^{5/2}+225x^{13}+265724x^{17/2}+ \\
 +208791x^{4}+453852x^{7}+389920x^{11/2}+ \\
 +4425x^{12}+94290x^{10}+3200x^{23/2}+ \\
 +900x^{25/2}+49160x^{21/2}+ \\
 +135760x^{19/2}+453852x^{6}+265724x^{9/2} \\
 \end{array}} \right). \notag
\end{align}

Since $h_{38} (x)>0$, proving that $k_{38} (x)>0$. Also we have
\[
\beta =\mathop {\sup }\limits_{x\in (0,\infty )} g_{FM_{3} \mathunderscore FK_{0} }
(x)=\lim\limits_{x\to 1} g_{FM_{3} \mathunderscore FK_{0} }
(x)=\textstyle{{19} \over {16}}.
\]

\item \textbf{For }$\bf{D_{FJ}^{49} (P\vert \vert Q)\le \textstyle{7 \over 6}D_{FK_{0} }^{47} (P\vert \vert Q)}$ \textbf{: } Let us consider $ g_{FJ\mathunderscore FK_{0} } (x)={{f}''_{FJ} (x)} \mathord{\left/ {\vphantom {{{f}''_{FJ} (x)} {{f}''_{FK_{0} } (x)}}} \right. \kern-\nulldelimiterspace} {{f}''_{FK_{0} } (x)}$, then we have
\begin{align}
& g_{FJ\mathunderscore FK_{0} } (x)= \notag\\
& =\frac{\left( {\begin{array}{l}
 15x^{3}+30x^{5/2}+45x^{2}+ \\
 +\,44x^{3/2}+45x+30\sqrt x +15 \\
 \end{array}} \right)}{3\left( {5x^{2}+6x+5} \right)\left( {\sqrt x +1}
\right)^{2}} \notag
\end{align}
\noindent and
\begin{align}
& {g}'_{FJ\mathunderscore FK_{0} } (x)= \notag\\
& =-\frac{4\left( {\sqrt x -1} \right)\left( {\begin{array}{l}
 15x^{3}+30x^{5/2}+65x^{2}+ \\
 +68x^{3/2}+\,65x+30\sqrt x +15 \\
 \end{array}} \right)}{3\left( {5x^{2}+6x+5} \right)^{2}\left( {\sqrt x +1}
\right)^{3}}. \notag
\end{align}
This gives
\[
{g}'_{FJ\mathunderscore FK_{0} } (x)\begin{cases}
 {>0,} & {x<1} \\
 {<0,} & {x>1} \\
\end{cases}.
\]
Also, we have
\[
\beta =\mathop {\sup }\limits_{x\in (0,\infty )} g_{FJ\mathunderscore FK_{0} }
(x)=\lim\limits_{x\to 1} g_{FJ\mathunderscore FK_{0} } (x)=\textstyle{7
\over 6}.
\]

\item \textbf{For }$\bf{D_{FK_{0} }^{47} (P\vert \vert Q)\le D_{FT}^{48} (P\vert \vert Q)}$\textbf{: }It holds in view of pyramid.

\bigskip
\item \textbf{For }$\bf{D_{FT}^{48} (P\vert \vert Q)\le 2D_{F\Psi }^{46} (P\vert \vert Q)}$\textbf{: }Let us consider \newline $g_{FK_{0} \mathunderscore FT} (x)={{f}''_{FK_{0} } (x)} \mathord{\left/ {\vphantom {{{f}''_{FK_{0} } (x)} {{f}''_{FT} (x)}}} \right. \kern-\nulldelimiterspace} {{f}''_{FT} (x)}$, then we have
\begin{align}
& g_{FT\mathunderscore F\Psi } (x)= \notag\\
& =\frac{\left( {\begin{array}{l}
 15x^{4}+30x^{7/2}+60x^{3}+58x^{5/2}+ \\
 +\,58x^{2}+58x^{3/2}+60x+30\sqrt x +15 \\
 \end{array}} \right)}{\left( {x+1} \right)\left( {\begin{array}{l}
 15x^{3}+14x^{5/2}+13x^{2}+ \\
 +\,12x^{3/2}+13x+14\sqrt x +15 \\
 \end{array}} \right)} \notag
\end{align}
and
\begin{align}
& {g}'_{FT\mathunderscore F\Psi } (x)= \notag\\
& =-\frac{8\left( {x-1} \right)}{\sqrt x \left( {x+1} \right)^{2}\left(
{\begin{array}{l}
 15x^{3}+14x^{5/2}+ \\
 +13x^{2}+12x^{3/2}+ \\
 +13x+14\sqrt x +15 \\
 \end{array}} \right)^{2}}\times \notag\\
& \times \left( {\begin{array}{l}
 15x^{6}+60x^{11/2}+105x^{5}+184x^{9/2}+ \\
 +265x^{4}+380x^{7/2}+382x^{3}+380x^{5/2}+ \\
 +265x^{2}+184x^{3/2}+105x+60\sqrt x +15 \\
 \end{array}} \right). \notag
\end{align}

This gives
\[
{g}'_{FT\mathunderscore F\Psi } (x)\begin{cases}
 {>0,} & {x<1} \\
 {<0,} & {x>1} \\
\end{cases}.
\]
Also,we have
\[
\beta =\mathop {\sup }\limits_{x\in (0,\infty )} g_{FT\mathunderscore F\Psi }
(x)=\lim\limits_{x\to 1} g_{FT\mathunderscore F\Psi } (x)=2.
\]

\item \textbf{For }$\bf{D_{Jh}^{17} (P\vert \vert Q)\le 4D_{JM_{3} }^{16} (P\vert \vert Q)}$\textbf{: }Let us consider \newline $g_{Jh\mathunderscore JM_{3} } (x)={{f}''_{Jh} (x)} \mathord{\left/ {\vphantom {{{f}''_{Jh} (x)} {{f}''_{JM_{3} } (x)}}} \right. \kern-\nulldelimiterspace} {{f}''_{JM_{3} } (x)}$, then we have
\[
g_{Jh\mathunderscore JM_{3} } (x)=\frac{\left( {\sqrt x -1}
\right)^{2}\left( {x+1} \right)\sqrt {2x+2} }{\left( {\begin{array}{l}
 \sqrt {2x+2} \left( {x+1} \right)^{2}- \\
 -4\sqrt x \left( {x^{3/2}+1} \right) \\
 \end{array}} \right)}
\]
and
\[
{g}'_{Jh\mathunderscore JM_{3} } (x)=\frac{2\left( {1-\sqrt x }
\right)\left( {x+1} \right)\times k_{39} (x)}{\left( {\begin{array}{l}
 \sqrt x \sqrt {2x+2} \left[ {\sqrt {2x+2} \times } \right. \\
 \left. {\times \left( {x+1} \right)^{2}-4\sqrt x \left( {x^{3/2}+1}
\right)} \right]^{2} \\
 \end{array}} \right)},
\]
where $k_{39} (x)=k_{21} (x)>0$. This gives
\[
{g}'_{Jh\mathunderscore JM_{3} } (x)\begin{cases}
 {>0,} & {x<1} \\
 {<0,} & {x>1} \\
\end{cases}.
\]
Also, we have
\[
\beta =\mathop {\sup }\limits_{x\in (0,\infty )} g_{Jh\mathunderscore JM_{3} }
(x)=\lim\limits_{x\to 1} g_{Jh\mathunderscore JM_{3} } (x)=4.
\]

\item \textbf{For }$\bf{D_{M_{1} I}^{2} (P\vert \vert Q)\le D_{JM_{3} }^{16} (P\vert \vert Q)}$\textbf{: }Let us consider $g_{M_{1} I\mathunderscore JM_{3} } (x)={{f}''_{M_{1} I} (x)} \mathord{\left/ {\vphantom {{{f}''_{M_{1} I} (x)} {{f}''_{JM_{3} } (x)}}} \right. \kern-\nulldelimiterspace} {{f}''_{JM_{3} } (x)}$, then we have
\begin{align}
& g_{M_{1} I\mathunderscore JM_{3} } (x)= \notag\\
& =\frac{4\sqrt x \left[ {\sqrt {2x+2} \left( {x-\sqrt x +1} \right)-\left(
{x^{3/2}+1} \right)} \right]}{\sqrt {2x+2} \left( {x+1} \right)^{2}-4\sqrt x
\left( {x^{3/2}+1} \right)} \notag
\end{align}
and
\begin{align}
& {g}'_{M_{1} I\mathunderscore JM_{3} } (x)=-\frac{2\sqrt x \left( {\sqrt x
-1} \right)^{3}}{\left( {x+1} \right)}\times \notag\\
& \times \frac{k_{40} (x)}{\left[ {\sqrt {2x+2} \left( {x+1}
\right)^{2}-4\sqrt x \left( {x^{3/2}+1} \right)} \right]^{2}}, \notag
\end{align}
where
\begin{align}
& k_{40} (x)=\sqrt {2x+2} \left( {\sqrt x +1} \right)\left( {x+1}
\right)^{2}- \notag\\
& -\left( {x^{3}+3x^{5/2}+8x^{3/2}+3\sqrt x +1} \right). \notag
\end{align}

This gives
\[
{g}'_{M_{1} I\mathunderscore JM_{3} } (x)\begin{cases}
 {>0,} & {x<1} \\
 {<0,} & {x>1} \\
\end{cases},
\]
provided $k_{40} (x)>0$. In order to prove $k_{40} (x)>0$, let us consider
\begin{align}
& h_{40} (x)=\left[ {\sqrt {2x+2} \left( {\sqrt x +1} \right)\left( {x+1}
\right)^{2}} \right]^{2}- \notag\\
& -\left( {x^{3}+3x^{5/2}+8x^{3/2}+3\sqrt x +1} \right)^{2}. \notag
\end{align}

After simplifications, we have
\begin{align}
& h_{40} (x)=\left( {\sqrt x -1} \right)^{4}\times \notag\\
& \times \left( {\begin{array}{l}
 x^{5}+4x^{9/2}+8x^{4}+20x^{7/2}+ \\
 +24x^{3}+34x^{5/2}+24x^{2}+ \\
 +20x^{3/2}+8x+8\sqrt x +1 \\
 \end{array}} \right). \notag
\end{align}

Since $h_{40} (x)>0$, proving that $k_{40} (x)>0$. Also we have
\[
\beta =\mathop {\sup }\limits_{x\in (0,\infty )} g_{M_{1} I\mathunderscore JM_{3} }
(x)=\lim\limits_{x\to 1} g_{M_{1} I\mathunderscore JM_{3} } (x)=1.
\]

\item \textbf{For }$\bf{D_{TM_{3} }^{23} (P\vert \vert Q)\le 9D_{JM_{3} }^{16} (P\vert \vert Q)}$\textbf{: }Let us consider $g_{TM_{3} \mathunderscore JM_{3} } (x)={{f}''_{TM_{3} } (x)} \mathord{\left/ {\vphantom {{{f}''_{TM_{3} } (x)} {{f}''_{JM_{3} } (x)}}} \right. \kern-\nulldelimiterspace} {{f}''_{JM_{3} } (x)}$, then we have
\begin{align}
& g_{TM_{3} \mathunderscore JM_{3} } (x)= \notag\\
& =\frac{2\left[ {\left( {x^{2}+1} \right)\sqrt {2x+2} -2\sqrt x \left(
{x^{3/2}+1} \right)} \right]}{\sqrt {2x+2} \left( {x+1} \right)^{2}-4\sqrt x
\left( {x^{3/2}+1} \right)} \notag
\end{align}
and
\begin{align}
& {g}'_{TM_{3} \mathunderscore JM_{3} } (x)= \notag\\
& =-\frac{2\sqrt {2x+2} \left( {x-1} \right)\times k_{41} (x)}{\left(
{\begin{array}{l}
 \left( {x+1} \right)\sqrt x \left[ {\sqrt {2x+2} \times } \right. \\
 \left. {\times \left( {x+1} \right)^{2}-4\sqrt x \left( {x^{3/2}+1}
\right)} \right]^{2} \\
 \end{array}} \right)},\notag
\end{align}
where
\begin{align}
& k_{41} (x)=\left( {x^{7/2}+3x^{5/2}+4x^{2}+4x^{3/2}+3x+1} \right)- \notag\\
& -2\sqrt x \sqrt {2x+2} \left( {x+1} \right)^{2}.\notag
\end{align}

This gives
\[
{g}'_{TM_{3} \mathunderscore JM_{3} } (x)\begin{cases}
 {>0,} & {x<1} \\
 {<0,} & {x>1} \\
\end{cases},
\]
provided $k_{41} (x)>0$. In order to prove $k_{41} (x)>0$, let us consider
\begin{align}
& h_{41} (x)=\left( {\begin{array}{l}
 x^{7/2}+3x^{5/2}+4x^{2}+ \\
 +4x^{3/2}+3x+1 \\
 \end{array}} \right)^{2}- \notag\\
& -\left[ {2\sqrt x \sqrt {2x+2} \left( {x+1} \right)^{2}} \right]^{2}.\notag
\end{align}

After simplifications, we have
\begin{align}
& h_{41} (x)=\left( {\sqrt x -1} \right)^{4}\times \notag\\
& \times \left( {\begin{array}{l}
 x^{5}+4x^{9/2}+8x^{4}+20x^{7/2}+24x^{3}+ \\
 +34x^{5/2}+24x^{2}+20x^{3/2}+8x+4\sqrt x +1 \\
 \end{array}} \right). \notag
\end{align}

Since $h_{41} (x)>0$, proving that $k_{41} (x)>0$. Also we have
\[
\beta =\mathop {\sup }\limits_{x\in (0,\infty )} g_{TM_{3} \mathunderscore JM_{3} }
(x)=\lim\limits_{x\to 1} g_{TM_{3} \mathunderscore JM_{3} } (x)=9.
\]

\item \textbf{For }$\bf{D_{JM_{3} }^{16} (P\vert \vert Q)\le \textstyle{1 \over {24}}D_{F\Psi }^{46} (P\vert \vert Q)}$\textbf{: }Let us consider $g_{JM_{3} \mathunderscore F\Psi } (x)={{f}''_{JM_{3} } (x)} \mathord{\left/ {\vphantom {{{f}''_{JM_{3} } (x)} {{f}''_{F\Psi } (x)}}} \right. \kern-\nulldelimiterspace} {{f}''_{F\Psi } (x)}$, then we have
\begin{align}
& g_{JM_{3} \mathunderscore F\Psi } (x)= \notag\\
& =\frac{16x^{3/2}\left[ {\left( {x+1} \right)^{2}\sqrt {2x+2} -4\sqrt x
\left( {x^{3/2}+1} \right)} \right]}{\left( {\begin{array}{l}
 \sqrt {2x+2} \left( {x+1} \right)\left( {\sqrt x -1} \right)^{2}\times
\left( {15x^{3}+15} \right. \\
 \left. {+14x^{5/2}+13x^{2}+12x^{3/2}+13x+14\sqrt x } \right) \\
 \end{array}} \right)} \notag
\end{align}
and
\begin{align}
& {g}'_{JM_{3} \mathunderscore F\Psi } (x)=-\frac{8\sqrt x }{\sqrt {2x+2}
\left( {x+1} \right)^{2}\left( {\sqrt x -1} \right)^{3}}\times \notag\\
& \times \frac{k_{42} (x)}{\left( {\begin{array}{l}
 15x^{3}+15+14x^{5/2}+13x^{2}+ \\
 +12x^{3/2}+13x+14\sqrt x \\
 \end{array}} \right)}, \notag
\end{align}
where
\begin{align}
& k_{42} (x)=\sqrt {2x+2} \left( {x+1} \right)^{2}\times \notag\\
& \times \left( {\begin{array}{l}
 45x^{9/2}+13x^{4}+88x^{7/2}+24x^{3}+22x^{5/2}+ \\
 +22x^{2}+24x^{3/2}+88x+13\sqrt x +45 \\
 \end{array}} \right)- \notag\\
& -12\sqrt x \left( {\begin{array}{l}
 20x^{6}+4x^{11/2}+9x^{5}+44x^{9/2}+ \\
 +12x^{4}+32x^{7/2}+14x^{3}+32x^{5/2}+ \\
 +12x^{2}+44x^{3/2}+9x+4\sqrt x +20 \\
 \end{array}} \right). \notag
\end{align}

This gives
\[
{g}'_{JM_{3} \mathunderscore F\Psi } (x)\begin{cases}
 {>0,} & {x<1} \\
 {<0,} & {x>1} \\
\end{cases},
\]
provided $k_{42} (x)>0$. In order to prove $k_{42} (x)>0$, let us consider
\begin{align}
& h_{42} (x)=\left[ {\sqrt {2x+2} \left( {x+1} \right)^{2}} \right]^{2}\times \notag\\
& \times \left( {\begin{array}{l}
 45x^{9/2}+13x^{4}+88x^{7/2}+ \\
 +24x^{3}+22x^{5/2}+22x^{2}+ \\
 +24x^{3/2}+88x+13\sqrt x +45 \\
 \end{array}} \right)^{2}- \notag\\
& - 144\, x \left( {\begin{array}{l}
 20x^{6}+4x^{11/2}+9x^{5}+44x^{9/2}+ \\
 +12x^{4}+32x^{7/2}+14x^{3}+32x^{5/2}+ \\
 +12x^{2}+44x^{3/2}+9x+4\sqrt x +20 \\
 \end{array}} \right)^{2}. \notag
\end{align}

After simplifications, we have
\[
h_{42} (x)=2\left( {\sqrt x -1} \right)^{4}\times v(x),
\]
where
\begin{align}
& v(x)= \notag\\
& =\left( {\begin{array}{l}
 2025x^{12}+9270x^{23/2}+14344x^{11}+ \\
 +8634x^{21/2}+27498x^{10}+ \\
 +15106x^{19/2}+9952x^{9}-2034x^{17/2}- \\
 -9001x^{8}-9380x^{15/2}-12776x^{7}+ \\
 +1444x^{13/2}-36436x^{6}+1444x^{11/2}- \\
 -12776x^{5}-9380x^{9/2}-9001x^{4}- \\
 -2034x^{7/2}+9952x^{3}+15106x^{5/2}+ \\
 +27498x^{2}+8634x^{3/2}+ \\
 +14344x+9270\sqrt x +2025 \\
 \end{array}} \right). \notag
\end{align}

Now we shall show that $v(x)>0$. Let us consider
\begin{align}
& m(t)=v(t^{2})= \notag\\
& =\left( {\begin{array}{l}
 2025t^{24}+9270t^{23}+14344t^{22}+ \\
 +8634t^{21}+27498t^{20}+15106t^{19}+ \\
 +9952t^{18}-2034t^{17}-9001t^{16}- \\
 -9380t^{15}-12776t^{14}+1444t^{13}- \\
 -36436t^{12}+1444t^{11}-12776t^{10}- \\
 -9380t^{9}-9001t^{8}-2034t^{7}+ \\
 +9952t^{6}+15106t^{5}+27498t^{4}+ \\
 +8634t^{3}+14344t^{2}+9270t+2025 \\
 \end{array}} \right). \notag
\end{align}

The polynomial equation $m(t)=0$ of 24$^{\mathrm{th}}$ degree admits 24
solutions. Out of them 22 are complex (not written here) and two of them are
real given by
\begin{center}
$-1.125443752$ and $-0.8885384079$.
\end{center}
\noindent Both these solutions are negative. Since we are working with $t>0$, this means that there are no real positive solutions of the equation $m(t)=0$. Thus we conclude that either $m(t)>0$ or $m(t)<0$, for all $t>0$. In order to check it is sufficient to see for any particular value of $m(t)$, for example when $t=1$. This gives $m(1)=73728$, hereby proving that $m(t)>0$ for all $t>0$, consequently, $v(x)>0$, for all $x>0$, proving that $h_{42} (x)>0$, $\forall x>0,\,x\ne 1$. Since $h_{42} (x)>0$, proving that $k_{42} (x)>0$. Also we have
\[
\beta =\mathop {\sup }\limits_{x\in (0,\infty )} g_{JM_{3} \mathunderscore F\Psi }
(x)=\lim\limits_{x\to 1} g_{JM_{3} \mathunderscore F\Psi }
(x)= \textstyle{1 \over {24}}.
\]
\end{enumerate}

Parts 1-55 refers to the proof of the inequalities given in (\ref{eq7}) and the parts 56-59 give the proof of (\ref{eq8}). Combining the parts 1-59 we get the proof of the Theorem 2.1.
\end{proof}

\subsection{Remark}
\begin{enumerate}
\item Theorem 2.1 connects 54 members out of 55 appearing in the pyramid. Since some them are equals by multiplicative constants, the Theorem 2.1 contains 47 different measures. In this way we can make a sequential inequality connecting 34 divergence measures.

\bigskip
\item From the inequalities given in (\ref{eq7}) and (\ref{eq8}), it is interesting to observe
that all the measures remain between $D_{I\Delta }^1 $ and $D_{F\Psi }^{46}
$, i.e., in between the first members of first and last line of the pyramid.

\bigskip
\item The last members of each line (corners members) of the pyramid are connected in an
increasing order, i.e.,
\begin{align}
D_{I\Delta }^1 & \le \textstyle{8 \over 9}D_{M_1 \Delta }^3 \le \textstyle{8
\over {11}}D_{M_2 \Delta }^6 \le \textstyle{2 \over 3}D_{h\Delta }^{10} \le\notag\\
& \le \textstyle{8 \over {15}}D_{M_3 \Delta }^{15} \le \textstyle{1 \over 2}D_{J\Delta }^{21} \le \textstyle{1 \over
3}D_{T\Delta }^{28} \le\notag\\
\label{eq13}
& \le \textstyle{1 \over 3}D_{K_0 \Delta }^{36} \le
\textstyle{1 \over 6}D_{\Psi \Delta }^{45} \le \textstyle{1 \over
9}D_{F\Delta }^{55}.
\end{align}
\end{enumerate}

\section{Equivalent Inequalities}

As a consequence of Theorem 2.1, the sequences of inequalities appearing in (\ref{eq7}) and (\ref{eq8}) can be written in an individual form. This means that the 59 results proving the Theorem 2.1 can be written in an equivalent form. This we have done below in two groups. The first group is with four measures in each case and the second group is with three measures.

\subsection*{Group 1}
\begin{enumerate}
\item $\hspace{15pt} 80 M_{1}+16 M_{3} \le \Delta +20h ;$
\item $\hspace{15pt} \Delta +32 h \le 4T+128 M_{1};$
\item $\hspace{15pt} 6\Delta +256M_{2} \le 192I+3K_{0} ;$
\item $\hspace{15pt} 288M_{1} +224M_{2} \le 168I+9J;$
\item $\hspace{15pt} 12M_{1} +20M_{2} \le 5I+3T;$
\item $\hspace{15pt} 9J+256M_{2} \le 192I+72T;$
\item $\hspace{15pt} 10T+32M_{2} \le 3J+10h;$
\item $\hspace{15pt} 72I+128T\le 9K_{0} +512M_{3} ;$
\item $\hspace{15pt} 8I+4J\le K_{0} +32h;$
\item $\hspace{15pt} 4\Delta +8K_{0} \le \Psi +64h;$
\item $\hspace{15pt} 16I+10K_{0} \le \Psi +10J;$
\item $\hspace{15pt} 26K_{0} +192 M_{1} \le 3\Psi + 832 M_{3};$
\item $\hspace{15pt} 32M_{1} +32M_{3} \le J+8T;$
\item $\hspace{15pt} 4\Delta +3\Psi \le F+6K_{0} ;$
\item $\hspace{15pt} 48I+8\Psi \le 3F+128T;$
\item $\hspace{15pt} 48J+\Psi \le 2F+1536M_{3}.$
\end{enumerate}

\subsection*{Group 2}

\renewcommand{\arraystretch}{1.5}
\noindent\begin{tabular}{llll}
1. & $ I\le \frac{\Delta +128M_{1} }{36};$         &  21.  & $ h\le \frac{F+1280M_{2} }{976};$        \\
2. & $ I\le \frac{4\Delta +K_{0} }{24};$           &  22.  & $ h\ge \frac{3I+16M_{3} }{7};$           \\
3. & $ I\le \frac{20\Delta +\Psi }{96};$           &  23.  & $ M_{3} \le \frac{2\Delta +15J}{512};$   \\
4. & $ I\le \frac{32\Delta +F}{144};$              &  24.  & $ M_{3} \le \frac{3J+8h}{128};$          \\
5. & $ M_{1} \le \frac{\Delta +24M_{2} }{88};$     &  25.  & $ M_{3} \le \frac{T+3h}{16};$            \\
6. & $ M_{1} \le \frac{120I+K_{0} }{512};$         &  26.  & $ M_{3} \le \frac{K_{0} +24h}{128};$     \\
7. & $ M_{1} \le \frac{624I+\Psi }{2560};$         &  27.  & $ M_{3} \le \frac{F+304h}{1280};$        \\
8. & $ M_{1} \le \frac{1008I+F}{4096};$            &  28.  & $ M_{3} \le \frac{\Psi +176h}{768}$      \\
9. & $ M_{1} \ge \frac{3I+2M_{2} }{18};$           &  29.  & $ J\le \frac{K_{0} +16h}{3};$            \\
10.& $ M_{2} \le \frac{\Delta +44h}{64};$          &  30.  & $ J\le \frac{2 \Delta +16T}{3};$         \\
11.& $ M_{2} \le \frac{T+26M_{1} }{10};$           &  31.  & $ J\le \frac{\Psi +128h}{18};$           \\
12.& $ M_{2} \le \frac{K_{0} +208M_{1} }{80};$     &  32.  & $ J\le \frac{F+224h}{30};$               \\
13.& $ M_{2} \le \frac{\Psi +1184M_{1} }{416};$    &  33.  & $ J\ge \frac{120T+256M_{2} }{39};$       \\
14.& $ M_{2} \le \frac{F+1952M_{1} }{672};$        &  34.  & $ J\ge \frac{8T+256M_{3} }{9}.$          \\
15.& $ M_{2} \ge \frac{3I+9h}{16};$                &  35.  & $ K_{0} \le \frac{6J+\Psi }{8};$         \\
16.& $ h\le \frac{\Delta +64M_{3} }{20};$          &  36.  & $ K_{0} \le \frac{12J+F}{14};$           \\
17.& $ h\le \frac{3J+128M_{2} }{120};$             &  39.  & $ K_{0} \le \frac{3\Psi +512M_{3} }{22};$\\
18.& $ h\le \frac{T+16M_{2} }{13};$                &  38.  & $ K_{0} \le \frac{3F+1024M_{3} }{38};$   \\
19.& $ h\le \frac{K_{0} +128M_{2} }{104};$         &  39.  & $ \Psi \le \frac{F+16T}{2}.$             \\
20.& $ h\le \frac{\Psi +768M_{2} }{592};$          &
\end{tabular}

\bigskip
Direct relations of the inequalities given in Groups 1 and 2 to the inequalities given in (\ref{eq5}) shall be dealt elsewhere.

\section{Reverse Inequalities}

In view of Theorem 2.1, we shall derive some inequalities in reverse order for the last three lines of the \textbf{pyramid}.

\begin{enumerate}
\item Combining the inequalities given in the 10$^{\mathrm{th}}$ line of the pyramid and the one given in (\ref{eq7}) having the measure $F(P\vert \vert Q)$, we have the following extended inequality
\begin{align}
& D_{F\Psi }^{46} \le D_{FK_{0} }^{47} \le D_{FT}^{48} \le D_{FJ}^{49} \le
D_{FM_{3} }^{50} \le D_{Fh}^{51} \le\notag\\
& \le D_{FM_{2} }^{52} \le D_{FM_{1} }^{53} \le D_{FI}^{54} \le D_{F\Delta
}^{55} \le \textstyle{9 \over 8}D_{FI}^{54} \le\notag\\
& \le \textstyle{8 \over 7}D_{FM_{1} }^{53} \le \textstyle{{72} \over
{61}}D_{FM_{2} }^{52} \le \textstyle{6 \over 5}D_{Fh}^{51} \le \left\{
{\begin{array}{l}
 \textstyle{9 \over 7}D_{FJ}^{49} \\
 \textstyle{{24} \over {19}}D_{FM_{3} }^{50} \\
 \end{array}} \right\}\le \notag\\
\label{eq14}
& \le \textstyle{3 \over 2}D_{FK_{0} }^{47} \le \textstyle{3 \over
2}D_{FT}^{48} \le 3D_{F\Psi }^{46}.
\end{align}

According to inequalities given in pyramid we have $D_{FJ}^{49} \le
D_{FM_{3} }^{50} $ but according to our approach we don't have \textit{reverse relation} among the
measures $D_{FJ}^{49} $ and $D_{FM_{3} }^{50} $. Also $D_{Fh}^{51} $ is
related to $D_{FJ}^{49} $ and $D_{FM_{3} }^{50} $with different
multiplicative constants. We call the expression (3.20) as \textit{reverse inequalities}

\bigskip
\item Combining the inequalities given in the 9$^{\mathrm{th}}$ line of the pyramid and the one given in (\ref{eq7}) having the measure $\Psi (P\vert \vert Q)$, we have the following extended inequality
\[
D_{\Psi K_{0} }^{37} \le D_{\Psi T}^{38} \le D_{\Psi J}^{39} \le D_{\Psi
M_{3} }^{40} \le D_{\Psi h}^{41} \le D_{\Psi M_{2} }^{42} \le
\]
\[
\le D_{\Psi M_{1} }^{43} \le D_{\Psi I}^{44} \le D_{\Psi \Delta }^{45} \le
\textstyle{6 \over 5}D_{\Psi I}^{44} \le \textstyle{{16} \over {15}}D_{\Psi
M_{1} }^{43} \le
\]
\[
\le \textstyle{{48} \over {37}}D_{\Psi M_{2} }^{42} \le \textstyle{4 \over
3}D_{\Psi h}^{41} \le \left\{ {\begin{array}{l}
 \textstyle{3 \over 2}D_{\Psi J}^{39} \\
 \textstyle{{16} \over {11}}D_{\Psi M_{3} }^{40} \\
 \end{array}} \right\}\le
\]
\begin{equation}
\label{eq15}
\le 2D_{\Psi K_{0} }^{37} \le 2D_{\Psi T}^{38}.
\end{equation}
According to inequalities given in pyramid we have $D_{\Psi J}^{39} \le
D_{\Psi M_{3} }^{40} $ but according to our approach we don't have reverse
relation among the measures $D_{\Psi J} $ and $D_{\Psi M_{3} } $. Also
$D_{\Psi h}^{41} $ is related to $D_{\Psi J}^{39} $ and $D_{\Psi M_{3}
}^{40} $ with different multiplicative constants. Again we call the
expression (\ref{eq15}) as \textit{reverse inequalities}

\bigskip
\item Combining the inequalities given in the 8$^{\mathrm{th}}$ line of the pyramid and the one given in (\ref{eq7}) having the measure $\Psi (P\vert \vert Q)$, we have the following extended inequality
\begin{align}
& D_{K_{0} T}^{29} \le D_{K_{0} J}^{30} \le D_{K_{0} M_{3} }^{31} \le D_{K_{0}
h}^{32} \le \notag\\
& \hspace{5pt} \le D_{K_{0} M_{2} }^{33} \le D_{K_{0} M_{1} }^{34} \le D_{K_{0} I}^{35} \le D_{K_{0} \Delta }^{36} \le \notag\\
& \hspace{10pt} \le \textstyle{3 \over 2}D_{K_{0} I}^{35} \le \textstyle{8 \over 5}D_{K_{0}
M_{1} }^{34} \le \textstyle{{24} \over {13}}D_{K_{0} M_{2} }^{33}\le \notag\\
\label{eq16}
& \hspace{15pt} \le 2D_{K_{0} h}^{32}
\le \left\{ {\begin{array}{l}
 3D_{K_{0} J}^{30} \\
 \textstyle{8 \over 3}D_{K_{0} M_{3} }^{31} \\
 \end{array}} \right\}.
\end{align}

We observe that the measure $D_{K_{0} T}^{29} $ don't appears in the reverse
side. Moreover, it don't appears in Theorem 2.1 too.
\end{enumerate}

Similarly we can write \textit{reverse inequalities} for the other lines of the pyramid.

\end{multicols}

\begin{thebibliography}{99}

\bibitem{bur} J. BURBEA,
and C.R. RAO, On the convexity of some divergence measures based on entropy functions, \textit{IEEE Trans. on Inform. Theory}, \textbf{IT-28}(1982), 489-495.

\bibitem{csi} I. CSISZ\'{A}R, Information type measures of differences of probability distribution and indirect observations, \textit{Studia Math. Hungarica}, \textbf{2}(1967), 299-318.

\bibitem{jas} K.C. JAIN AND A. SRIVASTAVA, On Symmetric Information Divergence Measures of Csiszar's f - Divergence Class, \textit{Journal of Applied Mathematics, Statistics and Informatics (JAMSI)}, \textbf{3(1)}(2007), 85-102.

\bibitem{jef} H. JEFFREYS, An invariant form for the prior probability in estimation problems, \textit{Proc. Roy. Soc. Lon., Ser. A}, \textbf{186}(1946), 453-461.

\bibitem{kul} S. KULLBACK and R.A. LEIBLER, On information and sufficiency, \textit{Ann. Math. Statist.}, \textbf{22}(1951), 79-86.

\bibitem{kuj} P. KUMAR and A. JOHNSON, On A Symmetric Divergence Measure and Information Inequalities,
\textit{Journal of Inequalities in Pure and Appl. Math.}, \textbf{6(3)}(2005), 1-13.

\bibitem{tan1} I.J. TANEJA, New developments in generalized information measures, Chapter in: \textit{Advances in Imaging and Electron Physics}, Ed. P.W. Hawkes, \textbf{91}(1995), 37-135.

\bibitem{tan3} I.J. TANEJA, On symmetric and non-symmetric divergence measures and their generalizations, Chapter in: \textit{Advances in Imaging and Electron Physics}, Ed. P.W. Hawkes, \textbf{138}(2005), 177-250.

\bibitem{tan4} I.J. TANEJA, Refinement Inequalities Among Symmetric Divergence Measures, \textit{The Australian Journal of Mathematical Analysis and Applications}, \textbf{2(1)}(2005), Art. 8, pp. 1-23.

\bibitem{tan2} 	I.J. TANEJA, Refinement of Inequalities among Means,\textit{ Journal of Combinatorics, Information and Systems Sciences},  \textbf{31}(2006), 357-378.

\bibitem{tak} I.J. TANEJA and P. KUMAR, Relative Information of Type s, Csiszar's f-Divergence, and Information Inequalities,  \textit{Information Sciences}, \textbf{166}(2004), 105-125.

\end{thebibliography}
\end{document}